\documentclass[a4paper,oneside]{book}
\usepackage[utf8]{inputenc}
\usepackage[T1]{fontenc}
\usepackage{graphicx}
\usepackage{longtable}
\usepackage{wrapfig}
\usepackage{rotating}
\usepackage[normalem]{ulem}
\usepackage{amsmath}
\usepackage{amssymb}
\usepackage{capt-of}
\usepackage{hyperref}
\usepackage[utf8]{inputenc}
\usepackage[T1]{fontenc}

\usepackage{marginnote}
\usepackage{sidenotes}
\usepackage{caption}
\pdfoutput=1

\usepackage{layout}
\usepackage[left=1cm,right=7cm,top=2.5cm,bottom=1.5cm]{geometry}
\newcommand{\titlelayout}{
  \newgeometry{left=1in,right=1in,top=1in,bottom=1in}}

\IfFileExists{mathpazo.sty}{\RequirePackage[osf,sc]{mathpazo}}{}
\IfFileExists{helvet.sty}{\RequirePackage[scaled=0.90]{helvet}}{}
\IfFileExists{beramono.sty}{\RequirePackage[scaled=0.85]{beramono}}{}
\RequirePackage[T1]{fontenc}
\RequirePackage{textcomp}

\usepackage{xcolor}

\usepackage[english]{babel}
\usepackage{wrapfig}
\let\proof\relax
\usepackage{amsthm}
\usepackage{hyperref}
\usepackage{mathtools}
\usepackage{amssymb}

\usepackage{amsfonts}

\usepackage{physics}
\usepackage{float}
\usepackage{enumitem}
\setlist{nosep}

\usepackage{algcompatible}
\usepackage{subfig}
\usepackage{bm} 
\usepackage{qcircuit}
\usepackage{adjustbox}
\usepackage{multirow}

\usepackage{tikz, graphics} 
\usetikzlibrary{arrows.meta, 
  bending,     
  patterns,     
  shapes,
  positioning,
  decorations.pathreplacing,
  calc,
  matrix,
}
\tikzset{
  vertex/.style={circle, draw, thick, minimum size=3mm, inner sep=1.5pt},
  edge/.style={thick},
  wire/.style={thick},
  cwire/.style={double, thick}, 
  connector/.style={thick},
  dot/.style={circle, fill, inner sep=1.5pt },
  brace/.style={decorate, decoration={brace, amplitude=6pt}, thick },
  layer/.style={row sep=0.8cm,column sep=1.8cm},
  open/.style={circle,draw,inner sep=1.2pt},
  phase/.style={draw,thick,minimum size=5mm},
  label/.style={font=\Large },
  gate/.style={draw, minimum width=8mm, minimum height=6mm},
  measure/.style={draw, circle, minimum size=6mm},
  added edge/.style={thick,gray},
  subdiv/.style={circle,draw,inner sep=1.2pt},
  subdiv gray/.style={circle,draw,gray,inner sep=1.2pt},
  edge/.style={thick},
  redv/.style={vertex,fill=red!70},
  greenv/.style={vertex,fill=green!60!black},
  whitev/.style={vertex}
}

\usepackage [
n,
advantage,
operators,
sets,
adversary,
landau,
probability, 
notions,
logic,
ff,
mm,
primitives,
events,
complexity,
asymptotics,
keys
] {cryptocode}

\usepackage[strings]{underscore}
\usepackage{bm}
\usepackage{booktabs} 

\newcommand{\cmp}[1]{\complclass{#1}}
\newcommand{\abort}{\mathsf{Abort}}
\newcommand{\accept}{\mathsf{Accept}}

\newcommand{\BQP}{\mathsf{BQP}}

\newcommand{\X}{\mathsf{X}}
\newcommand{\Y}{\mathsf{Y}}
\newcommand{\Z}{\mathsf{Z}}
\newcommand{\Id}{\mathsf{I}}
\newcommand{\Ha}{\mathsf{H}}
\newcommand{\CZ}{\mathsf{CZ}}

\newcommand{\CNOT}{\mathsf{CNOT}}

\newcommand{\T}{\mathsf{T}}

\newcommand{\TP}{C\cup T}
\newcommand{\Deco}{\cptp D_{O, \mathfrak{C}}}
\newcommand{\Cdev}[1]{\tilde{\cptp C}_{T, #1}}
\newcommand{\Redo}{\mathsf{Redo}}
\newcommand{\ok}{\mathsf{Ok}}

\newcommand{\cptp}[1]{\mathsf{#1}} 
\newcommand{\pd}[1]{\mathcal{#1}}  
\newcommand{\sch}[1]{\bm{#1}} 

\newcommand{\Pois}{\mathrm{Poisson}}

\newcommand{\grp}[1]{\mathrm{#1}} 
\usepackage{dsfont}
\newcommand{\zwt}{\mathrm{zwt}}

\newcommand{\one}{\mathds{1}}
\newcommand{\I}{\cptp{I}}

\newcommand{\mpar}[1]{}

\usepackage{todonotes}
\todostyle{holl}{color=green!30, size=\footnotesize}
\todostyle{holli}{color=green!30, inline, size=\footnotesize}
\todostyle{dl}{color=blue!30, size=\footnotesize}
\todostyle{dli}{color=blue!30, inline, size=\footnotesize}
\todostyle{lm}{color=red!30, size=\footnotesize}
\todostyle{lmi}{color=red!30, inline, size=\footnotesize}
\todostyle{tk}{color=yellow!30, size=\footnotesize}
\todostyle{tki}{color=yellow!30, inline, size=\footnotesize}
\usepackage{mdframed}
\usepackage{thmtools}
\usepackage{needspace}
\usepackage[capitalise]{cleveref}

\newmdenv[
leftmargin = 0pt,
innerleftmargin = 1em,
innertopmargin = 0pt,
innerbottommargin = 0pt,
innerrightmargin = 0pt,
rightmargin = 0pt,
linewidth = 0.5pt,
linecolor = crimson,
topline = false,
rightline = false,
bottomline = false
]{leftbar}

\declaretheoremstyle[
notebraces={}{},
headpunct=,
headformat=\NAME{} \NUMBER. \NOTE, 
postheadhook=\leavevmode\begin{leftbar},
  prefoothook=\end{leftbar},
]{leftbarred}

\declaretheoremstyle[
notebraces={}{},
headpunct=,
headformat=\NAME{} \NUMBER. \NOTE,
postheadhook=\leavevmode\begin{leftbar}\itshape,
  prefoothook=\end{leftbar},
]{leftbarredit}

\declaretheorem[name=Protocol, refname={Protocol, Protocols}, style=leftbarred]{protocol}
\declaretheorem[name=Resource, refname={Resource, Resources}, style=leftbarred]{resource}
\declaretheorem[name=Algorithm, refname={Algorithm, Algorithms}, style=leftbarred]{algorithm}

\declaretheorem[name=Definition, refname={Definition, Definitions}, style=leftbarred]{definition}

\declaretheorem[name=Theorem, refname={Theorem, Theorems}, style=leftbarredit]{theorem}
\declaretheorem[name=Lemma, refname={Lemma, Lemmas}, style=leftbarredit]{lemma}

\expandafter\let\expandafter\oldproof\csname\string\proof\endcsname
\let\oldendproof\endproof
\renewenvironment{proof}[1][\proofname]{%
  \begin{leftbar}%
    \oldproof[#1]%
  }{\oldendproof%
  \end{leftbar}}

\definecolor{crimson}{cmyk}{.00, .91, .73, .14}

\usepackage{lettrine}
\usepackage{Rothdn}

\usepackage{appendix}

\makeatletter
\def\cleardoublepage{%
  \clearpage%
  \ifodd\c@page%
  \else%
    \hbox{}%
    \thispagestyle{empty}%
    \newpage%
  \fi}%
\makeatother

\usepackage{titlesec}
\titleformat{name=\part}[display]%
{\Huge\rmfamily\itshape\bfseries}
{Part~\thepart}
{0pt}
{}
[]

\titleformat{name=\chapter}%
{\huge\rmfamily\itshape\bfseries}
{\llap{\colorbox{crimson}{\parbox{1.5cm}{\hfill\color{white}\vphantom{123456789}\thechapter}}\hspace*{5pt}}}
{0pt}
{}
[]

\titleformat{\section}%
{\Large\rmfamily\itshape\bfseries}
{\llap{\colorbox{crimson}{\parbox{1.0cm}{\hfill\color{white}\vphantom{123456789}\thesection}}\hspace*{5pt}}}
{0pt}
{}
[]

\titleformat{\subsection}%
{\large\rmfamily\itshape\bfseries}
{\llap{\colorbox{crimson}{\parbox{0.5cm}{\hfill\color{white}\vphantom{123456789}}}\hspace*{5pt}}}
{0pt}
{}
[]

\titleformat{\subsubsection}%
{\normalsize\rmfamily\itshape\bfseries}
{\llap{\colorbox{crimson}{\parbox{0.25cm}{\hfill\color{white}\vphantom{123456789}}}\hspace*{5pt}}}
{0pt}
{}
[]

\titleformat{\paragraph}[block]%
{\normalsize\rmfamily\itshape\bfseries}
{}
{0pt}
{}
[]

\makeatletter
\renewcommand{\HyOrg@maketitle}{
  \begin {titlepage}
    \titlelayout
    \let \footnotesize \small \let \footnoterule \relax
    \let \footnote \thanks \null \vfil \vskip 60\p@
    \begin{center}
      {\huge \bfseries\itshape\rmfamily\@title \par }
      \vskip 3em
      {\large \bfseries\rmfamily \lineskip .75em\begin{tabular}[t]{c}\@author \end {tabular}\par }
      \vskip 1.5em
      {\large \rmfamily \@date\par }
    \end{center}
    \par \@thanks \vfil \null
  \end{titlepage}
  \setcounter{footnote}{0}
  \global\let\thanks\relax
  \global\let\maketitle\relax
  \global\let\@thanks\@empty
  \global\let\@author\@empty
  \global\let\@date\@empty
  \global\let\@title\@empty
  \global\let\title\relax
  \global\let\author\relax
  \global\let\date\relax
  \global\let\and\relax}
\makeatother

\author{Harold Ollivier}
\date{Spring 2026}
\title{Verification of Quantum Computations: \newline Hardware-Efficient Security Proofs}
\hypersetup{
  pdfauthor={Harold Ollivier},
  pdftitle={Verification of Quantum Computations: \newline Hardware-Efficient Security Proofs},
  pdfkeywords={},
  pdfsubject={},
  pdfcreator={Emacs 30.2 (Org mode 9.7.39)}, 
  pdflang={English}}
\begin{document}

\maketitle

\clearpage
\thispagestyle{empty}
\mbox{}
\cleardoublepage
\thispagestyle{empty}
\vspace*{\fill}
  {\hspace*{\fill} \itshape \large{To David and Ray.}}
\vspace*{\fill}
\vspace*{\fill}
\cleardoublepage

\setcounter{tocdepth}{1}
\tableofcontents

\chapter*{Foreword}
\label{sec:org7f16d57}
It is a well-established, if rarely admitted, truth that the two most compelling sections of an Habilitation à Diriger des Recherches (HDR) are---or should be---this Foreword and the subsequent Acknowledgments. 

There are obvious reasons for this: the totality of the work presented herein has already been made publicly available with far greater detail in peer-reviewed venues. Consequently, there is nothing new from a technical perspective in the chapters that follow. New content is thus strictly confined to these two preliminary sections, where I am permitted to speak without the shield of clinical, scientific anonymity.

Indeed, this manuscript is synthesized from the following works: \cite{LMKO21verifying,KKLM22unifying,KKLM23asymmetric,KLMO24verification,GLMO24composably,KLMO25plugging}. While this list is chronological, it does not reflect the order in which the papers appear in this manuscript. Our work in \cite{LMKO21verifying} was initially an effort to ensure that experimentalists would be able to implement verification with noisy devices; however, it was only after completing the project that we realized it contained many ingredients that were significantly under-exploited. 

Generalizing these elements led to \cite{KKLM22unifying}, which contains the bulk of the new conceptual insights for verification and pushes modularity to its logical limit by providing a security proof compiler---one that can derive security solely from error detection schemes. This is expanded in \cref{hdr:sec:toolbox}, which also comprises a useful protocol extracted from \cite{KKLM23asymmetric} for implementing Remote State Preparation without ever deviating from the equatorial plane of the Bloch sphere. This is a crucial feature that serves many purposes, as will become clear as one progresses through the document. 

Section \cref{hdr:sec:reducing} is devoted to using these tools following the original motivation that prevailed in \cite{LMKO21verifying}: making it as light as possible for an experimentalist to implement secure computation. In this sense, \cite{KLMO24verification,GLMO24composably} justify the title of this manuscript by demonstrating how improved proof techniques provide secure protocols with lighter hardware requirements. Finally, I conclude with prospective protocols in \cref{hdr:sec:extending} exploring multi-party computation \cite{KKLM23asymmetric} and the delegation of fault-tolerant computations \cite{KLMO25plugging}. Both areas represented significant technical challenges, as the existence of protocols achieving security in these scenarios had remained open questions for nearly two decades.

Since the novelty of this exercise lies solely in the presentation and succession of ideas, I have deliberately omitted all formal proofs, referring the interested reader to the original publications. This is a calculated attempt to find the least-damaging trade-off available to French researchers when they are cornered into producing an HDR.

To put it bluntly, this has not been a pleasant exercise. At times, the process of satisfying administrative obligations feels less like a validation of mentoring quality and more like a form of institutionalized hazing. It is a curious paradox: the requirement for an HDR is to safeguard the quality of PhD supervision, yet it achieves the opposite by diverting hundreds of hours away from actual students toward the paraphrasing of existing work.

To be precise, this manuscript represents over 360 hours of labor---roughly 22\% of the official French annual working time. Even when smoothed over the two-year period dedicated to its writing, this 11\% tax on my productivity is staggering. For the sake of comparison, this is twice the time I spent working directly with one of my PhD student over the same period. Did this investment improve my mentoring capabilities? Certainly not. To offer another metric: this effort represents half the yearly energy required to manage the Hybrid Quantum Initiative, a 36M€ R\&D project.

The cumulative impact—time stolen from students and energy distracted from managing large-scale structuring projects—represents a massive loss in missed opportunities and, ultimately, public funding. While the international competition demands that we move fast, train students, and lead ambitious projects, we remain anchored by the archaic requirement to hold an HDR simply to create a team or mentor students.

I was fortunate enough to have either the support to bend these rules or the sheer stubbornness to ignore them. This allowed me to establish a research team of over 20 people and hire four PIs---tasks for which, strictly speaking, I was not yet habilitated. Had I followed the injunctions to the letter, over 50 publications since 2023 would have been credited to other institutions.

However, the situation eventually became untenable. My newly hired PIs also face the hurdle of their own HDRs; without my own degree, their ability to mentor students is restricted, harming their careers, their collaborations, and their ability to secure funding. This, and only this, is the reason for this manuscript. Otherwise, I would have been quite happy to continue my administrative guerrilla warfare indefinitely.

Why include such a rant? Because the HDR is a lingering artifact of the 1984 reform. When the Thèse d’État was suppressed to align France with international standards, the HDR was seemingly promoted as a way to undo what had just been done: to reintroduce a hierarchy that ensured the Caciques with a Thèse d'État would not be confused with the new PhDs. It ensured that they could maintain their positions without facing unseemly competition from younger, and often sharper, PIs.

Today, the HDR resembles military service: before you endure it, you know it is maladapted; while you endure it, you confirm that opinion; but once finished, there is a lingering, cynical desire to see the young ones go through it as well---simply because one does not want them to escape the suffering one had to endure.

For over forty years, this situation has irritated generations of researchers. It is perhaps time to move on and focus on what actually matters---mentoring students, advancing research, and doing science---and replacing a static diplomas with real emphasis in the recruitment process on HR qualities required to manage a team.

\chapter*{Acknowledgments}
\label{sec:org99c01cb}
If two figures in my career must stand out, they are those of two remarkable scientists: Pascale Charpin and Elham Kashefi.

Pascale welcomed me into the CODES team at INRIA, displaying a level of trust that, in hindsight, bordered on the heroic. She agreed to supervise a PhD student she did not know, on a subject---the information-theoretical understanding of decoherence---that was entirely foreign to her own expertise in symmetric cryptography. For the subsequent three years, we did not work together at all. 

I can only imagine the secondary stress of hosting a student who spent his time abroad, working with researchers outside her community, and occasionally picking up fights with the entire administration and the president of his own thesis jury. While I sincerely hope I never have to supervise a student quite like myself, I must admit it was a dream PhD. Thank you, Pascale, for the rarest of academic gifts: total autonomy.

Elham Kashefi was instrumental in dragging me back to the world of research. After thirteen years spent founding, funding, and exiting companies, I had reached the realization that the corporate world was not where I belonged. The intellectual challenge had lost its luster. 

The breaking point arrived in 2014, when I found myself forced to write into a legal contract that \(-500\) is indeed smaller than \(-300\). This was necessary to ensure that neither lawyers nor judges could misinterpret a clause stating: \emph{"In case the EBITDA is below -300 kEur, we shall have the right to convert our bonds."} I thought, it was time to do something else. 

I never believed a return into research would be possible.  Yet, here we are. This return was sparked by a chance encounter, orchestrated by Stéphane Buttigieg---my second recruit at the Institut Louis Bachelier---who recognized my name on a slide Elham had presented at a "Women in Science" conference. Elham took the significant risk of onboarding me into her team, granting me the necessary time to slowly catch up with thirteen years of scientific progress that had passed me by. Once again I was granted trust, autonomy and a wonderful research environment.

I am also deeply indebted to Jean-Frédéric Gerbeau and Bruno Sportisse, who invited me to rejoin INRIA and afforded me the opportunity to build the QAT team. I have been exceptionally lucky to hire incredibly talented colleagues; working with them is a constant pleasure that almost makes the administrative overhead of the French system bearable. I must also thank Luka Music and Domink Leichtle, PhD students when during my time at CNRS, who taught me abstract cryptography, and with whom I enjoy working so much. Thanks also to Anne Broadbent, Ulysse Chabaud and Maxime Garnier for having taken time to read the manuscript and provide feedback and locations of typos.

Finally, I want to thank Laure and Sidoine for their continuous support and infinite patience. They know all too well the manic cycles of research: the euphoria of a new proof, the despair of breaking said proof, the frantic excitement of a new new proof, and the despair\ldots.

Cyclicity, it seems, is the defining characteristic of my career.

\part{Introduction \& Challenges}
\label{sec:org79c4fb7}
\chapter{Introduction}
\label{sec:org7925b9e}
\lettrine[refstring, lines=3, lraise=0.15]{M}{odern science} rests on a simple loop: predict a phenomenon, perform the experiment, compare numbers.  In quantum physics that loop breaks as soon as the device under study becomes computationally rich.  Simulating a 50-qubit circuit, or the many-body dynamics of a 200-site spin chain, exhausts classical super-computers; predicting the result of a generic 100 qubit quantum circuit with depth 100 is entirely out of reach.  Yet such high-complexity regimes are exactly where the quantum computer promises its greatest impact.  How, then, can a classically limited scientist verify claims about a machine that surpasses classical limits?

The tension is already visible in today’s quantum advantage experiments.  Random circuit sampling on a few dozen qubits can be cross-checked—at the cost of consuming every GPU cluster within reach.  Add twenty extra qubits and the classical cross-check evaporates.  By contrast, Shor’s factoring algorithm permits crisp verification: if the device returns the prime factors of a 2048-bit RSA key, multiplication on a laptop confirms the answer.  The difference lies not in the physics of the device but in the complexity class of the problem.  Factoring belongs to \cmp{NP}: a classical certificate---the prime factors---can be checked quickly, even if finding it is hard.  However, a generic quantum computation---those in the Bounded Error Quantum Polynomial (\cmp{BQP}) time class but not in \cmp{NP}---offer no such classical certificate.

Aharonov and Vazirani \cite{AV13is} proposed a way out.  They lifted the predict-verify loop into an interactive dialogue between two agents:
\begin{itemize}
\item A \emph{verifier}, a Bounded Error Polynomial time (\cmp{BPP}) machine, who may additionally prepare or measure single-qubits but cannot simulate the entire computation.
\item A \emph{prover}, an untrusted \cmp{BQP} machine, holding the large quantum computer.
\end{itemize}
Through a sequence of tailored challenges the verifier convinces himself that the prover executed the desired algorithm and refrained from deviations harming the validity of the reported result.  An interactive protocol thus replaces the impossible task of predicting the output of a quantum computation by the feasible task of testing the prover’s responses. Two questions follow naturally:
\begin{enumerate}
\item Which quantum computations admit an interactive proof?  We seek protocols that cover the whole of \cmp{BQP}, not merely special cases like Shor’s factoring algorithm.
\item Can these protocols survive the practical hurdles of possibly noisy quantum hardware used by the verifier and/or the prover?
\end{enumerate}

The first question---is verification even possible in principle?---is now essentially settled. The protocols presented in \cite{FK17unconditionally,ABE10interactive} prove statistical soundness. Several works have improved upon these initial results to cover a wider range of settings. For instance, quantum capabilities on the verifier's side can be removed with the help of two non-communicating provers \cite{RUV13classical,GKW15robustness,MF16post,NV17quantum}, or by resorting to computational hardness assumptions \cite{M18classical}; the verifier can be restricted to performing measurements only \cite{FHM18post,HKSE17direct}; while reference \cite{B18how} provides a circuit-model-based verification protocol where quantum communication is proportional to the number of \(\T\) gates .

The second question---are these protocols practical enough to be of any use?---is more than just rhetorical. Verification is essential for the development of quantum computing as users accessing a remote machine need a guarantee of integrity for their computation, service providers want to know that the machine they buy is really capable of solving classically intractable problems, and even quantum hardware providers need to test whether their design improvements are going in the right direction. Yet, protocols cited above fail in two aspects. They lack robustness and/or they impose high overheads on verifiers and provers. In effect, the lack of robustness amounts to mistake honest noise for a malicious behavior forcing aborts, while the overhead imposes to dedicate precious qubits to security at the expense of those that are available for computing. 

Closing this robustness and overhead gap is the challenge the rest of this document sets out to address.
\section*{Organization and Contributions}
\label{sec:orgd98eb26}
In \cref{hdr:sec:essentials} we review techniques and protocols that provide the necessary background for statistically secure prepare-and-send verification of quantum computations using the approach developed in \cite{FK17unconditionally}. These provide the necessary ingredients for the extensions presented in this document. In \cref{hdr:sec:challenges}, we expose the challenges faced when implementing verification protocols in future machines and hint at the theoretical developments needed to solve them.

In \cref{hdr:sec:toolbox}, we develop our toolbox for addressing the challenges of practical verification. We start by introducing a modular and composable framework that extracts the necessary functionalities to construct verification protocols (\cref{hdr:sec:framework}). We follow it with two direct applications. The first one (\cref{hdr:sec:rvbqc}) is the simplest of the newly introduced protocols but it already pushes the boundaries of what can be done in practice. It removes the space overhead for running a verified computation, therefore suppressing the security vs. computation trade-off mentioned earlier. It also allows the prover to be noisy provided the noise stays below a given threshold expressed at the circuit level---and  unfortunately  not at the gate level. Yet, it shows that some robustness is indeed possible. In the second one (\cref{hdr:sec:dummyless}), a seemingly innocuous improvement is performed---reducing the number of states the verifier needs to prepare---from 10 to 8. More than quantitative, the improvement is qualitative as the 8 states are in the \(XY\) plane of the Bloch sphere. This property alone, will in fact be used consistently in all the subsequent protocols that we introduce.

In \cref{hdr:sec:reducing} we focus on reducing the physical requirements on the verifier's side. In \cref{hdr:sec:rotations}, we show how to verify quantum computations using rotations and bit flips only---i.e. without the need to prepare quantum states---under the assumption that a possibly maliciously controlled source is sending qubits and not higher dimensional systems. In \cref{hdr:sec:wcp} we take a complementary approach where the verifier is allowed to access an off-the-shelf attenuated laser.

In \cref{hdr:sec:extending} we propose to extend the realm of verification in two regimes. The first one is the multi-party case, where it was known how to lift a classical secure multiparty computation to the quantum domain in a mutli-round symmetrically powerful setup. We show (\cref{hdr:sec:smpqc}) that an asymmetric setup with a single powerful server is possible, thereby also simplifying the quantum network topology from being complete to being star-shaped. We then address in \cref{hdr:sec:sdftqc} the important question of long-term scalability of these protocols under noise. It was unknown if statistically secure verification protocols could be implemented for error protected quantum computations. The difficulty is that logically encoded quantum computations greatly enlarge the attack surface for the prover to bypass the requirement for a sound interaction with the verifier. More precisely, error correction is creating side-channels that jeopardize previously known protocol designs and proof techniques. Carefully tweaking fault-tolerant compilation procedures led to closing these side channels and allowed to provide the first statistically secure verification protocol for fault-tolerant quantum computations.

In \cref{hdr:sec:perspectives}, we provide perspective for future work.
\chapter{Essentials}
\label{hdr:sec:essentials}
\lettrine[refstring, lines=3, lraise=0.15]{T}{his chapter} establishes the notation, summarizes measurement-based quantum computation (MBQC), and reviews the two foundational secure delegation schemes that underpin the remainder of this work: Universal Blind Quantum Computation (UBQC) and Verified Blind Quantum Computation (VBQC).
\section{Notation}
\label{sec:orgee35afe}
\begin{itemize}
\item \emph{Sets.}
  \begin{itemize}
  \item For a set \(B \subseteq A\), we denote by \(B^{c}\) the complement of \(B\) in \(A\), where \(A\) will often be the vertex set of a graph and \(B\) a subset of vertices, usually input or output locations.
  \item For \(n\in \mathbb N\), the set of all integers from \(0\) to \(n\) included is denoted \([n]\).
  \item For a set \(A\), \(|A|\) denotes the number of elements in \(A\).
  \item We denote by \(\Theta\) the set of angles \(\qty{\frac{k\pi}{4}}_{k \in \qty{0, \ldots, 7}}\).
  \end{itemize}
\item \emph{Probabilities.}
  \begin{itemize}
  \item For a distribution probability \(\pd D\), \(x\sample \pd D\) indicates that \(x\) is sampled from \(\pd D\). Similarly, for a set \(\Lambda\), \(\lambda \sample \Lambda\) indicates that \(\lambda\) is sampled uniformly at random from \(\Lambda\).
  \item For a real function \(\epsilon(\eta)\), we say that \(\epsilon(\eta)\) is \emph{negligible in $\eta$} if, for all polynomials \(p(\eta)\) and \(\eta\) sufficiently large, we have \(\epsilon(\eta) \leq \frac{1}{p(\eta)}\).
  \item For a real function \(\mu(\eta)\), we say that \(\mu(\eta)\) is \emph{overwhelming in $\eta$} if there exists a negligible \(\epsilon(\eta)\) such that \(\mu(\eta) = 1 - \epsilon(\eta)\).
  \end{itemize}
\item \emph{Unitaries and Operators.}
  \begin{itemize}
  \item We note \(\X, \Y, \Z\) the Pauli operators. Then \(\mathcal{P}_1 = \langle \X, \Y, \Z \rangle\) is the single-qubit Pauli group.
  \item The rotation operator around the \(Z\)-axis of the Bloch sphere by an angle \(\theta\) is noted \(\Z({\theta}) = \begin{pmatrix} 1 & 0 \\ 0 & e^{i\theta} \end{pmatrix}\).
  \item The 2-qubit controlled-phase gate is denoted \(\CZ\) and flips the phase of the \(\ket 1\otimes \ket 1\) computational basis vector, leaving all others unchanged.
  \item Given a set of qubits indexed by elements in set \(V\), for all \(i \in V\) and any single-qubit unitary \(\cptp U\), we denote \(\cptp U_i\) the unitary obtained by applying \(\cptp U\) to qubit \(i\) and identity to the rest of the qubits in \(V\). This is easily extended to multi-qubit gates by using multiple indices.
  \item Using the previous notation the \(n\)-qubit Pauli group is $$\mathcal{P}_n = \{\pm 1, \pm i \} \times \left \{ \cptp P_1 \otimes \ldots \otimes \cptp P_n \mid \cptp P_j \in \{\Id,\X,\Y,\Z\} \right \}.\nonumber$$
  \item When there is no ambiguity, we will overload the notation \(\cptp P_i\) to refer to the \(i\)-th factor in \(\cptp P\) whenever \(\cptp P\) can  be written as a tensor product of single-qubit operators, e.g. for \(\cptp P= \cptp A \otimes \cptp B \otimes \cptp C\), \(\cptp P_2 = B\).
  \item The states in the \(XY\) plane of the Bloch sphere are noted \(\ket{+_{\theta}} = \Z(\theta)\ket{+} = \frac{1}{\sqrt{2}}(\ket{0} + e^{i\theta}\ket{1})\).
  \item For a measurement in the basis \(\ket{\pm_\theta}\), we associated the value \(0\) to outcome \(\ket{+_{\theta}}\) and \(1\) to outcome \(\ket{-_{\theta}}\).
  \item For an \(n\)-qubit operator \(\cptp U\) and \(n\)-qubit mixed state \(\rho\), we write \(\cptp U[\rho]\) for \(\cptp U\rho\cptp U^\dagger\).
  \item For two operators \(\cptp U\) and \(\cptp V\) acting on the same number of qubits, we write \(\cptp U \circ \cptp V\) for the composition of the two operators.
  \end{itemize}
\end{itemize}
\emph{Graph States.}
\begin{itemize}
\item Given a graph \(G = (V,E)\) where \(V\) is the set of vertices and \(E\) the set of edges, \(\ket G\) is the normalized eigenstate of \(\X_i\bigotimes_{i\sim j} \Z_j\) for \(i\in V\) and where \(i\sim j\) whenever \((i,j) \in E\).
\item We denote the set of neighbors of a node \(i\) with \(N_G(i) = \{j \in V, \ i \sim j\}\).
\item \(\ket G\) is also defined as \(\prod_{(i,j)\in E} \CZ_{i,j} \left [\bigotimes_{k\in V} \ketbra +_{k}\right ]\) where \(\CZ_{i,j}\) is the controlled phase gate applied according to the edge \((i,j)\) of \(G\).
\item An open graph state with input set \(I \subset V\) is obtained by applying the same \(CZ\) gates as for regular graph states, the only difference is that a  state \(\ket \psi\) over the qubits in \(I\) is provided instead of \(\bigotimes_{k\in I} \ket +_k\) (\cref{hdr:fig:open-graph-state}).
\end{itemize}
\emph{Complexity}
\begin{itemize}
\item \cmp{BQP}, for Bounded-Error Quantum Polynomial-Time, is the class of decision problems solvable by a uniform family of polynomial-size quantum circuits, with at most 1/3 probability of error.
\item \cmp{BPP}, for Bounded-Error Probabilistic Polynomial-Time, is the class of decision problems solvable by a non-deterministic polynomial time machine such that:  if the answer is 'yes' then at least 2/3 of the computation paths accept; if the answer is 'no' then at most 1/3 of the computation paths accept.
\end{itemize}

\begin{marginfigure}
  \resizebox{6cm}{!}{
\begin{tikzpicture}

\node[vertex,label=above right:{1}] (n1) at (0, 1.2) {};
\node[vertex,label=below right:{2}] (n2) at (0,-1.2) {};
\node[vertex,label=below:{3}]       (n3) at (2, 0)   {};
\node[vertex,label=below:{4}]       (n4) at (4, 0)   {};
\node[vertex,label=above right:{5}] (n5) at (6, 1.2){};
\node[vertex,label=below right:{6}] (n6) at (6,-1.2){};

\draw[edge] (n1)--(n3);
\draw[edge] (n2)--(n3);
\draw[edge] (n3)--(n4);
\draw[edge] (n4)--(n5);
\draw[edge] (n4)--(n6);

\draw[brace] (-0.6,-1.2) -- (-0.6,1.2)
    node[midway,left=8pt] {$\rho$};

\end{tikzpicture}}
  \par\bigskip\bigskip
  \resizebox{6cm}{!}{
\begin{tikzpicture}

\foreach \y/\lab in {0/1,-1/2,-2/3,-3/4,-4/5,-5/6}{
    \draw[wire] (0,\y) -- (6,\y);
    \node[left] at (0,\y) {\lab};
}

\draw[brace] (-0.5,-1) -- (-0.5,0)
    node[midway,left=8pt] {$\rho$};

\draw[connector] (0.9,0) -- (0.9,-2);
\node[dot] at (0.9,0) {};
\node[dot] at (0.9,-2) {};

\draw[connector] (1.8,-1) -- (1.8,-2);
\node[dot] at (1.8,-1) {};
\node[dot] at (1.8,-2) {};

\draw[connector] (2.7,-2) -- (2.7,-3);
\node[dot] at (2.7,-2) {};
\node[dot] at (2.7,-3) {};

\draw[connector] (3.7,-3) -- (3.7,-4);
\node[dot] at (3.7,-3) {};
\node[dot] at (3.7,-4) {};

\draw[connector] (4.7,-3) -- (4.7,-5);
\node[dot] at (4.7,-3) {};
\node[dot] at (4.7,-5) {};

\end{tikzpicture}}
  \caption{\raggedright Open graph state (up) and corresponding circuit model preparation (down).}
  \label{hdr:fig:open-graph-state}
\end{marginfigure}
\section{Measurement Based Quantum Computation (MBQC)}
\label{sec:prelim-mbqc}
MBQC is an alternative model for quantum computing. To perform a computation in MBQC, we need to prepare large entangled state, measure selected qubits, and apply feed-forward Pauli corrections fixed by previous outcomes.

This model emerged from the gate teleportation principle (\cref{hdr:fig:gate-teleportation}). It was introduced in \cite{RB01one}. The measurement calculus \cite{DKP07measurement-calculus} formalized the procedure and the correspondence with the circuit model.

\begin{marginfigure}
  \resizebox{6cm}{!}{\begin{tikzpicture}

\coordinate (in1) at (0,1);
\coordinate (in2) at (0,0);

\node[left] at (in1) {$\ket{\psi}$};
\node[left] at (in2) {$\ket{+}$};

\draw[wire] (in1) -- (1,1);
\draw[wire] (in2) -- (1,0);

\node[dot] (d1) at (1,1) {};
\node[dot] (d2) at (1,0) {};
\draw[wire] (d1) -- (d2);

\node[gate] (z) at (2,1) {$Z(\theta)$};
\draw[wire] (d1) -- (z);

\node[gate] (h) at (3,1) {$H$};
\draw[wire] (z) -- (h);

\node[measure] (m) at (4,1) {$\scriptstyle m$};
\draw[wire] (h) -- (m);

\node[gate] (x) at (4,0) {$X$};
\draw[wire] (4,0) -- (x);

\node[right] at (5,0) {$H Z(\theta)\ket{\psi}$};
\draw[wire] (x) -- (5,0);

\draw[wire] (d2) -- (x.west);

\draw[cwire] (m.south) -- (x.north);

\end{tikzpicture}}
  \caption{\raggedright Gate Teleportation}
  \label{hdr:fig:gate-teleportation}
\end{marginfigure}
\subsection{Computations as Patterns}
\label{sec:org0ce0959}
The core of MBQC is to perform a computation by following these three steps:
\begin{enumerate}
\item Construct a graph state \(\ket G\) associated to the graph \(G=(V,E)\).
\item Successively perform single-qubit measurements on a subset of this state;.
\item Update the subsequent measurements after each new single-qubit measurement is performed.
\end{enumerate}

A computation in the MBQC model is defined by a \emph{measurement pattern}:
\begin{definition}[Measurement Pattern]
  A \emph{measurement pattern} is a tuple \((G,I,O,\{\phi(i)\}_{i\in O^{c}}, f)\) where:
  \begin{itemize}
  \item \(G = (V,E)\) is a graph.
  \item \(I \subset V\) and \(O \subset V\) are input and output vertex sets.
  \item \(\{\phi(i)\}_{i\in O^{c}}\) assigns an angle to non-output vertices.
  \item \(f: O^{c}  \mapsto I^{c}\) is an injective \emph{flow} function inducing a partial order \(\prec\) on the vertices  \(V\).
  \end{itemize}
  \label{hdr:def:pattern}
\end{definition}

The computation is executed by measuring each qubit successively using the partial order defined by \(f\). For qubit \(i\), the measurement basis is \(\left \{\ket{+_{\phi'}}, \ket{-_{\phi'}} \right\}\). The outcome of the measurement at vertex \(i\) is denoted \(b(i)\) and, keeping our convention, \(b(i)=0\) is associated to \(\ket{+_{\phi'}}\) and \(b(i)=1\) to \(\ket{-_{\phi'}}\). The measurement angle \(\phi'(i)\) is dependent on \(\phi(i)\) and also on the outcomes \(b(l)\) of previous measurements. More precisely, to each vertex \(i\) are associated the sets \(S_{\X}(i) = f^{-1}(i)\) and \(S_{\Z}(i) = \{j, i \in N_G(f(j)), \}\) which are respectively called the \(\X\) and \(\Z\) dependency sets for vertex \(i\). A measurement outcome of \(1\) in a qubit from \(S_{\X}(i)\) will multiply the angle of \(i\) by \(-1\), while the \(\Z\) dependencies add \(\pi\) to the angle. Hence the measurement angle for qubit \(i\) is
\begin{align}
  \phi'(i) & = (-1)^{s_{\mathsf{X}} (i)}\phi(i) + \pi s_{\mathsf{Z}}(i), \mbox{ where } \label{hdr:eq:angle-update}\\
  s_{\mathsf{X}}(i) & = \bigoplus_{j\in S_{\mathsf{X}}}b(j), \\
  s_{\mathsf{Z}}(i) & = \bigoplus_{j\in S_{\mathsf{Z}}}b(j). 
\end{align}

The existence of a flow function \(f\) guarantees that patterns can be executed so that the output is independent of the intermediate measurement outcomes. The conditions on \(f\) are that it is an injective function from non-output vertices \(O^c\) to non-input vertices \(I^c\). The reason for this choice of domain and co-domain is that outputs are not measured and therefore do not generate corrections, while inputs are measured first and therefore do generate corrections. Further details regarding the definition of the flows and its generalizations can be found in \cite{DK06determinism,BKMP07generalized,S21relating}.
\subsection{Graph Transformations as Patterns}
\label{hdr:sec:bridge-and-brick}
Using MBQC, it is also possible to describe some manipulations on graph states or open graph states---i.e. graph states with input vertices where qubits are provided externally instead of being prepared in \(\ket +\).

\begin{itemize}
\item \emph{Breaking.} The \emph{break operator} removes a non-input vertex of \(G\) and all incoming edges to this vertex. It works by preparing the vertex in \(\ket 0\).  (\cref{hdr:fig:breaking})
\item \emph{Bridging.} The \emph{bridge operation}  acts on a non-input vertex of degree two. By measuring such vertex in the \(Y\) basis and by applying appropriate corrections on its neighbors, one can produce the open graph state corresponding to \(G'\) obtained from \(G\) by removing the measured vertex and its incoming edges, and instead linking its neighbors directly (\cref{hdr:fig:bridging}).
\end{itemize}

\begin{marginfigure}
  \resizebox{6cm}{!}{
\begin{tikzpicture}[
    scale=1,
    every node/.style={circle, inner sep=1.5pt},
    filled/.style={fill=black},
    hollow/.style={draw, fill=white, thick}
]

\node[hollow] (c) at (0,0) {};

\foreach \angle in {90, 210, 330} {
    \node[filled] (a\angle) at (\angle:1.2) {};
    \draw (c) -- (a\angle);

    \node[filled] (b\angle) at (\angle+30:2.0) {};
    \node[filled] (d\angle) at (\angle-30:2.0) {};

    \draw (a\angle) -- (b\angle);
    \draw (a\angle) -- (d\angle);
}


\end{tikzpicture}}
  \par\bigskip\bigskip
  \resizebox{6cm}{!}{
\begin{tikzpicture}[
    scale=1,
    every node/.style={circle, inner sep=1.5pt},
    filled/.style={fill=black},
    hollow/.style={draw, fill=white, thick}
]

\node[hollow] (c) at (0,0) {};

\foreach \angle in {90, 210, 330} {
    \node[filled] (a\angle) at (\angle:1.2) {};

    \node[filled] (b\angle) at (\angle+30:2.0) {};
    \node[filled] (d\angle) at (\angle-30:2.0) {};

    \draw (a\angle) -- (b\angle);
    \draw (a\angle) -- (d\angle);
}

\end{tikzpicture}}
  \caption{\raggedright Breaking: The initial graph (up) is broken by preparing the empty node in $\ket 0$ state (down).}
  \label{hdr:fig:breaking}
\end{marginfigure}

\begin{marginfigure}
  \resizebox{6cm}{!}{\begin{tikzpicture}[
    scale=1,
    every node/.style={circle, inner sep=1.5pt},
    filled/.style={fill=black},
    hollow/.style={draw, fill=white, thick}
]

\coordinate (C)   at (0,0);        
\coordinate (L1)  at (-1.5,0);   
\coordinate (R1)  at (1.5,0);   

\coordinate (LL)  at (-2.5,-1.5);
\coordinate (LR)  at (-0.5,-1.5);

\coordinate (RL)  at (0.5,-1.5);
\coordinate (RR)  at (2.5,-1.5);

\node[hollow] (c)  at (C)  {};
\node[filled] (l1) at (L1) {};
\node[filled] (r1) at (R1) {};

\node[filled] (ll) at (LL) {};
\node[filled] (lr) at (LR) {};
\node[filled] (rl) at (RL) {};
\node[filled] (rr) at (RR) {};

\draw (c) -- (l1);
\draw (c) -- (r1);

\draw (l1) -- (ll);
\draw (l1) -- (lr);

\draw (r1) -- (rl);
\draw (r1) -- (rr);

\end{tikzpicture}}
  \par\bigskip\bigskip
  \resizebox{6cm}{!}{\begin{tikzpicture}[
    scale=1,
    every node/.style={circle, inner sep=1.5pt},
    filled/.style={fill=black}
]

\coordinate (L1)  at (-1.5,0);
\coordinate (R1)  at (1.5,0);

\coordinate (LL)  at (-2.5,-1.5);
\coordinate (LR)  at (-0.5,-1.5);

\coordinate (RL)  at (0.5,-1.5);
\coordinate (RR)  at (2.5,-1.5);

\node[filled] (l1) at (L1) {};
\node[filled] (r1) at (R1) {};

\node[filled] (ll) at (LL) {};
\node[filled] (lr) at (LR) {};
\node[filled] (rl) at (RL) {};
\node[filled] (rr) at (RR) {};

\draw (l1) -- (ll);
\draw (l1) -- (lr);

\draw (r1) -- (rl);
\draw (r1) -- (rr);

\draw (l1) -- (r1);

\end{tikzpicture}}
  \caption{\raggedright Bridging: The initial graph (up) is bridged by measuring the empty node in $\Y$ (down).}
  \label{hdr:fig:bridging}
\end{marginfigure}
\subsection{Computing on the 3-Qubit Line Graph}
\label{hdr:sec:linegraph}
To make this more concrete, we will now describe an example of an MBQC pattern and corrections on the three-vertex linear graph. In that case we have \(V = \{1, 2, 3\}\) and \(E = \{(1, 2), (2, 3)\}\). The first qubit in the line will be the only one in the input set \(I\) and the last qubit the only one in the output set \(O\). We note \(\phi(1)\) and \(\phi(2)\) the measurement angles of the first two (non-output) qubits. We start with a single-qubit in state \(\ket{\psi}\) as input, the qubits associated to the other two vertices are initialized in the \(\ket{+}\) state. We apply one \(\CZ\) gate for each pair of qubits whose associated vertices are linked by an edge in \(E\). If \(\ket{\psi} = \alpha \ket{0} + \beta\ket{1}\), the resulting state is
\begin{align}
  \CZ_{1, 2}\CZ_{2, 3}\ket{\psi}\ket{+}\ket{+} =
  & \frac{\alpha}{2}\left(\ket{000} + \ket{001} + \ket{010} - \ket{011}\right) \nonumber \\
  & \quad + \frac{\beta}{2}\left(\ket{100} + \ket{101} - \ket{110} + \ket{111}\right).
\end{align}

In order to perform the measurement on the qubit in vertex \(1\), we apply the rotation \(\Z(-\phi(1))\) and project either onto state \(\ket{+}\), associated to the measurement outcome \(0\), or \(\ket{-}\) associated to outcome \(1\). The joint state of the unmeasured qubits---vertices \(2\) and \(3\)---is then
\begin{align}
  \ket{\psi_0} &= \frac{1}{\sqrt{2}}(\alpha + \beta e^{-i\theta})\ket{0}\ket{+} + \frac{1}{\sqrt{2}}(\alpha - \beta e^{-i\theta})\ket{1}\ket{-},\\
  \ket{\psi_1} &= \frac{1}{\sqrt{2}}(\alpha - \beta e^{-i\theta})\ket{0}\ket{+} + \frac{1}{\sqrt{2}}(\alpha + \beta e^{-i\theta})\ket{1}\ket{-}.
\end{align}
There are two things that we can notice from this:
\begin{itemize}
\item In the first case, the state is the same as if we had started with the state \(\ket{\psi'} = \Ha\Z(-\phi(1))\ket{\psi}\) and entangled it to a single \(\ket{+}\) state using a single \(\CZ\) operation---a two-qubit linear graph. This fact allows us to perform the translation between the MBQC model and the circuit model.
\item We see also that, in order to recover \(\ket{\psi_0}\) from the state \(\ket{\psi_1}\), we need to apply an \(\X\) operation on the qubit associated to vertex \(2\) and a \(\Z\) operation on the qubit associated to vertex \(3\). After applying these operations, the state will be independent of the outcome of the measurement. This induces an ordering on the vertices since \(1\) must be measured before \(2\) and \(3\) if we want to use this correction strategy.
\end{itemize}

These corrections can indeed be absorbed into the angle of future measurements since \(\bra{\pm}\Z(\phi)\X = \bra{\pm}\Z(-\phi)\) and \(\bra{\pm}\Z(\phi)\Z = \bra{\pm}\Z(\phi + \pi)\). This can also be done for the qubit in vertex \(2\).  For the output qubits, the corrections need to be applied and cannot be absorbed into a subsequent measurement as, by construction, there are no measurement on output qubits.

In this example above, the flow function is  \(f(1) = 2\) and \(f(2) = 3\), which induces the measurement order \(1 \preceq 2 \preceq 3\). More generally, finding the flow relies on the stabilizers of the graph state associated to \(G\) \cite{BKMP07generalized,MPS22shadow}.
\section{Delegation of Quantum Computations}
\label{sec:org9639605}
A measurement-based quantum computation can be executed remotely. The heavy lifting---preparing a large entangled graph state and performing single-qubit mea\-su\-rements---rests with the server.  The client needs only to: 
\begin{enumerate}
\item Send the classical description of the pattern.
\item Provide its input qubits, if any.
\end{enumerate}

This is summarized in the following protocol, where Alice plays the client and Bob the server:
\begin{protocol}[Delegated MBQC Protocol]
  \begin{algorithmic}[0]
    \STATE \textbf{Alice's Inputs:} A measurement pattern $(G, I, O, \{\phi(i)\}_{i\in O^c}, f)$ and a quantum register containing the input qubits $i\in I$.

    \STATE \textbf{Open Graph State Preparation:}
    \begin{enumerate}
    \item Alice sends the graph's description $(G, I, O)$ to Bob.
    \item Alice sends its input qubits for positions $I$ to Bob.
    \item Bob prepares $\ket +$ states for qubits $i\in I^c$.
    \item Bob applies a $\CZ$ gate between qubits $i$ and $j$ if $(i,j)$ is an edge of $G$.
    \end{enumerate}

    \STATE \textbf{Computation:}
    \begin{enumerate}
    \item Alice sends the measurement angles $\{\phi(i)\}_{i\in O^c}$ along with the description of $f$ to Bob.
    \item Bob measures the qubits $i \in O^c$ in the order $\preceq$ induced by $f$ in the basis $\ket{\pm_{\phi'(i)}}$ where
      \begin{align}
        s_{\X}(i) & = \bigoplus_{j \in S_{\X}(i)} b(j), \ s_{\Z}(i) = \bigoplus_{j \in S_{\Z}(i)} b(j),\\
        \phi'(i) & = (-1)^{s_{\X}(i)}\phi(i) + s_{\Z}(i) \pi, \label{eq:updt-mbqc}
      \end{align}
      where $b(j)\in \{0,1\}$ is the measurement outcome for qubit $j$.
    \item Bob applies the correction $\Z^{s_{\Z}(i)}_i \X^{s_{\X}(i)}_i$ for each output qubits $i \in O$, which it sends back to Alice.
    \end{enumerate}
  \end{algorithmic}
  \label{hdr:proto:dqc}
\end{protocol}
\section{Blind Delegation}
\label{sec:prelim-blindness}
Privacy is the first line of defense in delegated quantum computing.  \emph{Blindness} provides privacy by ensuring that the server learns nothing about either the client’s data or the algorithm beyond a small, explicitly declared leakage.\sidenote{\raggedright \footnotesize It can be set to the size of the graph and the measurement order, if the  delegated computation is compiled to rely a universal graph state. This is the least that can leak as the server will always be communicated the graph and the measurement order as it is precisely what it can always deduce from the instructions given by the client.}  Correctness is not enforced at this stage; an honest-but-curious server returns the right answer, but a malicious one might return garbage without detection.

We formalize blindness as an ideal cryptographic resource, then present the Universal Blind Quantum Computation (UBQC) protocol that constructs the resource, and then illustrate it on a three-qubit example. This formalization follows the principles of Abstract Cryptography \cite{MR11abstract-cryptography} where security stems from the inability for an all-powerful distinguisher having access to all interfaces involved in the ideal resource and the protocol to distinguish between them (See \cref{hdr:sec:ac} for more details).
\subsection{Blindness}
\label{sec:org7f6f53b}
In the language of abstract cryptography, blindness of computation is captured by the following resource, where Alice plays the role of a client who delegates her computation to Bob, playing the server:
\begin{resource}[Blind Delegated Quantum Computation (BDQC)]
  \begin{algorithmic}[0]
    \STATE \textbf{Public Information:} Nature of the leakage $l$.

    \STATE \textbf{Permitted leakage $l$:} A set of computations $\mathfrak{C}$, and two subspaces, one for inputs $\Pi_{I,\mathfrak{C}}$ and one for outputs $\Pi_{O, \mathfrak{C}}$, so that for $\mathsf{C} \in \mathfrak{C}$, an input in $\Pi_{{I, \mathfrak{C}}}$ is mapped to an output in $\Pi_{O, \mathfrak{C}}$.

    \STATE \textbf{Inputs:} 
    \begin{itemize}
    \item Alice inputs the classical description of a computation $\cptp C$ from subspace $\Pi_{I,\mathfrak{C}}$ to subspace $\Pi_{O,\mathfrak{C}}$ and a quantum state $\rho_A$ in $\Pi_{I,\mathfrak{C}}$.
    \item Bob chooses whether or not to deviate. This interface is filtered by a control bit $c$ (set to $0$ by default for honest behavior). If $c = 1$, Bob has an additional input CPTP map $\cptp F$ and state $\rho_B$ possibly entangled with its own working register.
    \end{itemize}

    \STATE \textbf{Computation by the Resource:}
    \begin{enumerate}
    \item If $c = 1$, the Resource sends the leakage $l$ to Bob's interface.
    \item If $c = 0$, it outputs $\cptp{C}[\rho_A]$ at Alice's output interface. Otherwise, it waits for the additional input and outputs $\cptp F[\rho_{AB}]$ at Alice's interface.\sidenotemark
    \end{enumerate}
  \end{algorithmic}
  \label{hdr:res:bdqc}
\end{resource}
\sidenotetext[][-10cm]{\raggedright \footnotesize Here, $\rho_{AB}$ is the joint state of registers $A$ and $B$. It is not possible to write it as $\rho_A \otimes \rho_B$ as both registers could be entangled with one another, and with an external register such as the private register of Bob. This is especially relevant for the security proofs as Alice and Bob are played by a single party, the distinguisher.}

Above the leakage is the graph state and order of measurement that would be used to perform the corresponding MBQC. Note that if one uses a universal graph, then the leakage reduces to an upper bound on the size of the computation. The subspaces \(\Pi_{I,\mathfrak{C}}\) and \(\Pi_{O,\mathfrak{C}}\) are only specifying how the input and output are encoded into qubit systems. 

A direct inspection shows that this definition follows the textual description of what a blind delegated quantum computation is supposed to do:
\begin{itemize}
\item Alice delegates the execution of the computation \(\cptp C\) on \(\rho_{A}\).
\item In case \(c=0\), Bob is honest and Alice receives \(\cptp{C}[\rho_{A}]\).
\item In case \(c=1\), Bob is malicious, Alice receives an arbitrary deviated state, but Bob learns at most \(l\).
\end{itemize}
Because the resource's behavior strictly abides by its specification, it is said to be \emph{secure-by-design}.
\subsection{Protocol for Blind Delegated Quantum Computation}
\label{sec:orgb948b55}
Blind Delegated Quantum Computation (\cref{hdr:res:bdqc}) can be constructed from single-qubit preparations, \(\Z(\theta)\) and \(\X\) unitaries as well as quantum communication. The intuition for the construction is to rely on the client to encrypt its inputs and the ancillary qubits used to perform the computation by applying random \(\Z(\theta)\) rotations and \(\X^{a}\) flips independently for each of them (\cref{hdr:fig:pre-post-rotations}). This will allow the client to one-time pad its inputs and the measurement angles. Commuting the \(\Z(\theta)\) rotations through the \(\CZ\) gates simply induces corresponding rotations right before the measurements, while commuting the \(\X^{a}\) flips induces additional \(\Z^a\) on the  neighbors. Both can be absorbed by adapting the measurement angle accordingly, resulting in seemingly random instructions sent to the server. The measurement outcomes can also be hidden behind a fresh random bit \(r\), and recovered by the client by undoing this last one-time pad when computing the updated angles for the subsequent measurements. All in all, the computation can be carried out correctly whenever the server follows the instructions and without leaking neither inputs, outputs nor measurements angles of the pattern. In essence, only the graph used and the order of measurement is known to the server.

\begin{marginfigure}[-18cm]
  \resizebox{6cm}{!}{
\begin{tikzpicture}

\node[vertex,label=above:{1}] (1) at (-2,1) {};
\node[vertex,label=below:{2}] (2) at (-2,-1) {};
\node[vertex,label=below:{3}] (3) at (-1,0) {};
\node[vertex,label=above:{4}] (4) at (0.5,0) {};
\node[vertex,label=above:{5}] (5) at (1.8,1) {};
\node[vertex,label=below:{6}] (6) at (1.2,-0.8) {};
\node[vertex,label=right:{7}] (7) at (2.4,-0.5) {};
\node[vertex,label=below:{8}] (8) at (1.5,-1.8) {};

\draw[edge] (1)--(3)--(4)--(5);
\draw[edge] (2)--(3);
\draw[edge] (4)--(6)--(7);
\draw[edge] (6)--(8);

\end{tikzpicture}}
  \par\bigskip\bigskip
  \resizebox{6cm}{!}{
\begin{tikzpicture}

\foreach \y in {0,...,7}{
    \draw[edge] (0,-\y) -- (8,-\y);
}

\draw[edge] (1,0) -- (1,-2);
\draw[edge] (2,-1) -- (2,-2);
\draw[edge] (3,-2) -- (3,-3);
\draw[edge] (4,-3) -- (4,-4);
\draw[edge] (5,-3) -- (5,-5);
\draw[edge] (6,-5) -- (6,-6);
\draw[edge] (7,-5) -- (7,-7);

\foreach \p in {
  (1,0),(1,-2),(2,-1),(2,-2),(3,-2),(3,-3),(4,-3),(4,-4),(5,-3),(5,-5),(6,-5),(6,-6),(7,-5),(7,-7)
}{
  \node[dot] at \p {};
}

\end{tikzpicture}}
  \par\bigskip\bigskip
  \resizebox{6cm}{!}{
\begin{tikzpicture}
  
  \definecolor{c0}{RGB}{255,255,255}
\definecolor{c1}{RGB}{220,50,47}
\definecolor{c2}{RGB}{230,120,20}
\definecolor{c3}{RGB}{60,180,75}
\definecolor{c4}{RGB}{50,100,200}
\definecolor{c5}{RGB}{150,50,200}
\definecolor{c6}{RGB}{200,80,180}
\definecolor{c7}{RGB}{50,200,200}
\definecolor{c8}{RGB}{240,200,0}

\foreach \y in {0,...,7}{
    \draw[edge] (0,-\y) -- (8,-\y);
}

\foreach \y/\col in {
0/c1,1/c2,2/c3,3/c4,4/c5,5/c6,6/c7,7/c8}{
    \node[phase,fill=\col] at (0.3,-\y) {};
}

\foreach \y/\col in {
0/c1,1/c2,2/c3,3/c4,4/c5,5/c6,6/c7,7/c8}{
    \node[phase,draw=\col,fill=c0] at (7.7,-\y) {};
}

\node[label,above] at (0.3,0.5) {$\Z(\theta)$};
\node[label,above] at (7.7,0.5) {$\Z(\theta)$};

\draw[edge] (1,0) -- (1,-2);
\draw[edge] (2,-1) -- (2,-2);
\draw[edge] (3,-2) -- (3,-3);
\draw[edge] (4,-3) -- (4,-4);
\draw[edge] (5,-3) -- (5,-5);
\draw[edge] (6,-5) -- (6,-6);
\draw[edge] (7,-5) -- (7,-7);

\foreach \p in {
  (1,0),(1,-2),(2,-1),(2,-2),(3,-2),(3,-3),(4,-3),(4,-4),(5,-3),(5,-5),(6,-5),(6,-6),(7,-5),(7,-7)
}{
  \node[dot] at \p {};
}

\end{tikzpicture}}
  \caption{\raggedright Initial graph state (up), circuit preparing the graph state (middle), equivalent circuit with inserted random $\Z(\theta)$ rotations (colored boxes) compensated by the conjugated rotation $\Z(-\theta)$ before the measurement (down).}
  \label{hdr:fig:pre-post-rotations}
\end{marginfigure}

\begin{protocol}[Universal Blind Quantum Computation (UBQC)]
  \begin{algorithmic}[0]
    \STATE \textbf{Alice's Inputs:} A measurement pattern $(G, I, O, \{\phi(i)\}_{i\in O^c}, f)$ and a quantum register containing the input state $\rho_A$ on qubits $i\in I$.

    \STATE \textbf{Protocol:}
    \begin{enumerate}
    \item Alice sends the graph's description $(G, I, O)$ and the measurement order to Bob.
    \item Alice prepares and sends all the qubits in $V$ to Bob:
      \begin{enumerate}
      \item For $i \in I$, it chooses a random bit $a(i)$. For $i \in I^c$, it sets $a(i) = 0$.
      \item For $i \in O$, it chooses a random bit $r(i)$ and sets $\theta(i) = (r(v) + a_N(v))\pi$ where $a_N(i) = \sum_{j \in N_G(i)} a(j)$. For $i \in O^c$, it samples a random $\theta(i) \in \Theta$.
      \item For $i \in I$, it sends $\prod_{i \in I}\Z_i(\theta(i))\X_i^{a(i)}[\rho_A]$. For $i \in I^c$ it sends $\ket{+_{\theta(i)}}$.
      \end{enumerate}
    \item The Bob applies a $\CZ$ gate between qubits $i$ and $j$ if $(i,j)$ is an edge of $G$.
    \item For all $i \in O^c$, in the order specified by the flow $f$, Alice computes the measurement angle $\delta(i)$ and sends it to Bob, receiving in return the corresponding measurement outcome $b(i)$:
      \begin{align}
        s_{\X}(i) & = \bigoplus_{j \in S_{\X}(i)} b(j) \oplus r(j), \ s_{\Z}(i) = \bigoplus_{j \in S_{\Z}(i)} b(j) \oplus r(j), \\
        \delta(i) & = (-1)^{a(i)}\phi'(i) + \theta(i) + (r(i) + a_N(i)) \pi,\label{eq:updt-ubqc}
      \end{align}
      where $\phi'(i)$ is computed using Equation \ref{eq:updt-mbqc} with the new values of $s_{\X}(i)$ and $s_{\Z}(i)$.
    \item The Bob sends back the output qubits $i \in O$.
    \item Alice applies $\Z_i^{s_{\Z}(i) + r(i)}\X_i^{s_{\X}(i) + a(i)}$ to the received qubits $i \in O$.
    \end{enumerate}
  \end{algorithmic}
  \label{hdr:proto:ubqc}
\end{protocol}

As announced above, the uniform randomness of \(\theta(i)\) perfectly hides the value of \(\phi'(i)\) in \(\delta(i)\), while the value \(r(i)\) hides the measurement outcome but also the output of the computation since it is propagated by the flow and results in a Quantum One-Time-Pad of the output. 

In the original protocol from \cite{BFK09universal}, the outputs are prepared by the server in the \(\ket{+}\) state and are encrypted by the computation flow. The additional randomization of the output qubit might seem superfluous since the server can simply measure the qubit to recover the value of the state. However, including this additional randomization makes a smoother exposition of verification protocols as they can be directly built on top of UBQC (\cref{hdr:proto:ubqc}).

Note that if the output of the client's computation is classical, the set \(O\) is empty and the client only receives measurement outcomes. 

The security properties of UBQC can be summarized as follows:
\begin{theorem}[Security of UBQC]
  UBQC (\cref{hdr:proto:ubqc}) perfectly constructs \cref{hdr:res:bdqc} in the abstract cryptography framework.
\end{theorem}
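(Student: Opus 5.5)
The proof has two parts: correctness against an honest Bob ($c=0$), and a simulator for a malicious Bob ($c=1$). In both cases the target is a zero distinguishing advantage in the abstract cryptography framework, since the claim is perfect construction.

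\emph{Correctness.} I follow the intuition given before \cref{hdr:proto:ubqc}. Each $\X^{a(i)}$ on an input qubit is commuted through the $\CZ$ gates, which produces $\Z^{a_N(j)}$ on the neighbours $j$. Each $\Z(\theta(i))$ commutes with $\CZ$ and can be moved to just before the measurement of qubit $i$. The effect is that measuring in $\ket{\pm_{\delta(i)}}$ with $\delta(i)$ given by \eqref{eq:updt-ubqc} equals measuring the unencrypted open graph state in $\ket{\pm_{\phi'(i)}}$, with the outcome flipped by $r(i)$. This uses the identities $\bra{\pm}\Z(\phi)\X=\bra{\pm}\Z(-\phi)$ and $\bra{\pm}\Z(\phi)\Z=\bra{\pm}\Z(\phi+\pi)$ from \cref{hdr:sec:linegraph}. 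Because Alice removes $r(j)$ before computing $s_\X$ and $s_\Z$, the flow corrections agree with those of \cref{hdr:proto:dqc}. For the output qubits, the chosen $\theta(i)=(r(i)+a_N(i))\pi$ leaves a Pauli one-time pad, which Alice's final correction $\Z^{s_\Z+r}\X^{s_\X+a}$ removes. Hence Alice's output is $\cptp C[\rho_A]$, as in the $c=0$ branch of \cref{hdr:res:bdqc}.

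\emph{Blindness via a simulator.} For $c=1$, the simulator $\sigma$ receives the leakage $l$, i.e.\ the graph and the measurement order. It then does the following:
\begin{enumerate}
\item It sends Bob half of an EPR pair for every $i\in V$ and keeps the other halves.
\item For each $i\in O^c$ it sends a uniformly random $\delta(i)\in\Theta$ and collects the returned bit $b(i)$.
\item It collects the returned output qubits.
\end{enumerate}
Let $\cptp F$ be the map that runs the remainder of the protocol: on input Alice's $\rho_A$ and the simulator's EPR halves, it measures qubit $i$ in the basis fixed by the already published $\delta(i)$ and by the values $\phi'(i)$ computed so far, in flow order; this is the remote-state-preparation trick. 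It then performs Alice's decoding on the returned qubits. The simulator feeds $\cptp F$, together with the kept halves and Bob's returned registers (as $\rho_B$), into the resource, which outputs $\cptp F[\rho_{AB}]$ at Alice's interface.

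To argue indistinguishability, I would first note that measuring the EPR half steers Bob's qubit to $\Z(\theta(i))\ket{+}$ for a uniform $\theta(i)\in\Theta$, or to the corresponding encrypted input. Since these measurements commute with everything Bob does, they can be postponed until after the classical $\delta(i)$ have been sent. In the real protocol the joint distribution of $\delta(i)$ is uniform and independent of $\phi'(i)$, because $\theta(i)$ is uniform on $\Theta$ and $\Theta$ is closed under the affine map in \eqref{eq:updt-ubqc}. The reported outcomes are likewise masked by $r(i)$, and the inputs are quantum one-time padded by $\X^{a}\Z(\theta)$ with $\theta\in\{0,\pi\}+\ldots$. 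The real and simulated joint states are therefore identical; the argument amounts to rewriting the order in which the same random variables are sampled.

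The main obstacle is making this reordering rigorous for adaptive $\delta(i)$. Each $\delta(i)$ depends on earlier $b(j)$, which Bob may choose arbitrarily. I would handle this by induction on the flow order. At each step the relation $\theta(i)=\delta(i)-(-1)^{a(i)}\phi'(i)-(r(i)+a_N(i))\pi$ is a bijection on $\Theta$ once everything previously sampled is fixed. Sampling $\delta(i)$ uniformly and then defining $\theta(i)$ by this relation therefore gives the same distribution as the protocol. The EPR-half measurement at step $i$ can be deferred because it acts on a register Bob never touches again. Output qubits need separate care, since $\theta(i)$ there is restricted to multiples of $\pi$. In that case I would argue directly that $\Z^{r(i)}$ together with the $\X$ masks coming from the flow forms a full Pauli one-time pad, which makes the returned qubits maximally mixed from Bob's point of view.
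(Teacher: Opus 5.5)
Your proposal is correct and follows the standard route that the paper relies on; the manuscript omits the proof and gives only the one-time-pad intuition that you formalize. That route is correctness by commuting the \(\X^{a}\) and \(\Z(\theta)\) pads through the \(\CZ\) gates, and blindness via a simulator that hands out EPR halves and uniformly random \(\delta(i)\), with the deviation map \(\cptp F\) performing the deferred steering measurements and Alice's decoding. Two details need tightening when you write it out. For input qubits, \(\cptp F\) must teleport \(\rho_A\) into the EPR halves and read the \(\X\)-byproduct (your \(a(i)\)) before choosing the rotated basis, because the required angle depends on \((-1)^{a(i)}\) and on \(a_N\). Output halves need no separate one-time-pad argument: an \(\X\)-basis measurement suffices, and its outcome plays the role of \(r(i)\oplus a_N(i)\).
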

\subsection{Blind Computing on the 3-Qubit Line Graph}
\label{sec:org9beed48}
We use the example described above to demonstrate an execution of this protocol. Alice would like to hide her input state \(\ket{\psi}\), the measurement angles \(\phi(1)\) and \(\phi(2)\) and the output. For the input, Alice uses a variant of the Quantum One-Time-Pad, sampling a random bit \(a(1) \in \bin\) and a random angle \(\theta(1)\) and applying the operation \(\Z_i(\theta(i))\X_i^{a(i)}\) to \(\ket{\psi}\). For the non-input qubits she sets \(a(2) = a(3) = 0\). For the non-output qubit she samples at random \(\theta(2) \in \Theta\) and for the output vertex she samples at random \(\theta(3) \in \{0, \pi\}\). she creates the states \(\ket{+_{\theta(2)}}\) and \(\ket{+_{\theta(3)}}\) and sends these two states and its encrypted input to Bob.

Bob receives these three qubits and performs the entangling operations \(\CZ_{1, 2}\) and \(\CZ_{2, 3}\) as above. For now the security is guaranteed since the input is perfectly encrypted and the other qubits are in random states uncorrelated to the computation. However, Alice still desires to run her computation and must do so through the encryption. To do so, she will instruct Bob to measure the qubits \(1\) and \(2\) with the angle \(\delta(i) = (-1)^{a(i)}\phi'(i) + \theta(i) + (r(i) + a_N(i)) \pi\), where \(a_N(i)\) is the sum of values of \(a(j)\) for \(j\) neighbors of \(i\) and \(r(i)\) is a random bit.

We can see that this performs the same computation as the MBQC example described above. The \(\Z\) rotation encryption commutes with the \(\CZ\) operations and cancels out the encryption of the measurement angle. On the other hand, when the \(\X\) encryption of the input commutes with the \(\CZ\) gates, it creates an additional \(\Z\) on the neighbor which is then taken care of by \(a_N(2) = a(1)\). Commuting the \(\X\) to the end also flips the sign of the measurement angle, which is why \((-1)^{a(i)}\) appears in from of \(\phi'(i)\) in the expression of \(\delta(i)\). In the end the computation is the same as the unencrypted one above so long as Alice corrects the measurement outcomes returned by Bob to account for the additional \(r(i)\) by flipping the outcome \(b(i)\) if \(r(i) = 1\). If \(\theta(3) = \pi\), Alice must also apply \(\Z\) to the output to compensate. This process is summarized in Figure \ref{fig:ubqc-cor}.

\begin{figure*}[h]%
  \subfloat[]{
    \label{fig:ubqc-cor1}
    \makebox[\textwidth]{
      $
      \Qcircuit @C=1.0em @R=.7em {
        \lstick{\ket{\psi}} & \gate{\X^{a(1)}} & \gate{\Z(\theta(1))} & \ctrl{1}  & \qw       & \gate{\Z( - (-1)^{a(1)}\phi'(1) - \theta(1) + r(1) \pi)} & \qw & \measureD{\pm} \\
        \lstick{\ket{+}}    & \qw              & \gate{\Z(\theta(2))} & \ctrl{-1} & \ctrl{1}  & \gate{\Z( - \phi'(2) - \theta(2) + (r(2) + a(1)) \pi)}   & \qw & \measureD{\pm} \\
        \lstick{\ket{+}}    & \qw              & \gate{\Z(\theta(3))} & \qw       & \ctrl{-1} & \qw                                                      & \qw & \qw
      }
      $
    }
  }\\ \vspace{5mm}
  \subfloat[]{
    \label{fig:ubqc-cor2}
    \makebox[\textwidth]{
      $
      \Qcircuit @C=1.0em @R=.7em {
        \lstick{\ket{\psi}} & \gate{\X^{a(1)}} & \ctrl{1}  & \qw       & \gate{\Z( - (-1)^{a(1)}\phi'(1) + r(1) \pi)} & \qw                  & \measureD{\pm} \\
        \lstick{\ket{+}}    & \qw              & \ctrl{-1} & \ctrl{1}  & \gate{\Z( - \phi'(2) + (r(2) + a(1)) \pi)}   & \qw                  & \measureD{\pm} \\
        \lstick{\ket{+}}    & \qw              & \qw       & \ctrl{-1} & \qw                                          & \gate{\Z(\theta(3))} & \qw
      }
      $
    }
  }\\ \vspace{5mm}
  \subfloat[]{
    \label{fig:ubqc-cor3}
    \makebox[\textwidth]{
      $
      \Qcircuit @C=1.0em @R=.7em {
        \lstick{\ket{\psi}} & \ctrl{1}  & \qw       & \gate{\Z( - \phi'(1) + r(1) \pi)} & \qw                  & \measureD{\pm} \\
        \lstick{\ket{+}}    & \ctrl{-1} & \ctrl{1}  & \gate{\Z( - \phi'(2) + r(2) \pi)} & \qw                  & \measureD{\pm} \\
        \lstick{\ket{+}}    & \qw       & \ctrl{-1} & \qw                               & \gate{\Z(\theta(3))} & \qw
      }
      $
    }
  }%
  \caption{\footnotesize Correctness of UBQC for three-vertex linear graph. (a) UBQC Protocol with the explicit values of $\delta(i)$.
    (b) The $\Z$ encryption commutes through the $\CZ$ gates and is canceled out by the later $\Z$ rotation.
    (c) The input $\X$ encryption commutes through the $\CZ$ gates but adds a $\Z$ on qubit $2$, which is canceled out by the $a(1)\pi$ inside the rotation. The $\X$ on qubit $1$ is commuted through the rotation and absorbed by the measurement. The result is Alice's desired MBQC computation up to Pauli corrections and bit-flips.}
  \label{fig:ubqc-cor}
\end{figure*}
\section{Verified and Blind Delegation}
\label{sec:prelim-vqc}
Integrity complements privacy by allowing the verifier---formerly the client in the blind delegation scenario---to detect deviations in the delegated computation. Verifiability and blindness are the combination of both guarantees: the prover---formerly the sever---learns nothing beyond an explicitly declared leakage and cannot alter the outcome of the delegated computation without being caught. Hence, any malicious behavior either has not effect on the computation or forces the verifier to abort.

We formalize the verifiability and blindness as a resource---the \emph{Secure Delegated Quantum Computation} resource (SDQC)---involving two parties, a verifier and a prover. We then outline the dotted-triple-graph construction of \cite{KW17optimised} that realizes it.  Security intuition and formal theorem conclude the subsection.
\subsection{Verifiability and Blindness}
\label{sec:org2dc1f56}
Although verifiability and blindness are distinct and unrelated properties of delegated quantum computation protocols, they often come together. Indeed, as we will see later, verifiability is obtained with the help of blindness, allowing to interleave the computation of interest with small predictable ones that serve to test the behavior of the prover. The combination of both properties is captured by the following resource, where Alice is playing the role of the verifier and Bob that of the prover:
\begin{resource}[Secure Delegated Quantum Computation (SDQC)]
  \begin{algorithmic}[0]
    \STATE \textbf{Public Information:} Nature of the leakage $l$.

    \STATE \textbf{Permitted leakage $l$:} A set of computations $\mathfrak{C}$, and two subspaces, one for inputs $\Pi_{I,\mathfrak{C}}$ and one for outputs $\Pi_{O, \mathfrak{C}}$.

    \STATE \textbf{Inputs:} 
    \begin{itemize}
    \item Alice inputs the classical description of a computation $\cptp C$ from subspace $\Pi_{I,\mathfrak{C}}$ to subspace $\Pi_{O,\mathfrak{C}}$ and a quantum state $\rho_A$ in $\Pi_{I,\mathfrak{C}}$.
    \item Bob chooses whether or not to deviate. This interface is filtered by two control bits $(e, c)$ (set to $0$ by default for honest behavior).
    \end{itemize}

    \STATE \textbf{Computation by the Resource:}
    \begin{enumerate}
    \item If $e = 1$, the Resource sends the leakage $l$ to Bob's interface; if it receives $c = 1$, the Resource outputs $\dyad{\bot}\otimes\dyad{\abort}$ at Alice's output interface.
    \item Otherwise it outputs $\cptp{C}[\rho_A] \otimes \dyad{\accept}$ at Alice's output interface.
    \end{enumerate}
  \end{algorithmic}
  \label{hdr:res:sdqc}
\end{resource}
The leakage \(l\) is defined as for blindness and consists of the graph that would be used for the computation and the corresponding measurement order. Similarly, the subspaces \(\Pi_{I,\mathfrak{C}}\) and \(\Pi_{O,\mathfrak{C}}\) specify how the input and output information is encoded into the graph state.

As we can readily notice, this resource is secure-by-design:
\begin{itemize}
\item Bob gets at most the leakage \(l\).
\item Whenever it decides to cheat by setting \(e=1\), he can then only force to abort the computation by setting \(c=1\). In case \(e=0\) or \(c=0\), the result of the computation that is transmitted to Alice is correct.
\end{itemize}
\subsection{Protocol for Verified Blind Delegated Quantum Computations}
\label{sec:orgde0f8b8}
We present the dotted-triple-graph VBQC protocol of \cite{KW17optimised}. It is an optimized construction of SDQC (\cref{hdr:res:sdqc}) that shares many features with the original protocol described in \cite{FK17unconditionally}. It has the advantage of yielding streamlined arguments and less complex pictorial representations.

We start by considering that the computation that the verifier wants to perform is protected by an error-correcting code of distance \(d_{\min}\) in a fault-tolerant way, and is described as an MBQC pattern running on the graph state defined by \(G = (V,E)\). There, the subspaces \(\Pi_\mathfrak{I,\mathfrak{C}}\) and \(\Pi_{O,\mathfrak{C}}\) are non-trivial and describe how the verifier's qubits are embedded into the error-correcting code. This being given, we build the \emph{Triple Graph} \(T(G) = (T(V), T(E))\). It is obtained from \(G\) by replacing every vertex \(v \in V\) by a set of three vertices \(P_v = (v_1, v_2, v_3)\), called \emph{primary vertices}, and each edge \((v, w) \in E\) by the nine edges \((v_i, w_j)\) for \(i, j \in \{1, 2, 3\}\). In effect, this step constructs a graph, with three times the initial number of vertices, and that contains \(3^{|E|}\) overlapping copies of \(G\) as subgraphs. Then we build the \emph{Dotted-Triple Graph} \(DT(G)\) by replacing each edge \(e_{i, j} = (v_i, w_j) \in T(E)\) by a vertex \(v_e\) and two edges \((v_i, v_{e_{i, j}})\) and \((v_{e_{i, j}}, w_i)\). These are called the \emph{added vertices}. The previous transformations are summarized in \cref{hdr:fig:dotted-triple-graph}.
\begin{marginfigure}[-12cm]
  \resizebox{6cm}{!}{
\begin{tikzpicture}

\matrix[layer] (m) {
    \node[vertex, scale=0.7] (a1) {}; &
    \node[vertex, scale=0.7] (a2) {}; &
    \node[vertex, scale=0.7] (a3) {}; \\
};

\draw[edge] (a1)--(a2)--(a3);

\end{tikzpicture}

\vspace{1cm}}
  \par\bigskip\bigskip
  \resizebox{6cm}{!}{
\begin{tikzpicture}

\matrix[layer] (m) {
    \node[vertex, scale=0.7] (l1) {}; & \node[vertex, scale=0.7] (c1) {}; & \node[vertex, scale=0.7] (r1) {}; \\
    \node[vertex, scale=0.7] (l2) {}; & \node[vertex, scale=0.7] (c2) {}; & \node[vertex, scale=0.7] (r2) {}; \\
    \node[vertex, scale=0.7] (l3) {}; & \node[vertex, scale=0.7] (c3) {}; & \node[vertex, scale=0.7] (r3) {}; \\
};

\foreach \i in {1,2,3}{
    \foreach \j in {1,2,3}{
        \draw[edge] (l\i) -- (c\j);
    }
}

\foreach \i in {1,2,3}{
    \foreach \j in {1,2,3}{
        \draw[edge] (c\i) -- (r\j);
    }
}

\end{tikzpicture}}
  \par\bigskip\bigskip
  \resizebox{6cm}{!}{
\begin{tikzpicture}

\matrix[layer] (m) {
    \node[vertex, scale=0.7] (l1) {}; & \node[vertex, scale=0.7] (c1) {}; & \node[vertex, scale=0.7] (r1) {}; \\
    \node[vertex, scale=0.7] (l2) {}; & \node[vertex, scale=0.7] (c2) {}; & \node[vertex, scale=0.7] (r2) {}; \\
    \node[vertex, scale=0.7] (l3) {}; & \node[vertex, scale=0.7] (c3) {}; & \node[vertex, scale=0.7] (r3) {}; \\
};

\foreach \i in {1,2,3}{
    \foreach \j in {1,2,3}{
        \node[vertex,scale=0.5] (midL\i\j) at ($(l\i)!0.2!(c\j)$) {};
        \draw[edge] (l\i) -- (midL\i\j) -- (c\j);
    }
}

\foreach \i in {1,2,3}{
    \foreach \j in {1,2,3}{
        \node[vertex,scale=0.5] (midR\i\j) at ($(c\i)!0.2!(r\j)$) {};
        \draw[edge] (c\i) -- (midR\i\j) -- (r\j);
    }
}

\end{tikzpicture}}
  \caption{\raggedright The graph $G$ (up), the corresponding triple graph (middle) and dotted triple graph (down).}
  \label{hdr:fig:dotted-triple-graph}
\end{marginfigure}

To delegate the computation in a verifiable fashion, the verifier will assign three possible labels---computation, dummy and trap---for each vertex of the graph.  More precisely, for each set \(P_v\), there must be one computation, one dummy and one trap vertex assigned at random. The type of the added vertices is chosen so that the two computation vertices are linked with another computation vertex and two dummies are linked by a trap. All other added vertices are dummies. (See \cref{hdr:fig:dtg-labels}).\label{hdr:page:dummy}

\begin{marginfigure}[-4cm]
  \resizebox{6cm}{!}{\begin{tikzpicture}

\definecolor{comp}{RGB}{60,180,75}   
\definecolor{trap}{RGB}{220,50,47}   
\definecolor{dummy}{RGB}{255,255,255}   

\matrix[layer] (m) {
  \node[vertex, scale=0.7, fill=comp ] (l1) {}; &
  \node[vertex, scale=0.7, fill=trap ] (c1) {}; &
  \node[vertex, scale=0.7, fill=dummy] (r1) {}; \\

  \node[vertex, scale=0.7, fill=trap ] (l2) {}; &
  \node[vertex, scale=0.7, fill=comp ] (c2) {}; &
  \node[vertex, scale=0.7, fill=comp ] (r2) {}; \\

  \node[vertex, scale=0.7, fill=dummy] (l3) {}; &
  \node[vertex, scale=0.7, fill=dummy] (c3) {}; &
  \node[vertex, scale=0.7, fill=trap ] (r3) {}; \\
};


        \node[vertex,scale=0.5, fill=dummy] (midL11) at ($(l1)!0.2!(c1)$) {};
        \draw[edge] (l1) -- (midL11) -- (c1);
        \node[vertex,scale=0.5, fill=comp ] (midL12) at ($(l1)!0.2!(c2)$) {};
        \draw[edge] (l1) -- (midL12) -- (c2);
        \node[vertex,scale=0.5, fill=dummy] (midL13) at ($(l1)!0.2!(c3)$) {};
        \draw[edge] (l1) -- (midL13) -- (c3);
        \node[vertex,scale=0.5, fill=dummy] (midL21) at ($(l2)!0.2!(c1)$) {};
        \draw[edge] (l2) -- (midL21) -- (c1);
        \node[vertex,scale=0.5, fill=dummy] (midL22) at ($(l2)!0.2!(c2)$) {};
        \draw[edge] (l2) -- (midL22) -- (c2);
        \node[vertex,scale=0.5, fill=dummy] (midL23) at ($(l2)!0.2!(c3)$) {};
        \draw[edge] (l2) -- (midL23) -- (c3);
        \node[vertex,scale=0.5, fill=dummy] (midL31) at ($(l3)!0.2!(c1)$) {};
        \draw[edge] (l3) -- (midL31) -- (c1);
        \node[vertex,scale=0.5, fill=dummy] (midL32) at ($(l3)!0.2!(c2)$) {};
        \draw[edge] (l3) -- (midL32) -- (c2);
        \node[vertex,scale=0.5, fill=trap ] (midL33) at ($(l3)!0.2!(c3)$) {};
        \draw[edge] (l3) -- (midL33) -- (c3);

        \node[vertex,scale=0.5, fill=dummy] (midR11) at ($(c1)!0.2!(r1)$) {};
        \draw[edge] (c1) -- (midR11) -- (r1);
        \node[vertex,scale=0.5, fill=dummy] (midR12) at ($(c1)!0.2!(r2)$) {};
        \draw[edge] (c1) -- (midR12) -- (r2);
        \node[vertex,scale=0.5, fill=dummy] (midR13) at ($(c1)!0.2!(r3)$) {};
        \draw[edge] (c1) -- (midR13) -- (r3);
        \node[vertex,scale=0.5, fill=dummy] (midR21) at ($(c2)!0.2!(r1)$) {};
        \draw[edge] (c2) -- (midR21) -- (r1);
        \node[vertex,scale=0.5, fill=comp ] (midR22) at ($(c2)!0.2!(r2)$) {};
        \draw[edge] (c2) -- (midR22) -- (r2);
        \node[vertex,scale=0.5, fill=dummy] (midR23) at ($(c2)!0.2!(r3)$) {};
        \draw[edge] (c2) -- (midR23) -- (r3);
        \node[vertex,scale=0.5, fill=trap ] (midR31) at ($(c3)!0.2!(r1)$) {};
        \draw[edge] (c3) -- (midR31) -- (r1);
        \node[vertex,scale=0.5, fill=dummy] (midR32) at ($(c3)!0.2!(r2)$) {};
        \draw[edge] (c3) -- (midR32) -- (r2);
        \node[vertex,scale=0.5, fill=dummy] (midR33) at ($(c3)!0.2!(r3)$) {};
        \draw[edge] (c3) -- (midR33) -- (r3);

\end{tikzpicture}}
  \caption{\raggedright A possible assignment of computation (green), trap (red) and dummy (white) vertices on a subgraph of the dotted triple graph corresponding to two consecutive edges in the original graph.}
  \label{hdr:fig:dtg-labels}
\end{marginfigure}

By performing these steps, we have inserted dummies that can be prepared by the verifier in order to implement break operators leaving trap vertices isolated from the rest of the remaining graph. The added vertices with a computation label can then be used to perform bridge operations and recover \(G\), and finish the computation (See \cref{hdr:fig:broken-dtg}). The trap vertices are then measured in the basis where they have been prepared and their measurement checked. In case there is a discrepancy between the outcome and its expected value the computation is aborted. 

\begin{marginfigure}
  \resizebox{6cm}{!}{\begin{tikzpicture}

\definecolor{comp}{RGB}{60,180,75}   
\definecolor{trap}{RGB}{220,50,47}   
\definecolor{dummy}{RGB}{255,255,255}   

\matrix[layer] (m) {
  \node[vertex, scale=0.7, fill=comp ] (l1) {}; &
  \node[vertex, scale=0.7, fill=trap ] (c1) {}; &
  \node[vertex, scale=0.7, fill=dummy] (r1) {}; \\

  \node[vertex, scale=0.7, fill=trap ] (l2) {}; &
  \node[vertex, scale=0.7, fill=comp ] (c2) {}; &
  \node[vertex, scale=0.7, fill=comp ] (r2) {}; \\

  \node[vertex, scale=0.7, fill=dummy] (l3) {}; &
  \node[vertex, scale=0.7, fill=dummy] (c3) {}; &
  \node[vertex, scale=0.7, fill=trap ] (r3) {}; \\
};


\node[vertex,scale=0.5, fill=dummy] (midL11) at ($(l1)!0.2!(c1)$) {};
\node[vertex,scale=0.5, fill=comp ] (midL12) at ($(l1)!0.2!(c2)$) {};
\draw[edge] (l1) -- (midL12) -- (c2);
\node[vertex,scale=0.5, fill=dummy] (midL13) at ($(l1)!0.2!(c3)$) {};
\node[vertex,scale=0.5, fill=dummy] (midL21) at ($(l2)!0.2!(c1)$) {};
\node[vertex,scale=0.5, fill=dummy] (midL22) at ($(l2)!0.2!(c2)$) {};
\node[vertex,scale=0.5, fill=dummy] (midL23) at ($(l2)!0.2!(c3)$) {};
\node[vertex,scale=0.5, fill=dummy] (midL31) at ($(l3)!0.2!(c1)$) {};
\node[vertex,scale=0.5, fill=dummy] (midL32) at ($(l3)!0.2!(c2)$) {};
\node[vertex,scale=0.5, fill=trap ] (midL33) at ($(l3)!0.2!(c3)$) {};

\node[vertex,scale=0.5, fill=dummy] (midR11) at ($(c1)!0.2!(r1)$) {};
\node[vertex,scale=0.5, fill=dummy] (midR12) at ($(c1)!0.2!(r2)$) {};
\node[vertex,scale=0.5, fill=dummy] (midR13) at ($(c1)!0.2!(r3)$) {};
\node[vertex,scale=0.5, fill=dummy] (midR21) at ($(c2)!0.2!(r1)$) {};
\node[vertex,scale=0.5, fill=comp ] (midR22) at ($(c2)!0.2!(r2)$) {};
\draw[edge] (c2) -- (midR22) -- (r2);
\node[vertex,scale=0.5, fill=dummy] (midR23) at ($(c2)!0.2!(r3)$) {};
\node[vertex,scale=0.5, fill=trap ] (midR31) at ($(c3)!0.2!(r1)$) {};
\node[vertex,scale=0.5, fill=dummy] (midR32) at ($(c3)!0.2!(r2)$) {};
\node[vertex,scale=0.5, fill=dummy] (midR33) at ($(c3)!0.2!(r3)$) {};

\end{tikzpicture}}
  \caption{\raggedright Dotted triple graph after breaking operations at the dummy qubits (up) and after the bridging operations are performed (down). Note that the bridging operations are indeed performed blindly while the computation is carried on so that the prover cannot deduce where the computation and trap qubits are located.}
  \label{hdr:fig:broken-dtg}
\end{marginfigure}

This corresponds to the following protocol:
\begin{protocol}[Verified Blind Quantum Computation (VBQC)]
  \begin{algorithmic}[0]
    \STATE \textbf{Public Information:} A graph $G$ and a fault-tolerant error-correcting scheme with distance $d_{\min}$.

    \STATE \textbf{Alice's Input:} A computation $C$ encoded in a quantum error-correcting code and that can be implemented by an MBQC pattern on a graph $G$.

    \noindent The following steps are performed using UBQC

    \STATE \textbf{Preparation:}
    \begin{itemize}
    \item Input qubits are provided by Alice;
    \item All the non-input qubits corresponding to computation or trap vertices are prepared in $\ket +$;
    \item All qubits corresponding to dummy vertices are prepared in $\ket 0$.
    \end{itemize}

    \STATE \textbf{Pattern execution:} 
    \begin{itemize}
    \item Added qubits are measured: computation ones in $\Y$, so as to perform a bridge operation, traps in $\X$ and dummies with a random angle chosen from $\Theta$,
    \item Primary vertices are measured: computation ones according to the pattern taking into account dependencies from the original pattern and those from the bridge measurements, traps in $\X$, dummies with a random angle chosen from $\Theta$.
    \item For the last layer of primary vertices, Alice performs the measurement of the trap qubits.
    \end{itemize}

    \STATE \textbf{Verification:} If the values of the trap measurements are all $0$ corresponding to obtaining the outcome $\ket +$, Alice accepts and keeps the output qubits. Otherwise she aborts.
  \end{algorithmic}
  \label{hdr:proto:vbqc}
\end{protocol}
Note that, compared to UBQC (\cref{hdr:proto:ubqc}), the preparation now requires not only \(\ket{+_\theta}\) states but also computational basis states \(\ket 0, \ket 1\).
\subsection{Security}
\label{sec:orgfa898cb}
In the protocol above, blindness is inherited from UBQC and preserved by the bridge and break operations: these do not reveal locations of computation, traps or dummy qubits.

Verifiability is then a consequence of the predictability of the trap outcomes when decoded by Alice, as well as ignorance of their location for Bob. As a matter of fact,
\begin{itemize}
\item each qubit is a trap with probability at least \(1/9\) so that the most efficient single-qubit attack has a probability at most \(8/9\) of being undetected.
\item the fault-tolerant encoding with minimum distance \(d_{\min}\) forces Bob to attack at least \(c/2\) positions to change the result of the computation. Hence, the probability of failing to detect a \emph{harmful} deviation---a deviation that would change the result of the unencoded computation---scales as \(\left( \frac{8}{9} \right)^{c/2}\).
\end{itemize}

The AC security of \cref{hdr:proto:vbqc} can be shown using the techniques from \cite{DFPR14composable} and yields:
\begin{theorem}
  \cref{hdr:proto:vbqc} \(\epsilon\)-constructs SDQC (\cref{hdr:res:sdqc}) for arbitrary quantum computations with \(\epsilon\) negligibly small in the distance \(d_{\min}\) of the error-correcting code, while only leaking the graph \(G = (V, E)\) and the order of the measurements used in the MBQC computation.
  \label{thm:full-sdqc}
\end{theorem}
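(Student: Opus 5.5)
The proof splits security in the abstract cryptography sense into three parts: correctness for an honest prover, blindness, and verifiability, all handled with the standard techniques of \cite{DFPR14composable}.

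\emph{Correctness.} With an honest Bob, the dummy qubits prepared in $\ket 0$ act as break operators and the computation-labelled added vertices measured in $\Y$ act as bridge operators (\cref{hdr:sec:bridge-and-brick}). The graph left on the computation vertices is therefore exactly $G$, with extra Pauli by-products that Alice absorbs into her angle updates. The traps end up isolated in $\ket{+_\theta}$ states and always return the expected outcome. Bob's execution thus coincides with UBQC on $G$, and the output is $\cptp C[\rho_A]\otimes\dyad{\accept}$.

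\emph{Blindness.} I would reduce this to the security of UBQC, which perfectly constructs \cref{hdr:res:bdqc}. Dummies need separate care. A dummy prepared in $\ket{0}$ or $\ket{1}$ with a uniform bit, measured at a uniform angle in $\Theta$, is indistinguishable from Bob's side from a $\ket{+_\theta}$ qubit with uniform $\theta$. The reason is that both ensembles average to the maximally mixed state jointly with the reported angle $\delta$. Consequently the trap/dummy/computation labelling is perfectly hidden, and the simulator only needs the leakage, namely the graph and the measurement order.

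\emph{Verifiability.} This is the core and the main obstacle. Write Bob's deviation as a unitary on his register, and use blindness to argue that its effect on Alice's qubits, conditioned on her secret randomness, is a mixture of Pauli attacks. For this I would twirl with the one-time-pad keys $\theta$ and $r$, and with the random dummy bits, in the spirit of the Pauli twirling lemma. The twirl requires a derivation of its own, because $\Theta$ only gives a discrete group and the $\Z(\pi/4)$ part has to be handled via the $r$ bit.

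Split the resulting Pauli attacks by weight on the primary and added qubits.
\begin{itemize}
\item Attacks whose non-trivial part on the computation qubits has fewer than $d_{\min}/2$ relevant locations are corrected by the fault-tolerant code, so they do not harm the output.
\item Attacks with many locations are detected with high probability. Each location independently hits a trap with probability at least $1/9$ under the random labelling, and a Pauli $\X$ or $\Y$ component on a trap flips its outcome.
\end{itemize}
The delicate counting step is that the labellings of different $P_v$ blocks are independent, so disjoint locations are caught independently. This gives an undetected-harmful probability of at most $(8/9)^{\Omega(d_{\min})}$.

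\emph{Simulator.} Finally I would build a simulator that runs Bob's attack on dummy UBQC inputs and estimates whether the induced Pauli is harmless or detected. It sets $c=1$ exactly when a trap fails, and it bounds the distinguishing advantage by the undetected-harmful probability. This yields $\epsilon$ negligible in $d_{\min}$.
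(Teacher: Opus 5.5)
Your outline follows the paper's own argument: blindness inherited from UBQC and preserved by the break/bridge structure, reduction of the attack to Pauli deviations, each location being a trap with probability at least $1/9$ so that a harmful attack escapes detection with probability $(8/9)^{\Omega(d_{\min})}$, and a simulator in the style of \cite{DFPR14composable} that aborts exactly when its own simulated traps fail. Two small fixes are needed. First, a $\ket{+_\theta}$ trap measured in its own basis is flipped by the $\Z$-component of the deviation ($\Z$ or $\Y$), which is the same component that flips computation outcomes, and not by $\X$. Second, added-vertex labels are determined by the labels of their two endpoint triples, so you may treat locations as independent only when they come from blocks that share no vertex of $G$, not location by location.
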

\chapter{Challenges}
\label{hdr:sec:challenges}
\lettrine[refstring,lines=3,lraise=0.15]{T}{he current generation} of verification protocols shows that it is possible for a computationally weak verifier to detect if a  prover is dishonest.  That achievement is intellectually reassuring, at least from a philosophy  of science perspective. Yet the associated techniques remain far too cumbersome for deployment on any foreseeable hardware.  Overheads in qubit count and extreme sensitivity to noise keep these schemes into the realm of \emph{Gedanken} experiments. Practical roll-out demands sharper theoretical tools.

This chapter isolates the obstacles that matter most---not an exhaustive catalog of open questions, but the bottlenecks that must be cleared before verification can migrate from theory pages to laboratory racks and beyond. It also offers perspectives on doing more with the same techniques and resources, making the approach more attractive for hardware vendors to incorporate into their roadmaps. 
\section{Lack of a modular and composable framework}
\label{sec:org7161a33}
Early verification schemes were crafted case-by-case.  Each paper fixed a particular trust model, chose its own security definition, and proofs were monolithic arguments, sometimes redoing what was done already elsewhere, save for some optimizations.  Cross-referencing those proofs---or even lining up their security definitions---quickly becomes an exercise in translation. This ad-hoc landscape has two practical drawbacks:
\begin{itemize}
\item \emph{Protocol design stagnates.} When blindness, trap testing, and error correction are intertwined, tuning any single ingredient forces a return to first principles. A higher-dimensional code, a leaner trap schedule, or a different communication pattern all trigger a full re-proof.  The result is slow iteration and limited cumulative progress.
\item \emph{Composing protocols is generally unsafe.} Quantum accelerators will likely service only a slice of a larger hybrid workflow.  If the quantum slice is verified, then its proofs and indeed all its components need to be composable. This not only ensures the integrity of the quantum computation, but also renders the protocols secure even when used inside hybrid workflows. This ensure that modular engineering can be applied at this larger scale.
\end{itemize}

Without a framework that isolates the various required functionalities, both researchers and practitioners face an explosion of bespoke proofs. \cref{hdr:sec:framework} introduces such a framework. It casts each capability as an abstract cryptography resource with clear interfaces, allowing future optimizations to snap in without reopening the entire security ledger. In addition, it provides a compiler which eliminates the need for the protocol designer to go into security proofs but instead replaces it with checking properties of error detection codes. \cite{DFPR14composable} recognized the need for providing composability but took a slightly  different route than the one we present: its objective was to show how non-composable proofs with \emph{local} criteria can nonetheless yield composable security when additional constraints are added. 
\section{Prover side overhead}
\label{sec:org0c8fbd7}
\subsection{Graph size inflation}
\label{sec:org194cf40}
The crux for turning the UBQC protocol into a verification protocol is to insert traps while maintaining both universality and blindness. This heavily relies on the bridge and break operators that allow to start from a larger graph and reduce it to the desired one. Because both operators are single-qubit transforms that are expressible within the \(XY\) plane, they respect the blindness guarantees of UBQC.

The canonical recipe from \cite{FK17unconditionally} begins with a dotted-complete graph able to host three disjoint copies of the target computation graph \(G\).  One copy executes the algorithm; the other two carry traps, one on the original vertices and one on the dotted vertices. Traps thus probabilistically blanket every physical location (See \cref{hdr:fig:dotted-complete-graph}).

\begin{marginfigure}
  \resizebox{6cm}{!}{\begin{tikzpicture}
\node[vertex, scale=0.5] (v1) at (0.7,1.2) {};   
\node[vertex, scale=0.5] (v2) at (-1,0) {};    
\node[vertex, scale=0.5] (v3) at (1,0) {};     
\node[vertex, scale=0.5] (v4) at (0,-1.2) {};  

\draw[edge] (v1)--(v2);
\draw[edge] (v1)--(v3);
\draw[edge] (v1)--(v4);

\end{tikzpicture}}
  \par\bigskip\bigskip
  \resizebox{6cm}{!}{\begin{tikzpicture}

\node[vertex, scale=0.5] (v1) at (0.7,1.2) {};
\node[vertex, scale=0.5] (v2) at (-1,0) {};
\node[vertex, scale=0.5] (v3) at (1,0) {};
\node[vertex, scale=0.5] (v4) at (0,-1.2) {};

\node[subdiv] (a12) at ($(v1)!0.4!(v2)$) {};
\draw[edge] (v1)--(a12)--(v2);

\node[subdiv] (a13) at ($(v1)!0.4!(v3)$) {};
\draw[edge] (v1)--(a13)--(v3);

\node[subdiv] (a14) at ($(v1)!0.4!(v4)$) {};
\draw[edge] (v1)--(a14)--(v4);

\node[subdiv gray] (b23) at ($(v2)!0.4!(v3)$) {};
\draw[added edge] (v2)--(b23)--(v3);

\node[subdiv gray] (b34) at ($(v3)!0.4!(v4)$) {};
\draw[added edge] (v3)--(b34)--(v4);

\node[subdiv gray] (b24) at ($(v2)!0.4!(v4)$) {};
\draw[added edge] (v2)--(b24)--(v4);

\end{tikzpicture}}
  \caption{Initial graph (up) corresponding dotted-complete graph (down).}
  \label{hdr:fig:dotted-complete-graph}
\end{marginfigure}

The main problem with the dotted-complete construction is that it can introduce up to  \(O(|V|^2)\) additional edges, a quadratic swell that might cancel out any effort to use quantum computers for advantages below quadratic. Several refinements have been proposed to mitigate this overhead :
\begin{itemize}
\item Hybrid protocols that invoke VBQC only on small sub-instances such as those presented in \cite{KDK15optimising}.
\item The dotted-triple graph of \cite{KW17optimised}, which replaces global completeness with structured triples.
\end{itemize}

Unfortunately, they don't fully remove the overhead on the prover's side, yielding a trade-off between the security guarantee and the number of qubits available for computing. The circuit-model variant \cite{B18how} likewise aims to lower the overhead for the prover, but does not provide composability, which hinders its practical usefulness.
\subsection{Security Amplification Demands Heavy Machinery}
\label{sec:org0043671}
To suppress the probability of undetected deviation, VBQC embeds the logical circuit in an error-correcting code and requires three dotted copies of this encoded graph.  The logic is straightforward:
\begin{itemize}
\item A low-weight attack on encoded data is corrected by the code.
\item A high-weight attack must also disturb several traps, whose locations are hidden by blindness and therefore unpredictable.
\item Even if each individual trap detects with constant probability, the probability that harmful deviations are undetected falls  exponentially with the code distance.
\end{itemize}
As a consequence, this overhead grows with the desired security parameter: a larger code distance or a lower targeted security error translates directly into more physical qubits and gates on the prover side. For providers with fixed hardware budgets, the consequence is again a smaller effective computing power.

A smoother path would decouple amplification from graph size. The authors of \cite{KW17optimised} note that classical repetition suffices for amplifying purely classical computations, yet it keeps the triple graph unchanged for the classical case. \cref{hdr:sec:rvbqc} revisits this choice and shows that repetition alone can deliver cryptographic security with no qubit overhead per repetition when compared to the unprotected computation
\section{Verifier side overhead}
\label{sec:orga55bdd4}
From the theorist's perspective, UBQC and VBQC protocols are light\-weight for the verifier. Yet the hardware it presupposes is non-trivial.  The verifier must:
\begin{itemize}
\item Prepare single-qubits---say photons---on demand.
\item Choose one of 10 states \(\{\ket+_{\theta}, \theta \in \Theta\}  \cup \{\ket 0, \ket 1\}\).
\item Deliver those states to the prover with sub-gate timing precision and deterministically.
\end{itemize}
Single-photon sources that approach these specifications exist, but they are bulky, cryogenic, and far from commodity devices.  These requirements therefore block even small-scale proof-of-concepts.

A radical alternative is to remove the verifier’s quantum duties altogether.  Classi\-cal-verifier protocols do exist; they rely on post-quantum cryptography and interactive proofs \cite{M18classical}. But they have their own trade-offs:
\begin{itemize}
\item \emph{Prover overhead.} Preparing and measuring many more qubits per gate before any error correction as a consequence of the encryption of the qubits.
\item \emph{Security model.} Statistical security is replaced by security based on computational hardness assumptions.
\item \emph{Absence of Composability.} Remote State Preparation (RSP)---a crucial ingredient for delegated quantum computation---cannot be securely constructed under computational assumption in a composable framework \cite{BCCK20security}.
\end{itemize}
Hence a gap remains: a protocol with a quantum-minimal verifier yet that is still practical, statistically and composably secure.

\cref{hdr:sec:dummyless,hdr:sec:rotations,hdr:sec:wcp} address this question by either:
\begin{itemize}
\item Reducing the number of states that need to be prepared by the verifier.
\item Removing the need to prepare states, instead relying on states provided by the prover and only applying rotations on it.
\item Replacing single-photon sources with weak-coherent-pulses generators.
\end{itemize}
\section{Over-sensitivity to physical noise}
\label{sec:org1520f12}
Current VBQC schemes deliberately leave traps unencoded: a single-qubit phase-flip on a trap must reveal tampering.  The strategy works against a malicious prover, but it backfires against ordinary device noise.  As gate-level errors occur independently with probability \(p\), a circuit of \(N\) gates therefore accumulates \(O(p N)\) random phase flips---exactly the kind of events that an honest prover cannot avoid.

Because the protocols treat every unexpected trap outcome as dishonesty, the abort probability is overwhelming with circuit size.  Proof-of-concept demonstrations survive, but a computation that stretches into the few-hundred-qubit or few-thousand-gate regime is likely to abort on nearly every run.  The protocol scales in perfect qubit count---increasing the computation size does not demand more repetitions or larger logical encoding for a given targeted security error---, yet not when noise is taken in account. In short, known verification protocols would turn a possibly insecure noisy quantum computing device into a secure noisy quantum \emph{non}-computing device!

In \cref{hdr:sec:sdftqc} we show how to remedy this fragility by delegating a fault-tolerant computation to the prover, thereby protecting traps and data alike while preserving verifiability and blindness.
\section{Extending functionality}
\label{sec:orgff52af2}
All existing blind-and-verified protocols assume a single verifier owns the input register.  Practical workloads, however, often span several data owners who mistrust one another. This could be illustrated by hospitals collaboratively training in a quantum fashion a pattern recognition model over the entire population of a country.  These scenarios call for Secure Multi-Party Quantum Computation (SMPQC), in which:
\begin{itemize}
\item Several clients each supply a disjoint subset of the inputs.
\item A quantum server executes a joint algorithm.
\item Every party learn only its authorized portion of the global output, and does not have access to the proprietary data of the other participants.
\end{itemize}

Yet, the existing MBQC-based SMPQC construction \cite{KP17multiparty} has limited guaran\-tees---blindness but not verifiability---and unusual requirements---absence of client-server collusion---, while such limitations do not exist for circuit-based SMPQC.

\cref{hdr:sec:smpqc} introduces the first composable SMPQC protocol built atop the modular framework of \cref{hdr:sec:framework}.

Other extensions are currently being constructed but fall outside the scope of this document. They deal with benchmarking and noise characterization. Because they rely on the same techniques and broaden the range of applicability of the architecture that security imposes on hardware vendors, they further push them to put the necessary effort to incorporate into their roadmaps.
\part{Modular and Composable Toolbox}
\label{hdr:sec:toolbox}
\cleardoublepage
\lettrine[refstring, lines=3, lraise=0.15]{T}{his part} introduces the toolbox \cite{KKLM22unifying} which played a crucial role in advancing verification techniques. It is a followup to \cite{LMKO21verifying}, building upon the initial modularization capabilities it uncovered. This toolbox generalizes many of the earlier techniques, opening up new possibilities for optimization.

In the following sections, we will begin by deconstructing existing protocols---namely UBQC and VBQC---into their core modules. We will then demonstrate how these modules can be reassembled to form new, more efficient, and robust verification protocols. This will be illustrated through the robust VBQC protocol that laid the foundation of this work, as well as the dummyless protocol, which serves as a cornerstone for all subsequent optimizations presented in \cref{hdr:sec:reducing,hdr:sec:extending}.
\chapter{Framework for Verification Protocols}
\label{hdr:sec:framework}
\lettrine[refstring,lines=3,lraise=0.15]{T}{his chapter,} introduces a \emph{modular} framework\marginnote{The work that motivated the construction of the framework presented here is \cite{LMKO21verifying}. \cite{KKLM22unifying} later distilled the crucial ingredients that were put to work and established the fully modular and composable form we adopt here.} for designing verification protocols. Each module is assigned a single purpose that governs one aspect of the performance or adaptability of the whole protocol. 

Dividing responsibilities in this way brings two immediate pay-offs:
\begin{itemize}
\item \emph{Local reasoning.} Protocol security reduces to module-level properties checked in isolation. These properties are security for the subprotocol in charge of Remote State Preparation; detection, insensitivity, correctness for trappification; overhead for embedding.
\item \emph{Plug-and-play optimization.}  One can refine a single module without reopening the entire security proof.
\end{itemize}

The rest of this chapter proceeds in two steps. First, \cref{hdr:sec:modularization} deconstructs UBQC (\cref{hdr:proto:ubqc}) and VBQC (\cref{hdr:proto:vbqc}), then each extracted module is reassembled into a template protocol (\cref{hdr:proto:tbdqc}), for which only the instantiation of the various modules will be missing. Several instantiations of these modules are described in \cref{hdr:sec:reducing,hdr:sec:extending}, each tackling different challenges.

This instantiation is precisely the role of the next two parts of the document.
\section{Modularization}
\label{hdr:sec:modularization}
The task of the present section is simple to state: we isolate and formalize the three building blocks that recur in every verified and blind delegation scheme, each with a sharply defined role.
\begin{itemize}
\item \emph{Remote State Preparation (RSP).} Enables blindness by allowing the verifier to initialize a single-qubit state inside the prover’s device without disclosing its classical description.
\item \emph{Trappified Scheme.} Encapsulates deviation detection.  The definition comes with quantitative notions of \emph{detection}, \emph{insensitivity}, and \emph{correctness} that convert local error detection behavior into global security guarantees.
\item \emph{Embedding Algorithm.} Compiles an arbitrary target computation into the unspecified region of a graph state while respecting blindness and facilitating deviation detection amplification.
\end{itemize}
\marginnote{\\The underlying idea behind verified-blind computation is relatively simple. It applies the "jealous couple technique" to get confidence on the answer to some question. It hides the true question into many other innocuous questions of the same kind but for which the answer is known. These will be traps. If all traps are passed, then answer to the true question is trusted. If one trap failed\ldots One issue is clearly to ensure the potentially malicious party does not detect the traps by suspecting their nature---meaning trap questions cannot be reused and need to be as difficult as true questions. Another one is to boost the confidence quickly. This is impossible without error correction, which might be why jealous couples don't last long as life quickly turns into constant interrogation. Or maybe is it because of absence of noise robustness in the provided answers.}

By the end of the section the reader will have:
\begin{itemize}
\item Rigorous resource or algorithm definitions for RSP, trappified canvases, the embedding algorithms.
\item A clear view of how the three modules dovetail and which aspect of protocol efficiency or security each module governs.
\end{itemize}

Concrete protocols are deferred to the next section, where the modules are recombined into a template protocol and their local properties are lifted to full composable security.  
\subsection{Extracted Functionalities from VBQC Protocols}
\label{sec:A1-func}
The Verifiable Blind Quantum Computing (VBQC) constructions---from the original one \cite{FK17unconditionally} to the dotted-triple-graph VBQC (\cref{hdr:proto:vbqc})---rest on three logical ingredients, each of which can be isolated and studied on its own.
\begin{itemize}
\item \emph{Blind execution of a computation class.} UBQC (\cref{hdr:proto:ubqc}) allows the verifier to steer the prover through any \cmp{BQP} computation that fits within a given graph state, simply by adjusting single-qubit measurement angles and inserting dummies to break edges.  Crucially, when those angles are restricted to multiples of \(\pi/2\) the unitary describing the computation falls inside the Clifford group and becomes classically simulable. The same interaction pattern between verifier and prover therefore supports both interesting quantum instances and deterministic, classically easy-to-predict ones. Yet, delegated through UBQC, easy instances remain indistinguishable from genuinely quantum ones to the prover.
\item \emph{Local probes of malicious behavior.} A single-qubit disconnected from the graph---or, more generally, a small deterministic computation embedded in the same resource state---can serve as a \emph{trap}.  Any deviation that flips the outcome of the trap signals cheating.  Because not every qubit can be sacrificed for making traps---otherwise no qubit would remain to carry on the computation of the verifier---traps must be sprinkled probabilistically, and their locations kept hidden from the prover.
\item \emph{Amplification of detection probability.} Because traps are chosen probabilistically and because for each trap some deviations are never detected, simply inserting traps is not enough to provide negligible security error. VBQC boosts the detection probability by encoding the verifier's data across many physical qubits---either via a fault-tolerant scheme or by classical repetition---and then distributing traps throughout that enlarged structure.  A deviation, strong enough to alter the logical outcome, must now disturb many sites. These higher-weight \emph{harmful} deviations trigger at least one trap with overwhelming probability.
\end{itemize}

These three functions---blind state preparation and control, trap creation and evaluation, and amplification by fault-tolerance or repetition---will be detailed as the \emph{Remote State Preparation}, \emph{Trappification}, and \emph{Embedding} modules defined formally in the subsections that follow.
\subsection{Blindness: Remote State Preparation}
\label{sec:org8a1512e}
\subsubsection{Motivation}
\label{sec:orgf718da5}
The origin of blindness in UBQC---the functionality that restores the balance of powers between the weak verifier and powerful prover---is simple to state: the prover must hold a qubit whose classical description he does not know.  In the two-prover variant of UBQC \cite{BFK09universal}, this role is played by the second prover, which supplies the qubits on the verifier’s behalf.  Lo’s early work on remote preparation \cite{L00classical} and its cryptographic formalization \cite{DKL11universal} make clear that this essential functionality can be abstracted away from the rest of the protocol. It is commonly captured in a resource called \emph{Remote State Preparation (RSP)}.

The practical requirement is modest---prepare either \(\ket{+_\theta}\) for \(\theta\in\Theta\) or a computational basis state \(\ket{0},\ket{1}\). Yet this single primitive underlies UBQC, VBQC, and the protocols developed later in this document.
\subsubsection{Definition}
\label{sec:org0b22235}
These requirements can be turned into a proper resource within the abstract cryptography framework (\cref{hdr:sec:ac}) where a sender selects a desired state to be prepared at a distant receiver's interface. With Alice being the sender and Bob the receiver, RSP can be defined in the following way:
\begin{resource}[Remote State Preparation]
  \begin{algorithmic}[0]
    \STATE \textbf{Inputs:} Alice inputs a symbol $\lambda \in \Lambda$.

    \STATE \textbf{Computation by the Resource:} The Resource prepares the state $\ket{\lambda}$ at Bob's interface.
    The alphabet \(\Lambda\) is either the equatorial set \(\{+_\theta : \theta\in\Theta\}\) or that set union \(\{0, 1\}\) when computational-basis dummies are required. In the former case, we call the resource \emph{single-plane RSP}.
  \end{algorithmic}
  \label{hdr:res:rsp}
\end{resource}

The definition is deliberately spare. No side-channel is exposed beyond the quantum state itself. The qubit is guaranteed correct \emph{at the moment of delivery}.  What happens afterwards---unitary drift, stochastic noise, measurement, or arbitrary deviation inserted by the prover---must be independent of \(\lambda\). \marginnote{In fact, secure-by-design means that a definition is promoted to a security specification.}This makes it \emph{secure-by-design} as it is simple enough to check the definition fits its textual description.
\subsubsection{Implementation issues}
\label{sec:org0c2fec5}
Implementing such ideal RSP in hardware is subtle.  Imperfect sources or mistimed classical control can imprint a faint \(\lambda\)-dependent signature on auxiliary degrees of freedom---timing, phase noise, or photon number.  Such leakage will break blindness and invalidate composable proofs. Later chapters return to this point:
\begin{itemize}
\item \cref{hdr:sec:rotations} shows that a single trusted \(\Z(\theta)\) rotation, combined with \(\X\) bit flips, suffices for VBQC with minimal overhead for the verifier.
\item \cref{hdr:sec:wcp} presents a composable RSP protocol that tolerates weak coherent pulse sources instead of single photon sources while maintaining the \(\lambda\)-independence demanded by \cref{hdr:res:rsp}.
\end{itemize}
\subsection{Deviation Detection: Trappification}
\label{sec:org72a3840}
\subsubsection{Motivation}
\label{sec:orgf42ba8d}
The original VBQC protocols use the simplest possible probe: a single-qubit, prepared as \(\ket{+}\) and disconnected from the rest of the graph, whose \(\X\)-measurement outcome must be \(0\).  Any non-trivial \(\Z\)-type deviation on that site flips the bit and is caught.  This one-qubit trap is easy to analyze and easy to hide, but it is also extremely narrow: it only tests a tiny corner of the prover’s behavior at a time.

Two further observations drive the present generalization. (i) Not every qubit can be a trap.  Some must carry the actual computation.  Traps therefore have to be sprinkled sparsely and at random, with their locations hidden by blindness; otherwise the prover could simply tip-toe around them. (ii) What truly matters is not the size of the trap but its predictability.  A trap can be any deterministic, classically easy-to-evaluate MBQC pattern run on a subgraph and whose outcome the verifier can compute offline.  Once this is recognized, larger traps are admissible.  A first example of such a multi-qubit trap already appeared in \cite{FKD18reducing}. Here we systematize the idea and use it to shape the detection profile of the overall scheme.

This broader view pays off twice.  It gives far more latitude in how and where traps are inserted, and it lets us tailor their detection profile to the amplification strategy used later on.  The formal objects---partial patterns, trappified canvases, and trappified schemes---introduced in the next subsection make that flexibility precise.
\subsubsection{Definitions}
\label{sec:A4-trap-def}
Our goal is clear: traps must sit beside the computation, yet still detect any deviation that could alter it. If the computation always uses the same physical vertices, an adversary can target those and hide behind legitimate measurement instructions to deviate. We therefore need to randomize where the computation lives and where traps sit. Formally capturing that idea requires three ingredients:
\begin{enumerate}
\item A partially specified MBQC pattern, so we can leave room for either or both trap and computation.
\item A trappified canvas that bundles such a partial pattern together with its input state and an acceptance rule which together allow deviation detection.
\item A trappified scheme: a distribution over many trappified canvases, all indistinguishable to the prover, plus an embedding algorithm that plugs the actual computation into the blank spaces.
\end{enumerate}

We now introduce these notions, keeping the intuition in view.
\begin{definition}[Partial MBQC Pattern]
  Given a graph \(G=(V,E)\), a partial pattern \(P\) on \(G\) is defined by:
  \begin{itemize}
  \item $G_P = (V_P, E_P = E \cap V_P \times V_P)$, a subgraph of $G$;
  \item $I_P$ and $O_P$, the partial input and output vertices, with subspaces $\Pi_{I, P}$ and $\Pi_{O, P}$ defined on vertices $I_P$ and $O_P$ through bases $\mathcal{B}_{I, P}$ and $\mathcal{B}_{O, P}$ respectively;
  \item $\{\phi(i)\}_{i\in V_{P} \setminus O_{P}}$, a set of measurement angles;
  \item $f_p : V_{P} \setminus O_{P} \rightarrow V_{P} \setminus I_{P}$, a flow inducing a partial order $\preceq_P$ on $V_P$.
  \end{itemize}
  \label{hdr:def:partial-pattern}
\end{definition}
Think of a partial pattern as an MBQC pattern where some measurement angles are fixed, others are left blank; some input and output vertices are labeled, others are still free.  This lets us say this contains small sub-computations---a potential trap---without committing the rest of the pattern yet, save for the graph itself and order of measurement.

\begin{definition}[Trappified Canvas]
  A \emph{trappified canvas} \((T, \sigma, \pd{T}, \tau)\) on a graph \(G = (V, E)\) consists of:
  \begin{itemize}
  \item $T$, a partial pattern on a subset of vertices $V_T$ of $G$ with input and output sets $I_T$ and $O_T$;
  \item $\sigma$, a tensor product of single-qubit states on $\Pi_{I, T}$;
  \item $\pd{T}$, an efficiently classically computable probability distribution over binary strings\sidenotemark;
  \item and $\tau$, an efficient classical algorithm that takes as input a sample from $\pd{T}$ and outputs a single bit;
  \end{itemize}
  such that the \(X\)-measurement outcomes of qubits in \(O_T\) are drawn from probability distribution \(\pd{T}\). Let \(t\) be such a sample, the outcome of the trappified canvas is given by \(\tau(t)\). By convention we say that it accepts whenever \(\tau(t)=0\) and aborts for \(\tau(t)=1\).
  \label{hdr:def:trappified-canvas}
\end{definition}
\sidenotetext{\raggedright \footnotesize Here, we really mean computing the probability distribution, not just sampling from it. This is because the decision algorithm $\tau$ might need to access $\mathcal{T}$ while being executed. In practice, $\mathcal{T}$ is so that, given $T$, a single output string has probability 1, the rest being equal to 0, and identification of the only likely outcome is classically easy.}

A trappified canvas is a self-contained trap gadget: it specifies what to prepare, what to measure, how to post-process the outcomes, and how to decide accept or abort, while still having unaffected space available for computing. Compared with the original single-qubit trap, this allows multi-qubit, deterministic sub-computations---as long as the verifier can efficiently predict \(\tau(t)\) in advance.

\begin{definition}[Trappified Pattern]
  Given a computation \(\cptp C \in \mathfrak{C}\) and a trappified canvas \(T\) on graph \(G\) with order \(\preceq_G\), we call the completed pattern \(\TP\) which computes \(\cptp C\) over the unaffected space of the canvas a \emph{trappified pattern}.
  \label{hdr:def:trappified-pattern}
\end{definition}
The canvas leaves white space that is filled with the real computation \(\cptp C\). We will see later, that there are usually many possibilities to embed a computation. The choice of embedding algorithm \(E_{\mathfrak{C}}\) will be discussed later as it governs the detection amplification.

As different runs choose different canvases, the prover must not be able to tell where the logical data ended up. This requires an extra care about blindness. If two canvases differed in their graph, the prover could recognize which canvas was used and, by elimination, where traps sit.  We therefore need:
\begin{definition}[Blind-Compatibility]
  A set of patterns \(\sch P\) is \emph{blind-compatible} if all patterns \(P \in \sch P\) share the same graph \(G\), the same output set \(O\) and there exists a partial ordering \(\preceq_{\sch P}\) of the vertices of \(G\) which is an extension of the partial ordering defined by the flow of any \(P \in \sch P\). This definition can be extended to a set of trappified canvases \(\sch P = \{(T, \sigma, \pd{T}, \tau)\}\). The partial order \(\preceq_{\sch P}\) is required to be an extension of the orderings \(\preceq_T\) of partial patterns \(T\).
  \label{hdr:def:blind-compatibility}
\end{definition}
In short, blind-compatibility ensures that all the trappified patterns share the same graph and the same scheduling for their measurements.  Only the hidden angles and state choices differ, so blindness is preserved.

Finally, we package everything:
\begin{definition}[Trappified Scheme]
  \label{def:trap-scheme}
  A \emph{trappified scheme} \((\sch P, \preceq_G, \pd P, E_{\mathfrak{C}})\) over a graph \(G\) for computation class \(\mathfrak{C}\) consists of:
  \begin{itemize}
  \item $\sch P$, a set of \emph{blind-compatible} trappified canvases over graph $G$ with common partial order $\preceq_{\sch P}$.
  \item $\preceq_G$, a partial ordering of vertices $V$ of $G$ that is compatible with $\preceq_{\sch P}$.
  \item $\pd P$, a probability distribution over the set $\sch P$ which can be sampled efficiently.
  \item $E_{\mathfrak{C}}$, a \emph{proper} embedding algorithm for $\mathfrak{C}$.
  \end{itemize}
\end{definition}

The scheme is the hat from which the verifier draws a canvas at random, then embeds the computation.  The embedding algorithm \(E_{\mathfrak{C}}\) (deferred to \cref{hdr:def:embedding}) ensures that there is a uniform way to map the computation \(\cptp C\) into the unaffected space of the trappified canvas and properness (deferred to \cref{hdr:def:proper-embedding}) prevents information from flowing from the logical computation into the traps, ensuring the preservation of blindness and enabling a composable proof.
\subsubsection{Properties}
\label{sec:A4-trap-prop}
A trappified scheme by itself is only raw material.  To turn it into a verification protocol we must quantify three things (See \cref{hdr:fig:wolf}):
\begin{enumerate}
\item How often do we cry wolf when truly harmful deviation happened (\emph{detection}).
\item How often do we cry wolf when nothing important was touched (\emph{insensitivity}).
\item If we do not cry wolf, how close is the logical output to what it should have been? (\emph{correctness}).
\end{enumerate}

\begin{marginfigure}[-3in]
  \includegraphics[trim={0 0 22cm 0},width=\textwidth]{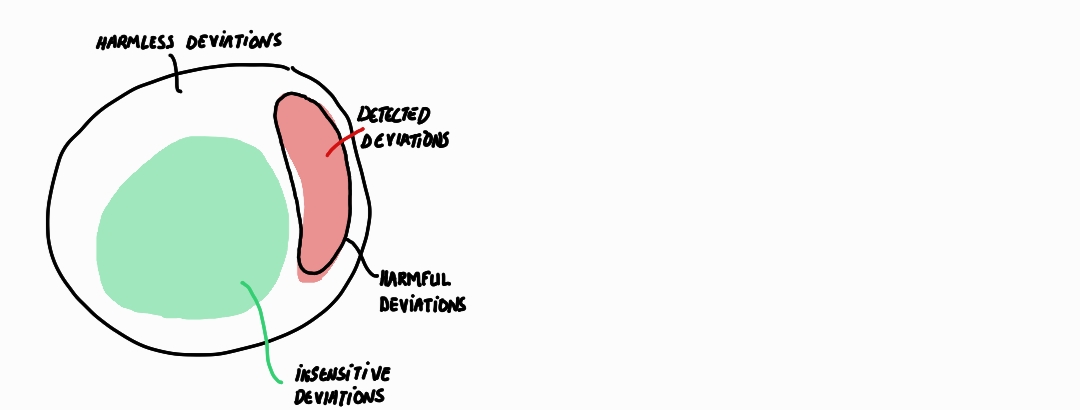}
  \caption{\raggedright Picturing the various types of deviations and their impact on the computation. We want to quantify how often we don't detect deviations that can change the result of the computation \emph{at the logical level}. When this is very small, we will have a verification scheme with good security.}
  \label{hdr:fig:wolf}
\end{marginfigure}

Formally, the random Pauli encryption in UBQC turns an arbitrary channel into a Pauli channel; see, e.g., the standard twirling lemma \cite{DCEL09exact,K16efficient}. As result, any adversarial CPTP map  imposed by the prover reduces to a convex combination of Pauli operators.  From this point on it thus suffices to analyze Pauli deviations. 

The first property we quantify is the Pauli detection capability of the scheme relative to a set of deviations \(\mathcal{E}\). This set is meant to be the set of harmful deviations: \sidenote{\raggedright \footnotesize In \cref{hdr:eq:pauli-detection} the notation $t\sample \pd T_{\cptp E}$ refers to the sampling of $t$ according to the trap measurement when the trap $T$ is affected by deviation $E$ applied after the correct unitary part of the trap applied. The same notation is used in the subsequent equations.}

\begin{definition}[Pauli Detection]
  Let \(T\) be a trappified canvas sampling from distribution \(\pd T\).
  Let \(\mathcal{E}\) be a subset of the Pauli group \(\mathcal P_V\) over the graph vertex qubits.
  For \(\epsilon > 0\), we say that \(T\) \emph{\(\epsilon\)-detects} \(\mathcal{E}\) if: 
  \begin{align}
    \forall \cptp E \in \mathcal{E},\ \Pr_{t\sample \pd T_{\cptp E}}[\tau(t) = 1] \geq 1-\epsilon. \label{hdr:eq:pauli-detection}
  \end{align}

  We say that a trappified scheme \(\sch P\) \emph{\(\epsilon\)-detects} \(\mathcal{E}\) if:
  \begin{align}
    \forall \cptp E \in \mathcal{E}, \ \sum_{T\in \sch P} \Pr_{\substack{T \sample \pd P \\ t\sample \pd T_{\cptp E}}}[\tau(t) = 1, T] \geq 1-\epsilon.
  \end{align}
  \label{hdr:def:detection}
\end{definition}

\marginnote{\\Indeed, we will follow this intuition later in order to obtain global robustness.}
There are also deviations that we do not manage to detect. This is not always detrimental as long as they have little to no impact on the result of the computation. 
\begin{definition}[Pauli Insensitivity]
  Let \(T\) be a trappified canvas sampling from distribution \(\pd T\).
  Let \(\mathcal{E}\) be a subset of \(\mathcal P_V\).
  For \(\delta > 0\), we say that \(T\) is \emph{\(\delta\)-insensitive to} \(\mathcal{E}\) if:   
  \begin{align}
    \forall \cptp E \in \mathcal{E}, \ \Pr_{t\sample \pd T_{\cptp E}}[\tau(t) = 0] \geq 1-\delta.
  \end{align}

  We say that a trappified scheme \(\sch P\) is \emph{\(\delta\)-insensitive to} \(\mathcal{E}\) if:
  \begin{align}
    \forall \cptp E \in \mathcal{E}, \ \sum_{T\in \sch P} \Pr_{\substack{T\sample \pd P \\ t\sample \pd T_{\cptp E}}}[\tau(t) = 0, T] \geq 1-\delta.
  \end{align}
  \label{hdr:def:insensitivity}
\end{definition}

And finally, we want to capture the distance between the ideal computation and what the verifier actually gets when a deviation is sampled from a given set and no trap fires.
\begin{definition}[Pauli Correctness]
  Let \((T, \sigma, \pd T, \tau)\) be a trappified canvas on graph \(G\), \(\preceq_G\) an order on the vertices of \(G\) and \(E_{\mathfrak C}\) an embedding algorithm. Let \(\TP\) be the trappified pattern obtained by embedding a computation \(\cptp C \in \mathfrak{C}\) on \(T\) using \(E_{\mathfrak C}\) and order \(\preceq_G\). Let \(I_C\) be the set of input vertices for the computation \(\cptp C\) in \(\TP\) and let \(\ket{\psi}\) be a state on \(\abs{I_C}+\abs{R}\) qubits, for sufficiently large auxiliary system \(R\), such that \(\Tr_R(\ket{\psi}) \in \Pi_{I,\mathfrak{C}}\), where \(\Pi_{I,\mathfrak{C}}\) is the verifier's input subspace.
  Let \(\mathcal E\) be a subset of  \(\mathcal P_V\). For \(\cptp E \in \mathcal{E}\), we define \(\Cdev{\cptp E} = \Deco\circ\Tr_{O_C^c}\circ\cptp E\circ (\TP)\) to be the CPTP map resulting from applying the trappified pattern \(\TP\) followed by \(\Deco\), the decoding algorithm that maps the output space \(\Pi_{O,\mathfrak{C}}\) to regular qubits.
  For \(\nu \geq 0\), we say that \(T\) is \emph{\(\nu\)-correct} on \(\mathcal E\) if:\sidenotemark
  \begin{align}
    & \forall \cptp E \in \mathcal E, \ \forall \cptp C \in \mathfrak{C}, \nonumber \\ 
    & \qquad \max_\psi\| (\Cdev{\cptp E}  - \cptp C\otimes \Id_T)\otimes\Id_R [\dyad{\psi} \otimes \sigma]\|_{\Tr} \leq \nu. \label{hdr:eq:pauli-cor}
  \end{align}
  This is extended to a trappified scheme \(\sch P\) by requiring the bound to hold on average over \(T \in \sch P\):
  \begin{align}
    & \forall \cptp E \in \mathcal E, \ \forall \cptp C \in \mathfrak{C}, \nonumber \\ 
    & \ \max_\psi \left(\sum_{T\in \sch P} \Pr_{T\sample \pd P}[T] \| (\Cdev{\cptp E}  - \cptp C\otimes \Id_T)\otimes\Id_R [\dyad{\psi} \otimes \sigma]\|_{\Tr}  \right) \leq \nu.
  \end{align}
  \label{hdr:def:correctness}
\end{definition}
\sidenotetext{\raggedright \footnotesize Note that in \cref{hdr:eq:pauli-cor} we are simply expanding the diamond norm between the correct and deviated CPTP maps, but as we need to still fix the input subspace and the input for the trap qubits, this expression avoids defining the deviated CPTP map on the logical result directly.}

Detection and insensitivity are complementary knobs, yet they characterize different sets of deviations. Correctness then ties everything together:
\begin{itemize}
\item If a deviation escapes detection, it must nevertheless leave the logical result almost untouched.
\item Conversely to obtain robustness, traps must be insensitive to deviations that almost don't touch the result.
\end{itemize}
\subsection{Detection Amplification: Embedding}
\label{hdr:sec:embedding-defs}
\subsubsection{Motivation}
\label{sec:org5c0e423}
Traps by themselves only give an inverse–polynomial chance of catching a cheat.  Detection amplification comes from spreading the logical computation across a larger structure---either into a fault–tolerant error-correcting scheme code or into several independent repetitions---and sprinkling traps through that structure.  The embedding algorithm is the classical piece of machinery that performs this spread:
\begin{itemize}
\item It maps the target computation into the unused area of the chosen canvas.
\item It fixes how logical outputs are decoded (majority vote, code decoder,\ldots).
\item It ensures that small, local Pauli errors are \emph{harmless} (corrected or outvoted), while large, \emph{harmful} ones are likely to hit a trap.
\end{itemize}

Abstract Cryptography forces one more constraint: the algorithm must be public and must not let information flow from the logical computation back into the traps.  Otherwise blindness, and hence the simulator in the security proof, would fail as the choice of computation could influence the trap detection probability for a given deviation.  This is the content of \emph{properness} below. This is the analog of the independent local-verifiability criterion used in \cite{DFPR14composable}. Here, the property appears very explicitly as a necessary ingredient used to construct the simulator in the security proof.
\subsubsection{Definitions}
\label{sec:orgaf542ca}
\begin{definition}[Embedding Algorithm]
  Let \(\mathfrak{C}\) be a class of quantum computations. An \emph{embedding algorithm} \(E_{\mathfrak{C}}\) for \(\mathfrak{C}\) is an efficient classical probabilistic algorithm that takes as input:

  \begin{itemize}
  \item $\cptp{C} \in \mathfrak{C}$, the computation to be embedded.
  \item $G = (V, E)$, a graph, and an output set $O$.
  \item $T$, a trappified canvas on graph $G$.
  \item $\preceq_G$, a partial order on $V$ which is compatible with the partial order defined by $T$.
  \end{itemize}
  and outputs:
  \begin{itemize}
  \item A partial pattern $C$ on $V \setminus V_T$, with
    \begin{itemize}
    \item Input and output vertices $I_C \subset V \setminus V_T$ and $O_C = O \setminus O_T$.
    \item Two subspaces (resp.) $\Pi_{I,\mathfrak{C}}$ and $\Pi_{O,\mathfrak{C}}$ of (resp.) $I_\mathfrak{C}$ and $O_\mathfrak{C}$ with bases (resp.) $\mathcal{B}_{I,\mathfrak{C}}$ and $\mathcal{B}_{O,\mathfrak{C}}$.
    \end{itemize}
  \item A decoding algorithm \(\Deco\).
  \end{itemize}
  such that the flow \(f_C\) of partial pattern \(C\) induces a partial order which is compatible with \(\preceq_G\). If \(E_{\mathfrak C}\) is incapable of performing the embedding, it outputs \(\bot\).\sidenotemark
  \label{hdr:def:embedding}
\end{definition}
\sidenotetext{\raggedright \footnotesize This is required to ensure that the behavior of the embedding algorithm is well defined for all possible input computation $\mathsf{C}$. This is important to cover the case where the distinguisher could try to input very large computations that cannot fit into the available blank space in the trappified canvas.}
\noindent In this definition, \(E_{\mathfrak C}\) fills the blanks of the canvas with the computation pattern and tells the verifier how to read the answer back.  Different choices of
\(E_{\mathfrak C}\) yield different amplification behaviors.

If the embedding is such that it stays in the logical space of a quantum error-correcting code, deviations with a weight smaller than half the minimum distance of the code will be harmless. The set of deviations that would need to be detected are those that are not low weight under this criterion, and that would typically be done with a very high probability, even on average over the choice of trappified canvas. Similarly, whenever the computation is classically repeated and the result obtained by majority vote, we are computing in a classically protected subspace. This will give the correct result as long as there are less than half the rounds that are affected by a deviation. There again we will be able to isolate the low-weight from high-weight deviations, the latter being detected with very high probability.

With the embedding comes the subtle notion of \emph{proper embedding} which captures the absence of influence between the computation and the traps. 
\begin{definition}[Proper Embedding]
  We say that an embedding algorithm \(E_{\mathfrak{C}}\) is \emph{proper} if, for any computation \(\cptp{C} \in \mathfrak{C}\) and trappified canvas \(T\) that do not result in a \(\bot\) output, we have that:
  \begin{itemize}
  \item The flow $f$ does not induce dependencies on vertices $V_T$ of partial pattern $T$, in the sense of \cref{hdr:eq:angle-update}.
  \item The input and output subspaces $\Pi_{I,\mathfrak{C}}$ and $\Pi_{O,\mathfrak{C}}$ do not depend on the trappified canvas $T$ save for a relabeling of the qubits.
  \end{itemize}
  \label{hdr:def:proper-embedding}
\end{definition}
In essence, a proper embedding guarantees that the simulator in the security proof can sample a trappified canvas and any computation from the class \(\mathfrak{C}\) and play the role of the verifier with the distinguisher playing that of the prover. The trap would be computed and yield the same transcript even if the actual computations would differ. Without the above property, this could not be guaranteed and the distinguisher could use it to its advantage.
\section{Reconstruction}
\label{hdr:sec:reconstruction}
The previous section carved verification into three modules---RSP for blindness, Trappification for deviation detection, Embedding for amplification. Here we put the pieces back together. The goal is twofold: (i) give a concrete, VBQC-style protocol template that calls each module; (ii) show how local guarantees---detection, insensitivity, correctness---combined with the chosen amplification strategy lift to a global, composable security bound. This reconstruction also makes plain how each module’s parameters feed directly into the overall performance.
\subsection{Template protocol}
\label{sec:orgc025086}
An informal template for VBQC-style protocols using the notions defined earlier can be described succinctly, with Alice being the verifier and Bob the prover:
\begin{protocol}[Trappified Blind Delegated Quantum Computation, Informal]
  \begin{enumerate}
  \item Alice samples a trappified canvas from the trappified scheme and embeds its computation in the canvas, yielding a trappified pattern.
  \item Alice blindly delegates this trappified pattern to Bob via UBQC, and collects the output qubits and the measurement outcomes.
  \item Alice evaluates the decision function, and accepts or aborts.
  \item If she didn't abort, Alice performs some simple classical or quantum single-qubit post-processing on the output.
  \end{enumerate}
  \label{hdr:proto:tbdqc}
\end{protocol}

If we restrict to \cmp{BQP} computations, both inputs and outputs are classical and the amplification can be performed through repetitions and majority voting. This way, we arrive at a more concrete template for verified \cmp{BQP} computations, where only traps and RSP implementations are left to specify:
\begin{protocol}[Trappified Blind Delegated Quantum Computation]
  \begin{algorithmic}[0]
    \STATE \textbf{Public Information:} 
    \begin{itemize}
    \item $G = (V, E, I, O)$, a graph with input and output vertices $I$ and $O$ respectively.
    \item $\sch P$, a trappified scheme on graph $G$.
    \item $\preceq_G$, a partial order on the set $V$ of vertices.
    \item $N, d, w$, parameters representing the number of runs, the number of computation runs, and the number of tolerated failed tests.
    \end{itemize}
    
    \STATE \textbf{Alices's Inputs:} A set of angles $\{\phi_i\}_{i \in V}$ and a flow $f$ which induces an ordering compatible with $\preceq_G$.
    
    \STATE \textbf{Protocol:}
    \begin{enumerate}
    \item Alice samples uniformly at random a subset $C \subset [N]$ of size $d$ representing the runs which will be its desired computation, henceforth called computation runs.
    \item For $k \in [N]$, Alice and Bob perform the following:
      \begin{enumerate}
      \item If $k \in C$, Alice sets the computation for the run to its desired computation $(\{\phi_i\}_{i \in V}, f)$. Otherwise, Alice samples a test $(T, \sigma, \tau)$ from the trappified scheme $\sch P$.
      \item Alice and Bob blindly execute the run using the UBQC.
      \item If it is a test, Alice uses $\tau$ on the measurement results to decide whether the test passed or not.
      \end{enumerate}
    \item At the end of all runs, let $x$ be the number of failed tests. If $x \geq w$, Alice aborts and outputs $(\bot, \abort)$.
    \item Otherwise, Alice accepts the computation. She then performs a majority vote on the output results of the computation runs and sets the result as its output.
    \end{enumerate}  
  \end{algorithmic}
  \label{hdr:proto:tbdqc-classical-io}
\end{protocol}
\noindent 
\subsection{Key Theorems}
\label{sec:A4-main-thms}
The following theorems can be applied to protocols following the informal template and are given here only with a brief sketch of proof. Their proofs nonetheless follow an intuitive route:
\begin{itemize}
\item UBQC turns any deviation into a mixture of Paulis.
\item Split that set into detected, ignored but harmless, and the rest.
\item Bound each contribution to the distinguishing advantage.
\end{itemize}

\begin{theorem}[Detection Implies Verifiability, Informal]
  Let \(\mathcal E_\epsilon, \mathcal E_\nu \subset \mathcal{P}_V\), where \(\mathcal{P}_V\) is the set of Pauli operators on the qubits indexed by the graph vertices (deviations), such that:
  \begin{itemize}
  \item $\mathcal{P}_V \setminus \mathcal E_\epsilon \subseteq \mathcal{E}_\nu$.
  \item $\Id \in \mathcal E_\nu$.
  \end{itemize}
  If Trappified Blind Delegated Quantum Computation uses a trappified scheme which:
  \begin{itemize}
  \item \(\epsilon\)-detects $\mathcal E_\epsilon$.
  \item is \(\delta\)-insensitive to at least $\{\Id\}$.
  \item is \(\nu\)-correct on $\mathcal E_\nu$.
  \end{itemize}
  then the protocol is \(\max(\epsilon, \delta+\nu)\)-secure against an arbitrarily malicious unbounded prover.
\end{theorem}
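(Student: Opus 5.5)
We prove security in the abstract cryptography framework by exhibiting a simulator for the prover's interface of SDQC (\cref{hdr:res:sdqc}) and bounding the distinguishing advantage by considering separately the honest-prover case and the malicious-prover case.

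For an honest prover the only deviation is $\Id$. The real protocol outputs the correct result unless the trap fires, which by $\delta$-insensitivity to $\{\Id\}$ happens with probability at most $\delta$. Given acceptance, the output is $\nu$-close to $\cptp C[\rho_A]$ in trace distance, by $\nu$-correctness on $\mathcal E_\nu \ni \Id$. A triangle inequality then gives advantage at most $\delta+\nu$ against the honest ideal behaviour.

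For a malicious prover we construct a simulator $\sigma$ that obtains the leakage $l$ (the graph $G$ and $\preceq_G$) by setting $e=1$. It then runs the verifier's side of the protocol with a dummy computation: it samples $T\sample\pd P$ and embeds an arbitrary fixed $\cptp C_0\in\mathfrak C$ (which properness allows), sends the UBQC angles, and computes $\tau$ on the returned outcomes. If the trap fires it sets $c=1$; otherwise it sets $c=0$, so the ideal resource outputs $\cptp C[\rho_A]\otimes\dyad{\accept}$.

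Blindness of UBQC together with blind-compatibility makes the transcript independent of the real computation. Properness of $E_{\mathfrak C}$ makes the trap outcome distribution independent of whether $\cptp C$ or $\cptp C_0$ is embedded, since there are no flow dependencies into $V_T$ and the code subspaces are canvas-independent. Hence the real and simulated accept/abort decisions have the same joint distribution with the prover's view.

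Next, write the prover's attack as a unitary on its register after the honest computation. The UBQC one-time pad, with uniform $\theta$ and $r$ and the Pauli twirl, reduces the effect on the joint state to a convex combination $\sum_{\cptp E}p_{\cptp E}\,\cptp E$ of Pauli deviations, with the prover's register correlated only classically to $\cptp E$. Split this sum into $\cptp E\in\mathcal E_\epsilon$ and $\cptp E\in \mathcal P_V\setminus\mathcal E_\epsilon\subseteq\mathcal E_\nu$.

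On the first part, the real protocol accepts with probability at most $\epsilon$, by $\epsilon$-detection averaged over $T\sample\pd P$. Whenever it aborts, the real and ideal outputs coincide, so this part contributes at most $\epsilon$. On the second part, whenever the protocol accepts the output is $\nu$-close to $\cptp C[\rho_A]$ by correctness. When it aborts, the simulator also aborts, because the trap statistics match. This part therefore contributes at most $\nu$, weighted by its probability mass.

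Summing the two parts with weights $p$ and $1-p$, the advantage is at most $p\epsilon+(1-p)\nu$. Combining with the honest case, this is bounded by the stated $\max(\epsilon,\delta+\nu)$.

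The main obstacle is the twirling step. We must justify that a general adaptive, coherent deviation interacting with many rounds of angles reduces to a Pauli mixture acting after the honest pattern, with no residual coherence between branches. We must also check that this mixture is independent of the embedded computation, which is where properness is essential. I would carry this out by rewriting the protocol in a Z-basis one-time-pad form in which $\delta(i)$ is uniform, commuting all deviations to the end, and averaging over $(\theta, r)$ using the Pauli twirl lemma \cite{DCEL09exact}. Handling the trace-norm bound with the reference system $R$ and the trap inputs $\sigma$ also requires care.
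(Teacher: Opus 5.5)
Your proposal is correct and follows the same route as the paper's sketch: you twirl the prover's deviation into a Pauli mixture via the UBQC one-time pad, split it into $\mathcal E_\epsilon$ and its complement inside $\mathcal E_\nu$, and combine the resulting $\epsilon$ and $\nu$ contributions with the honest-case $\delta+\nu$ bound, and your simulator makes explicit why $\delta$ enters only the honest term, which is where the $\max$ comes from. One step that both you and the paper's sketch leave implicit is that $\nu$-correctness bounds the \emph{unconditioned} output, so applying it to the accepted branch requires the trap decision to be independent of the computation output for a fixed Pauli and canvas (true, e.g., when that decision is deterministic, as for stabilizer traps).
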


\begin{proof}[Sketch of Proof]
  In short: (i) with probability at most \(\epsilon\) a harmful Pauli slips past detection; (ii) undetected Paulis lie in \(\mathcal{E}_\nu\), so their logical effect is bounded by \(\nu\); (iii) insensitivity ensures we do not falsely abort more than \(\delta\). The triangle inequality then gives the stated bound.
\end{proof}

The following theorem addresses the sensitivity to noise of the initial VBQC protocols \cite{FK17unconditionally,KW17optimised}. It links insensitivity to robustness:
\begin{theorem}[Robust Detection Implies Robust Verifiability ()Informal)]
  We assume now that the prover in  Trappified Blind Delegated Quantum Computation is honest-but-noisy: the error applied is in \(\mathcal E_\delta\), the set of \(\delta\)-insensitive deviations, with probability \((1-p_\delta)\) and \(\mathcal{P}_V \setminus \mathcal E_\delta\) with probability \(p_\delta\).  Then, the verifier accepts with probability at least \((1-p_\delta)(1-\delta)\). If furthermore \(\mathcal{E}_\delta \subseteq \mathcal{E}_\nu\), the set of \(\nu\)-correct deviations, then the total correctness error is \(p_\delta + \delta + \nu\).
\end{theorem}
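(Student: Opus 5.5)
The plan is to reduce the honest-but-noisy prover to a classical mixture of Pauli deviations and then apply the definitions of insensitivity and correctness term by term.

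\emph{Step 1: reduction to Pauli noise.} An honest-but-noisy prover applies a fixed CPTP map after the correct operations. Because UBQC one-time-pads every qubit with random $\Z(\theta)$ rotations and $\X^{a}$ flips, the twirling argument recalled before \cref{hdr:def:detection} turns this map into a convex combination $\sum_{\cptp E} p_{\cptp E}\, \cptp E[\cdot]$ of Pauli operators on $V$. The hypothesis then says that the total weight on $\mathcal E_\delta$ is $1-p_\delta$ and the weight on $\mathcal{P}_V \setminus \mathcal E_\delta$ is $p_\delta$. Since the noise is independent of Alice's secret choices, the randomness of the scheme $T \sample \pd P$ is independent of the sampled $\cptp E$. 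Every quantity of interest is therefore the corresponding average over $\cptp E$.

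\emph{Step 2: acceptance probability.} The acceptance probability is
\[
\sum_{\cptp E} p_{\cptp E} \sum_{T} \Pr_{\substack{T\sample\pd P\\ t\sample\pd T_{\cptp E}}}[\tau(t)=0, T].
\]
I drop the nonnegative contribution of $\mathcal{P}_V\setminus\mathcal E_\delta$. For each $\cptp E\in\mathcal E_\delta$, the definition of $\delta$-insensitivity of the scheme (\cref{hdr:def:insensitivity}) bounds the inner sum below by $1-\delta$. Summing the remaining weights gives at least $(1-p_\delta)(1-\delta)$.

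\emph{Step 3: correctness error.} I compare the real output state, jointly with the accept/abort flag, to the ideal one $\cptp C[\rho]\otimes\dyad{\accept}$. By convexity of the trace norm, the error is at most $\sum_{\cptp E} p_{\cptp E}\,\Delta_{\cptp E}$, where $\Delta_{\cptp E}$ is the distance under the fixed deviation $\cptp E$.
\begin{itemize}
\item For $\cptp E\notin\mathcal E_\delta$, I use the trivial bound $\Delta_{\cptp E}\le 1$. This contributes at most $p_\delta$.
\item For $\cptp E\in\mathcal E_\delta\subseteq\mathcal E_\nu$, I split the deviated output into an accept branch and an abort branch. The abort branch has weight at most $\delta$ by insensitivity, so it costs at most $\delta$.
\item On the accept branch, the logical output is $\Cdev{\cptp E}$ restricted to $\tau=0$. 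Its distance to $\cptp C\otimes\Id_T$ is controlled by $\nu$-correctness (\cref{hdr:def:correctness}), averaged over $T\sample\pd P$. Conditioning on acceptance only removes weight, so by the triangle inequality this term is at most $\nu$.
\end{itemize}
Summing the three contributions gives $p_\delta+(1-p_\delta)(\delta+\nu)\le p_\delta+\delta+\nu$.

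\emph{Main obstacle.} The delicate point is the last step: correctness is stated for the full deviated map $\Cdev{\cptp E}$, whereas we need a bound on the subnormalized state conditioned on the trap outcome $\tau(t)=0$. I would handle this by writing the real output as a classical-quantum state over the flag and applying the triangle inequality across the two flag values. This step needs care about correlations between trap outcomes and the logical output. Properness of the embedding (\cref{hdr:def:proper-embedding}) ensures that, for a fixed Pauli $\cptp E$, the trap and computation registers evolve independently, so the two branches factor. If that factorisation fails, I would fall back to bounding the abort branch and the accept branch separately, at the price of a factor-$2$ loss on $\delta$.
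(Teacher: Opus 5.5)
Your decomposition is the one the paper sketches. You split the honest noise according to whether it lands in $\mathcal E_\delta$ and charge the complement at most $p_\delta$. On $\mathcal E_\delta$ you use $\delta$-insensitivity for the acceptance probability and for the abort branch, and $\nu$-correctness for the accepted output. Step~2 is fine, and the final number you reach is right, but one sub-step is mis-justified.

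\textbf{The accept-branch step.} The claim that ``conditioning on acceptance only removes weight, so by the triangle inequality this term is at most $\nu$'' is false. Write $\rho_{\cptp E}=\rho^{\mathrm{acc}}_{\cptp E}+\rho^{\mathrm{rej}}_{\cptp E}$ for the unconditioned logical output, averaged over $T\sample\pd P$ and split according to $\tau(t)$. Let $q=\Tr\rho^{\mathrm{rej}}_{\cptp E}\le\delta$. Since $\rho^{\mathrm{acc}}_{\cptp E}$ has trace $1-q$, its distance to $\cptp C[\rho]$ is at least $q$ even when $\nu=0$. Removing weight can therefore cost up to the removed weight: the triangle inequality only gives $\|\rho^{\mathrm{acc}}_{\cptp E}-\cptp C[\rho]\|_1\le\|\rho_{\cptp E}-\cptp C[\rho]\|_1+q$.

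\textbf{Correct accounting.} Reading $\|\cdot\|_{\Tr}$ as $\tfrac12\|\cdot\|_1$ and using orthogonality of the flags, you get
\[
\tfrac12\bigl\|\rho^{\mathrm{acc}}_{\cptp E}\otimes\dyad{\accept}+q\,\dyad{\bot}\otimes\dyad{\abort}-\cptp C[\rho]\otimes\dyad{\accept}\bigr\|_1
=\tfrac12\bigl\|\rho^{\mathrm{acc}}_{\cptp E}-\cptp C[\rho]\bigr\|_1+\tfrac{q}{2}
\le\tfrac12\bigl\|\rho_{\cptp E}-\cptp C[\rho]\bigr\|_1+q\le\nu+\delta .
\]
So the stated $p_\delta+\delta+\nu$ survives, with each branch carrying half of $q$. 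With the unnormalized norm you get $\nu+2\delta$. That is the factor $2$ you anticipated, and it comes from the norm convention, not from correlations.

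\textbf{Why properness does not help.} Properness is not the right tool here. It constrains flow dependencies and input/output subspaces. It does not make $\tau$ independent of the logical output once you average over $T\sample\pd P$, and that correlation is exactly what traps exploit. Moreover, even an exact factorization $\rho^{\mathrm{acc}}_{\cptp E}=(1-q)\rho_{\cptp E}$ would leave the same weight deficit $q$. Drop that argument and keep the plain triangle-inequality bound.
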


\begin{proof}[Sketch of Proof]
  The theorem is obtained after quantifying: (i) how often honest-noise falls in the insensitive set, passes the tests and gives an incorrect result; and (ii) how often it falls outside of it and is then likely to provide an erroneous result.
\end{proof}

In essence, it acknowledges that by tuning the acceptance function, we can change the size of the set of deviations that the trappified scheme is insensitive to. By choosing the embedding accordingly, this allows to minimize the size of the deviations that yield a correct result but are detected and therefore make the protocol abort. By making these adjustments, it is possible to ensure that the insensitive set is as large as possible and comprises the deviations that are due to noise with high probability.

Combining both, we get the concrete, \cmp{BQP}-specific statement:
\begin{theorem}[Security and Noise-Robustness of \cref{hdr:proto:tbdqc}, Theorem 13 from \cite{KKLM22unifying}]
  Let \(\cmp{BQP}_G\) be the set of \cmp{BQP} computations that can be performed on graph \(G\), with bounded error \(c\). Let \(\sch P\) be a trappified scheme on graph \(G\) and let \(\mathcal E_\epsilon, \mathcal E_\delta, \mathcal{E}_\nu\) be subsets of Pauli deviations such that:
  \begin{itemize}
  \item $\mathcal{E}_\nu \subseteq \{\cptp E \in \mathcal{P}_V \mid \forall \cptp C \in \cmp{BQP}_G, \forall T \in \sch P, \Cdev{\cptp E} = \Cdev{\cptp \Id}\}$.
  \item $\mathcal{P}_V \setminus \mathcal E_\epsilon \subset \mathcal{E}_\nu$.
  \item $\sch P$ \(\epsilon\)-detects $\mathcal E_\epsilon$, is \(\delta\)-insensitive to $\mathcal E_\delta$ and perfectly insensitive to $\Id$.
  \item The honest prover's noise is modeled by sampling for each computation or test round an error $\cptp E \in \mathcal E_\delta$ with probability $p_\delta$ and $\cptp E = \Id$ with probability $1 - p_\delta$. 
  \end{itemize}
  Let \(n, d, w\) be defined as in \cref{hdr:proto:tbdqc-classical-io}. If the following conditions are satisfied:
  \begin{itemize}
  \item $\frac{w}{(n-d)(1 - \epsilon)}$ is away by a constant and upper-bounded by $\frac{1 - 2c}{2 - 2c}$.
  \item $p_\delta$ is away by a constant and upper-bounded by $\frac{w}{(n-d)\delta}$.
  \end{itemize}
  Then \cref{hdr:proto:tbdqc}, using \(\sch P\), \(\eta(n)\)-constructs the Secure Delegated Quantum Computation (\cref{hdr:res:sdqc}) for computations in set \(\cmp{BQP}_G\) in the Abstract Cryptography framework.
  \label{hdr:thm:tbdqc-classical-io-security}
\end{theorem}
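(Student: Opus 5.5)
The plan is to build a simulator for the Abstract Cryptography statement and then bound the distinguishing advantage by splitting over the Pauli deviations that the UBQC encryption induces. Blindness is handled exactly as in UBQC. The simulator, attached to Bob's interface of the SDQC resource (\cref{hdr:res:sdqc}), receives the leakage $l$: the graph $G$ and the order $\preceq_G$. It then plays Alice's part with uniformly random angles $\delta(i)\in\Theta$ and halves of EPR pairs in place of the RSP states. Blind-compatibility of $\sch P$ and properness of the embedding guarantee that the transcript Bob sees is independent of which runs are tests, of which canvases were drawn, and of the computation $\cptp C$.

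At the end of the run, the simulator measures its EPR halves in the bases that correspond to a freshly sampled trappified canvas in every test run. It evaluates $\tau$ on each test run, counts the failures $x$, and sets $c=1$ exactly when $x\ge w$. Properness ensures that this failure count has the same distribution as in the real protocol, so the real and ideal worlds can differ only on the output register in the accept branch.

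By the twirling lemma, Bob's attack, conditioned on his classical view, reduces to a convex combination of Pauli deviations $\cptp E=\bigotimes_k \cptp E_k$ over the $n$ runs. Because the choice of which $d$ runs are computations is uniform and hidden, the distribution of the per-run deviations is exchangeable with respect to the partition into test and computation runs. For each run I will call $\cptp E_k$ harmful if $\cptp E_k\notin\mathcal E_\nu$. A harmless deviation lies in $\mathcal E_\nu$, so it leaves $\Cdev{\cptp E_k}=\Cdev{\Id}$ unchanged, and that computation run outputs the correct $\cptp C$ sample except with the inherent \cmp{BQP} error $c$. A harmful deviation lies in $\mathcal E_\epsilon$, so in a test run it fires $\tau$ with probability at least $1-\epsilon$.

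Security then reduces to a sampling-without-replacement (Hoeffding/Serfling) argument. Let $m$ be the number of harmful runs. There are two cases.
\begin{itemize}
\item If $m$ is a constant fraction of $n$ with enough room relative to $w$, then about $(n-d)m/n$ test runs carry a harmful deviation, and about $(1-\epsilon)$ of those fail. Since $w/((n-d)(1-\epsilon))$ is bounded away below the relevant threshold, the verifier aborts except with probability $\exp(-\Omega(n))$.
\item If $m$ is small, then only a fraction of roughly $m/n$ of the $d$ computation runs is corrupted. This fraction stays below $\frac{1-2c}{2-2c}$. Adding the independent \cmp{BQP} error $c$ on the uncorrupted runs, each computation run's vote is wrong with probability bounded away below $1/2$, so a Chernoff bound shows that the majority vote is correct except with probability $\exp(-\Omega(d))$.
\end{itemize}
The threshold $\frac{1-2c}{2-2c}$ is exactly the value at which the corrupted fraction $f$ plus the error $(1-f)c$ on the remaining runs reaches $1/2$. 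Combining the two cases gives a negligible $\eta(n)$, provided $d$ and $n-d$ are linear in $n$.

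For robustness, the honest-but-noisy prover fails each test independently with probability at most $p_\delta\delta$, using perfect insensitivity to $\Id$. The condition on $p_\delta$ puts the expected number of failures $(n-d)p_\delta\delta$ a constant factor below $w$, so Hoeffding gives acceptance with overwhelming probability. Correctness in this case follows from the same majority-vote argument.

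The main obstacle I expect is the regime of intermediate $m$, together with the correlations that the adversary can introduce between runs and with the \cmp{BQP} randomness. The computation-run outcomes are not independent once we condition on the adversary's (adaptive) Pauli choice. To handle this I would first fix $\cptp E$, which is legitimate because the twirl makes the mixture classical, and only then apply the random partition. Choosing a single cutoff $m^\ast$ with slack on both sides is what makes the constants in the two hypotheses fit together.
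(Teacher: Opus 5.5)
Your proposal is correct and follows essentially the route the paper indicates. The UBQC twirl reduces the attack to a secret-independent Pauli mixture. Deviations are then split into those touching many runs harmfully, which are caught by the failed-test threshold $w$ because each harmful test fires with probability at least $1-\epsilon$, and those touching few, which are outvoted thanks to the $\frac{1-2c}{2-2c}$ margin. Robustness comes from $\delta$-insensitivity plus perfect insensitivity to $\Id$ with a Hoeffding bound.

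The only difference is organizational. The paper obtains the statement by combining its generic detection-implies-verifiability and robust-detection theorems, so the simulator is inherited rather than built by hand. Your cutoff $m^\ast$ plays the role of the split between $\mathcal E_\epsilon$ and $\mathcal E_\nu$ for the repeated, amplified scheme.
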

Note that the value of \(\eta(n)\) heavily depends on the value of \(\delta\) and \(\epsilon\), in particular via the coefficient in the exponential. This is a motivation for carefully designing the trappification scheme and embedding algorithm.
\subsection{Traps from Stabilizer Tests}
\label{hdr:sec:stabilizer-tests}
To instantiate a rich family of trappified canvases on any graph state, subset-stabilizer tests are natural. Let \(\grp S\) be the stabilizer group for \(\ketbra G\), the graph state associated to \(G = (V, E)\), and consider the set of canonical generators of \(\grp S\) equal to \(\{ \cptp S_v = \X_v \bigotimes_{(v,w)\in E} \Z_w, v \in V \}\).  We can write uniquely any \(\cptp R \in \grp S\) as \(\prod_{v \in V} S_v^{r_v}\) with \(r_v \in \bin\) and define \(\one_{\cptp R} := \{v \in V \mid r_v = 1\}\) to be the support of \(\cptp R\) in the generators \(\cptp S_v\).

A test round for \(\cptp R \in \grp S\) can be obtained by:
\begin{enumerate}
\item Input: prepare an eigenstate of \(\cptp R\).
\item Compute: Instruct an MBQC pattern consisting of measuring \(\cptp R\).
\item Decide: Accept iff the measured outcome corresponds to the prepared eigenstate.
\end{enumerate}

The test for a given stabilizer \(\cptp R \in \grp S\) is defined by having the verifier prepare each qubit \(v\) in the state:
\begin{itemize}
\item \(\ket{0}\) if \(\cptp R(v) = \pm \Z\).
\item \(\ket{+}\) otherwise.
\item It flips the qubit---while staying in the same basis---for \(v_0\) if there was a minus sign in \(\cptp R\).
\end{itemize}

The verifier then simply instructs the prover to measure each qubit in the \(\Y\)-basis if \(\cptp R(v) = \pm \Y\) and in the \(\X\) basis otherwise, getting outcome \(t(v)\) for vertex \(v\). It then computes \(\tau(t) := \bigoplus_{v\in \one_{\cptp R}} t(v)\) to determine the outcome of the measurement of \(\cptp R\) on the inputs provided by the verifier. For an honest prover, this value should return \(0\).

The freedom in choosing which stabilizers \(\cptp R\) to include in the trappified scheme \(\sch P\) will be at the core of constructing dummyless verification protocol.
\section{Takeaways}
\label{sec:orge63ca02}
We have split verifiable blind delegation into three minimal modules (RSP, Trappification, Embedding) and then reassembled them into a full protocol with composable guarantees. The broader lessons are:

\begin{itemize}
\item \emph{Modularization is leverage, not ornament.} Once blindness, deviation detection, and amplification are isolated, security reduces to checking a short list of local properties (detection, insensitivity, correctness) plus an explicit amplification rule. Proof effort scales with the module you touch, not with the whole protocol.
\item \emph{Blindness is a primitive.} Treating Remote State Preparation as its own resource clarifies what must remain hidden (the classical label of the prepared state) and what may leak (nothing beyond the state itself). It also makes room for alternative implementations (two-prover, weak coherent pulses, limited rotations) without disturbing the rest of the stack.
\item \emph{Traps are computations, not just qubits.} Starting from single-qubit traps, we generalized the concept to trappified canvases and schemes. Any deterministic, classically predictable partial pattern qualifies, provided its outputs can be evaluated from MBQC outcomes.
\item \emph{Proper embedding is the hinge between detection and amplification.} It must (i) keep information from the computation from flowing into the traps, to preserve blindness and enable simulation, and (ii) place the computation inside a redundancy structure---code space or repetition set---so that harmful deviations are necessarily high weight. Low-weight deviations are then corrected and should fall into the insensitivity set.
\item \emph{Local parameters drive global bounds.} The trio (detection, insensitivity, correctness) and the amplification policy (code distance, number of repetitions, acceptance threshold) map directly to the security error in Abstract Cryptography and to the robustness of the protocol. Tuning any of them is a transparent trade-off: stricter detection raises protection against cheating but can hurt robustness to honest noise, and vice versa.
\end{itemize}

A design workflow also emerged:
\begin{enumerate}
\item Pick or engineer an RSP implementation that matches the verifier's hardware.
\item Choose a trappified scheme that detects the deviations you care about and is insensitive to the noise you expect.
\item Select an embedding/amplification strategy that makes undetected harmful attacks improbable.
\item Read off the global bounds from the template theorems.
\end{enumerate}

In the next chapters we exploit this plug-and-play structure: we swap the naive RSP implementation for a lean one to lighten the verifier, design dummyless trappification to cut prover overhead, extend the orchestration to multi-party settings, and finally address verifier-side noise robustness. Each improvement touches one module (or its interface) while leaving the others intact.
\chapter{Robust VBQC}
\label{hdr:sec:rvbqc}
\lettrine[refstring,lines=3,lraise=0.15]{W}{e now} take the modular machinery for a spin. This chapter instantiates the three modules of the previous one---Remote State Preparation for blindness, Trappification for deviation detection, and Embedding for amplification---in the simplest non–fault-tolerant setting. The target class is \cmp{BQP}, so amplification comes from plain repetition and a majority vote, not from a bulky quantum code. Tests and computations are pulled apart into separate rounds: traps live in their own, and a single-round looks no heavier than the unprotected UBQC pattern.

The objectives are:
\begin{itemize}
\item Show, concretely, how a full VBQC-style protocol is instantiated once the three modules are fixed.
\item Drive the prover’s space overhead to zero: every qubit in a computation round is used for the computation, while traps are exiled to distinct test rounds.
\item Tune the acceptance rule so that honest physical noise---at moderate rate---is treated as harmless, giving a first notion of robustness without fault-tolerance.
\item Read off global, composable security bounds directly from the local properties---detection, insensitivity, correctness---established for trappification and using the chosen amplification.
\end{itemize}

In short, this chapter is an example, not a detour: it shows that the framework is practical and that meaningful robustness and low overhead can already be achieved with repetition alone, provided one is content with classical outputs.
\section{Protocol}
\label{sec:org819e6b8}
The construction below is the straight-line instantiation of our three modules for the special case of \cmp{BQP} computations. Two design choices drive everything:
\begin{itemize}
\item \emph{Amplification by repetition only.} No fault-tolerant code is inserted; we just repeat the whole delegated computation \(d\) times and take a majority vote.
\item \emph{Zero space overhead on the prover.} Computation and tests live in different rounds. In a computation round every qubit is used for the target MBQC pattern; in a test round every qubit serves the trap. Thus a single honest computation round costs exactly what an unprotected UBQC run would cost.
\end{itemize}

Security is therefore tied to how many rounds we run, not to inflating a single-round. This will let us tolerate a moderate amount of honest noise by loosening the trap acceptance threshold.

\begin{marginfigure}
  \resizebox{6cm}{!}{\input{figs/5-1}}
  \caption{\raggedright Possible sequence of 6 rounds of computation and trap rounds for a 5-qubit graph MBQC. Dotted qubits are used for computation, Hatched ones are for traps, while empty ones are for dummies isolating single-qubit traps.}
  \label{hdr:fig:rvbqc-rounds}
\end{marginfigure}

For a given graph \(G\) with flow \(f\), that implicitly defines a computation class \(\mathfrak C\), we denote by \(\{V_i\}_{i\in K}\)  a \(|K|\)-coloring of \(G\). With that in place, the following protocol verifies computations in \(\mathfrak C\) (See \cref{hdr:fig:rvbqc-rounds} for a visual definition of the various types of rounds used by the protocol).

\begin{protocol}[Robust VBQC]
  \begin{algorithmic}[0]
    \STATE \textbf{Alice's Inputs:} Angles $\qty{\phi_v}_{v \in V}$ and flow $f$ on graph $G$, classical input to the computation $x \in \bin^{|I|}$ .

    \STATE \textbf{Protocol:}
    \begin{enumerate}
    \item[1.] Alice chooses uniformly at random a partition $(C, T)$ of $[n]$ ($C \cap T = \emptyset$) with $|C| = d$, the sets of indices of the computation and test rounds respectively.

    \item[2.] For $j \in [n]$, Alice and Bob perform the following sub-protocol (Alice may send message $\Redo_j$ to Bob before step 2.c while Bob may send it to Alice at any time, both parties then restart round $j$ with fresh randomness):

      \begin{enumerate}
      \item[(a)] If $j \in T$ (test), Alice chooses uniformly at random a color $\mathsf{V}_j \sample \qty{V_k}_{k \in [K]}$ (this is the set of traps for this test round).
      \item[(b)] Alice sends $|V|$ qubits to Bob. If $j \in T$ and the destination qubit $v \notin \mathsf{V}_j$ is a non-trap qubit (therefore a dummy), then Alice chooses uniformly at random $d_v \sample \bin$ and sends the state $\ket{d_v}$. Otherwise, Alice chooses at random $\theta_v \sample \Theta$ and sends the state $\ket{+_{\theta_v}}$.
      \item[(c)] Bob performs a $\CZ$ gate between all its qubits corresponding to an edge in the set $E$.
      \item[(d)] For $v \in V$, Alice sends a measurement angle $\delta_v$, Bob measures the appropriate corresponding qubit in the \(\delta_v\)-basis, returning outcome $b_v$ to Alice. The angle $\delta_v$ is defined as follows:
        \begin{itemize}
        \item If $j \in C$ (computation), it is the same as in UBQC, computed using the flow and the computation angles $\qty{\phi_v}_{v \in V}$. For $v \in I$ (input qubit) Alice uses $\tilde{\theta}_v = \theta_v + x_v\pi$ in the computation of $\delta_v$.
        \item If $j \in T$ (test): if $v \notin \mathsf{V}_j$ (dummy qubit), Alice chooses it uniformly at random from $\Theta$; if $v \in \mathsf{V}_j$ (trap qubit), it chooses uniformly at random $r_v \sample \bin$ and sets $\delta_v = \theta_v + r_v\pi$.
        \end{itemize}
      \end{enumerate}

    \item[3.] For all $j \in T$ (test round) and $v \in \mathsf{V}_j$ (traps), Alice verifies that $b_v = r_v \oplus d_v$, where $d_v = \bigoplus_{i \in N_{G}(v)} d_i$ is the sum over the values of neighboring dummies of qubit $v$. Let $c_{\mathit{fail}}$ be the number of failed test rounds (where at least one trap qubit does not satisfy the relation above), if $c_{\mathit{fail}} \geq w$ then Alice aborts by sending message $\abort$ to Bob.

    \item[4.] Otherwise, let $y_j$ for $j \in C$ be the classical output of computation round $j$ (after corrections from measurement results). Alice checks whether there exists some output value $y$ such that $|\left\{ y_j \, | \, j \in C,\, y_j = y \right\}| > \frac{d}{2}$. If such a value $y$ exists (this is then the majority output), it sets it as its output and sends message $\ok$ to Bob. Otherwise it sends message $\abort$ to Bob.
    \end{enumerate}
  \end{algorithmic}
  \label{hdr:proto:rvbqc}
\end{protocol}
\section{Module definitions}
\label{sec:org0c96d26}
\subsection{Remote State Preparation}
\label{sec:org496e461}
The RSP implementation is the naive one. We simply have the verifier prepare quantum states that are sent to the prover via a quantum channel. This is enough for now as any post-delivery noise is by construction \(\theta\)-\emph{independent}. With Alice playing the verifier and Bob the prover we have:
\begin{protocol}[RSP (Naive Instantiation)]
  \begin{algorithmic}[0]

    \STATE \textbf{Public Information:} $\Lambda = \{0, 1, +_{\theta} \mbox{ for } \theta \in \Theta\}$ the classical description of the states that can be prepared.

    \STATE \textbf{Alice's Input:} $\lambda \in \Lambda$

    \STATE{\textbf{Preparation of the state:}}
    \begin{itemize}
    \item Alice prepares $\ket \lambda$ in a quantum register.
    \end{itemize} 

    \STATE{\textbf{Transmission of the state:}}
    \begin{itemize}
    \item Alice sends the quantum register prepared in $\ket \lambda$ to Bob.
    \end{itemize}

  \end{algorithmic}
  \label{hdr:proto:rvbqc-rsp}
\end{protocol}
\subsection{Trappification}
\label{sec:orgd9cd98a}
Because amplification is done using time in sequential rounds, not in space, the trappified canvases are simple: the trap is made of extra copies of the same graph state  \(\ket G\) prepared with inputs that force deterministic outcomes (See \cref{hdr:fig:rvbqc-rounds}).
\begin{itemize}
\item For a computation round: inputs and angles follow UBQC for the computation defined by \(G\), flow \(f\) and measurement angles \(\phi_v, \ v \in V\).
\item For a test round: we prepare another copy of \(G\), but (i) fill non-trap vertices with dummies \(\ket{0},\ket{1}\) to isolate single trap-qubits when needed, and (ii) choose angles so that selected vertices appear random.
\end{itemize}

Randomly interleaving which rounds are tests and which vertices host traps gives blindness over trap placement. The acceptance is decided when we know how many test rounds are correct, giving an idea of how many computation rounds are unaffected.

Choosing this trappification scheme, there is no space overhead per round. Computation and tests live in different rounds. Every qubit of a computation round serves the target MBQC; every qubit of a test round is either a single trap-qubit or a dummy. Hence a single-round costs exactly what the unprotected UBQC would.
\subsection{Embedding}
\label{sec:orgb22ef7e}
Here the embedding module is nothing but repetition:
\begin{itemize}
\item Choose \(d\) indices for computation rounds \(C\subset[n]\), run the target MBQC there.
\item Run \(n-d\) test rounds drawn from the trappified scheme.
\item Decode by majority vote on the \(d\) outcomes.
\end{itemize}

The amplification relies on the embedding of the computation result into a classical repetition code. For a \cmp{BQP} instance with error \(c\)---otherwise said with a gap bigger than \(c\)---the harmless deviations are those that flip less than a fraction \(\frac{2c-1}{2c-2}\) of the computation rounds. For in this case, it is impossible to flip the result of the majority vote. Therefore, these should be included into the insensitive set. To this end, we adjust the acceptance rule to accept as many as possible of these harmless deviations. Given that a computation with a single deviated qubit might yield the wrong result, the best we can do is to have at most \(\frac{2c-1}{2c-2}\times d\) computation rounds affected by a deviation. This in turn can be probed by the test rounds thanks to the randomization of their locations and the overall blindness. One only needs to keep in mind that test rounds do not detect deviations with probability 1. For universal graphs like the Brickwork state, this is equal to \(0.5\) as the chance of a given qubit to be a trap in a test round is \(\frac{1}{2}\).\sidenote{\raggedright \footnotesize This allows to have only two types of test rounds while ensuring a uniform coverage of all qubits. In practice, this means that a single qubit deviation will be detected with a probability independent its location. From a security point of view, this seems optimal, yet with respect to noise robustness, the question is largely unexplored.}   Putting everything together, if we set the acceptance function to output \(0\) as long as the fraction of failed test rounds is below and bounded away from \(\frac{1}{2}\times \frac{2c-1}{2c-2}\) by a constant then the accepted deviations are guaranteed to be harmless.

As a consequence, the protocol remains correct and accepts with high probability even when the prover is noisy, provided the global fraction of failed rounds stays below and bounded away by a constant from the threshold of \(\frac{1}{2}\times\frac{2c-1}{2c-2}\).
\section{Zero Space-Overhead Robust Verification}
\label{sec:org8b8830a}
Using the framework introduced earlier, it is rather straightforward to prove the security of \cref{hdr:proto:rvbqc}.
\marginnote{\\ The fact that we have such sharply distinct behavior between aborting with probability one and accepting with probability one in case of low noise signals the interest of these verification techniques for probing noise strength. This is an avenue that is currently pursued for setting benchmarking frameworks of quantum computers and networks that was indeed pioneered in the Quantum Internet Alliance project \cite{AFCD23requirements}. This possibility offers an unexpected leverage to encourage hardware makers to integrate the necessary features for implementing verification into their devices as it not only serves a utility-regime protocol---verification---but also fulfills a more immediate need of providing certifiable benchmarks for their early machines.}
\begin{theorem}[Security of \cref{hdr:proto:rvbqc}]
  Let \(G\) be a \(k\)-colorable graph, and \(\mathsf C\) a computation defined as an MBQC pattern on \(G\). For \(n = d+t\) such that \(d/n\) and \(t/n\) are fixed in \((0,1)\) and \(w\) such that \(w/t\) is fixed in \((0, \frac{1}{k}\cdot \frac{2c-1}{2c-2})\), where \(c\) is the inherent error probability of the \cmp{BQP} computation, \cref{hdr:proto:rvbqc} with \(d\) computation rounds, \(t\) test rounds, and a maximum number of tolerated failed test rounds of \(w\) is \(\epsilon\)-composably-secure with \(\epsilon\) exponentially small in \(n\).
  \label{hdr:thm:rvbqc-security}
\end{theorem}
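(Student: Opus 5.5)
The plan is to obtain \cref{hdr:thm:rvbqc-security} as an instance of \cref{hdr:thm:tbdqc-classical-io-security}. Concretely, we view \cref{hdr:proto:rvbqc} as \cref{hdr:proto:tbdqc-classical-io} with three ingredients. The first is the naive RSP of \cref{hdr:proto:rvbqc-rsp}, which perfectly constructs \cref{hdr:res:rsp}, so blindness of each round is inherited from UBQC. The second is the trappified scheme $\sch P$ consisting of the $k$ single-color test canvases, with $\pd P$ uniform over the colors. The third is the repetition embedding with majority-vote decoding. Properness of this embedding is immediate: computation and test rounds are disjoint, so no flow dependency ever reaches a trap vertex, and the input/output subspaces are the plain classical ones, independent of the canvas.

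The first step is to reduce to Pauli deviations. The UBQC one-time pad (random $\theta_v$, $r_v$, $d_v$) twirls any prover CPTP map over all $n$ rounds into a convex combination of Pauli deviations $\cptp E=\bigotimes_j \cptp E_j$ acting on all rounds. The $\Redo$ messages are handled exactly as in \cite{KKLM22unifying}: they are sent before any trap-dependent information is revealed, so they are independent of the secret choices and can be absorbed by the simulator.

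The second step is the single-round analysis. A nontrivial Pauli deviation $\cptp E_j$ on a round flips the outcome of at least one measured qubit $v$, after accounting for the $\CZ$ propagation, which only adds $\Z$ factors that dummies absorb. A test round picks the color containing $v$ with probability at least $1/k$. The trap at $v$ is isolated by dummies, so a $\Z$/$\Y$ component flips $b_v$ and is detected. The subtlety here is the $\X$ components: we must argue that in the $\ket{\pm_\theta}$ encoding with uniformly random $\theta$, any non-identity single-qubit deviation on a trap has constant detection probability, and then absorb that constant into the threshold. I expect this to be the bookkeeping hardest to make clean. The route I would try first is to fold the $\X$-part into an effective $\Z$-flip via the measurement-angle twirl, as is done for the original VBQC traps. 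The outcome is that each deviated test round fails with probability at least $1/k$, while an undeviated round never fails, giving perfect insensitivity to $\Id$.

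The third step is the global amplification, which is the main obstacle. A deviation is harmful only if it corrupts more than a fraction $\frac{2c-1}{2c-2}$ of the $d$ computation rounds, since otherwise the \cmp{BQP} gap $c$ plus a Chernoff bound on the honest rounds keeps the majority correct. Because the partition $(C,T)$ is uniformly random and hidden, the deviated rounds split hypergeometrically. If a fraction $\alpha$ of all rounds is deviated, then with probability $1-e^{-\Omega(n)}$ about $\alpha t$ test rounds are deviated and about $\alpha d$ computation rounds are deviated, using Hoeffding for sampling without replacement. The number of failed tests then concentrates above $\alpha t/k$. For a harmful deviation we have $\alpha \gtrsim \frac{2c-1}{2c-2}$, so the choice $w/t<\frac1k\frac{2c-1}{2c-2}$ with constant slack forces an abort except with exponentially small probability. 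This instantiates $\epsilon$-detection of $\mathcal E_\epsilon$, the harmful set, and $\nu$-correctness on its complement, with $\epsilon,\nu=e^{-\Omega(n)}$.

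Plugging these bounds into \cref{hdr:thm:tbdqc-classical-io-security} gives an $\eta(n)$-construction of \cref{hdr:res:sdqc}, with $\eta$ exponentially small in $n$ because $d/n$, $t/n$ and $w/t$ are fixed constants. The simulator is the standard one: it sends the leakage $(G,\preceq_G)$, runs the protocol with dummy computation angles, and sets $c=1$ exactly when the simulated test rounds abort. Properness guarantees that this abort decision is independent of the actual computation.
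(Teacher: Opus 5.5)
Your proposal follows the paper's route: instantiate the framework with the naive RSP, the $k$ single-colour test canvases and repetition with majority vote, then read security off \cref{hdr:thm:tbdqc-classical-io-security}, whose condition $w/((n-d)(1-\epsilon)) < \frac{1-2c}{2-2c}$ with single-round detection $1-\epsilon = 1/k$ is exactly the stated range $w/t < \frac{1}{k}\cdot\frac{2c-1}{2c-2}$. Two small fixes: \cref{hdr:thm:tbdqc-classical-io-security} already contains your Step~3 amplification, so it takes these single-round parameters (perfect insensitivity to $\Id$, and $\mathcal E_\nu$ the Paulis that flip no outcome), not your $e^{-\Omega(n)}$ bounds, which would instead go into the general detection-implies-verifiability theorem; and no constant needs absorbing for $\X$ components, because once the deviation is placed after the honest $\Z(-\delta_v)$ rotation each twirled Pauli either deterministically flips some outcome, and is then caught whenever it lies on the chosen colour, or flips none and is harmless in every round type.
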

This clearly demonstrates that the absence of space-overhead is not detrimental to efficiency as the security error remains negligible in the total number of round \(n\).

The robustness can be obtained likewise. One can even obtain that setting the threshold too low would yield to overwhelmingly aborting the computation.
\begin{theorem}
  As before, \(c\) denotes the inherent error probability for the \cmp{BQP} computation instances running as MBQC patterns on a \(k\)-colorable graph \(G\). Assume a Markovian round-dependent model for the noise on Verifier and Prover devices and let \(p_{\mathrm{min}} \leq p_{\mathrm{max}} < \frac{1}{k}\cdot \frac{2p-1}{2p-2}\) be respectively a lower and an upper-bound on the probability that at least one of the trap-qubit measurement outcomes in a single test round is incorrect. If \(w/t > p_{\mathrm{max}}\), \cref{hdr:proto:rvbqc} is \(\epsilon_{\mathrm{cor}}\)-correct with exponentially low \(\epsilon_{\mathrm{cor}}\). On the other hand, if \(w/t < p_{\mathrm{min}}\), then the probability that \cref{hdr:proto:rvbqc} terminates without aborting is exponentially low.
  \label{hdr:thm:rvbqc-correctness}
\end{theorem}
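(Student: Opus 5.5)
The plan is to reduce both claims to concentration bounds on sums of independent Bernoulli variables, one per round. The Markovian round-dependent noise model makes this possible: each round's deviation is drawn independently of the others, though possibly with a round-dependent law. In a test round \(j\), let \(X_j\in\{0,1\}\) indicate that at least one trap outcome is incorrect. By hypothesis \(p_{\mathrm{min}} \leq \Pr[X_j=1] \leq p_{\mathrm{max}}\). The protocol fixes the partition \((C,T)\) uniformly at random and independently of the noise, so conditioned on \((C,T)\) the variables \(\{X_j\}_{j\in T}\) are independent. Hoeffding's inequality, in its version for independent but non-identical Bernoulli variables, then applies to \(c_{\mathit{fail}}=\sum_{j\in T}X_j\), whose mean lies in \([t\,p_{\mathrm{min}}, t\,p_{\mathrm{max}}]\).

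The abort statement comes first, since it is the easy direction. If \(w/t<p_{\mathrm{min}}\), then \(\mathbb E[c_{\mathit{fail}}]\geq t\,p_{\mathrm{min}}\). This exceeds \(w\) by \(t(p_{\mathrm{min}}-w/t)\), a constant fraction of \(t\). The lower-tail Hoeffding bound gives \(\Pr[c_{\mathit{fail}}<w]\leq \exp(-2t(p_{\mathrm{min}}-w/t)^2)\). Since \(t/n\) is fixed, this is exponentially small in \(n\), and non-abort requires \(c_{\mathit{fail}}<w\).

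For correctness with \(w/t>p_{\mathrm{max}}\), I would bound two bad events separately.
\begin{itemize}
\item \emph{Spurious abort at step 3.} The upper-tail Hoeffding bound gives \(\Pr[c_{\mathit{fail}}\geq w]\leq \exp(-2t(w/t-p_{\mathrm{max}})^2)\).
\item \emph{Wrong or absent majority at step 4.} The Pauli twirling performed by UBQC reduces the noise in each computation round to a Pauli deviation. A computation round is \emph{affected} if its deviation is one a single-color trap round could detect, i.e.\ one that could change the output. Each qubit is a trap with probability \(1/k\) under the uniform choice of color, and blindness hides which color, and even which round type, is being played. So the affected probability \(q_j\) of a computation round satisfies \(q_j \leq k\Pr[X_j=1]\leq k\,p_{\mathrm{max}}\).
\end{itemize}
I expect the hypothesis \(p_{\mathrm{max}}<\frac1k\cdot\frac{2c-1}{2c-2}\) to be exactly what makes \(k p_{\mathrm{max}}\) stay below, by a constant, the fraction of computation rounds that may be corrupted without flipping the majority. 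This is the threshold discussed in the embedding subsection. The unaffected rounds output the correct value with probability at least \(1-c\), independently across rounds. A Chernoff bound on the number of correct outputs among the \(d\) computation rounds then shows that the correct \(y\) has a strict majority except with probability \(e^{-\Omega(d)}\), and \(d/n\) is fixed. A union bound over the two events gives \(\epsilon_{\mathrm{cor}}=e^{-\Omega(n)}\).

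The main obstacle is the middle step: relating the noise seen in computation rounds to the trap-failure statistics of test rounds. Noise acts identically on both round types only because the prover cannot distinguish them. For an honest but noisy prover this holds once the noise model is assumed independent of the verifier's secret choices, which is automatic since RSP and UBQC make the prover's view identical in all rounds. I would formalize this by invoking the local property machinery of Theorem~\ref{hdr:thm:tbdqc-classical-io-security}: single-color trap rounds \(\frac1k\)-detect every non-trivial Pauli, and deviations undetected by all colors are harmless, i.e.\ \(\nu=0\). Detection per round is therefore at least \(\frac1k\), and the factor \(k\) in the inequality \(q_j\leq k\Pr[X_j=1]\) follows. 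The remaining difficulty is that noise is round-dependent rather than i.i.d. Because the partition \((C,T)\) is uniform, each round is a computation round or a test round with fixed marginal probabilities. I would handle this by conditioning on the noise schedule and using a sampling-without-replacement (Serfling/Hoeffding) bound, which transfers the average failure rate between the test set and the computation set up to an exponentially small error.
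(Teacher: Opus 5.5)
Your proof is correct. For the abort claim and for the step-3 spurious-abort term it matches the paper: under the Markovian round-dependent model the rounds are independent, and Hoeffding's inequality is applied to $c_{\mathit{fail}}$ in each direction.

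You diverge in how the accepted output is certified. The paper does not re-analyse computation rounds. An honest-but-noisy prover is just one particular deviation, so the security result (\cref{hdr:thm:rvbqc-security}) already guarantees that, once the trap threshold is met, the majority output is correct up to an exponentially small error. This is the ``accepted deviations are guaranteed to be harmless'' remark of the embedding discussion, and the paper says robustness is obtained ``likewise''. On that route, the hypothesis $p_{\mathrm{max}}<\frac1k\cdot\frac{2c-1}{2c-2}$ only makes the window $p_{\mathrm{max}}<w/t<\frac1k\cdot\frac{2c-1}{2c-2}$ non-empty, so that $w$ satisfies both the acceptance condition and the security condition.

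You instead prove majority correctness directly, through $q_j\le k\,p_j$ and a second Hoeffding bound. You use the same hypothesis in the form $(1-kp_{\mathrm{max}})(1-c)>\frac12$.
\begin{itemize}
\item The security route is shorter. It needs no new link between test and computation rounds, because the security proof already handles arbitrary deviations.
\item Your route is self-contained and gives explicit constants. It also yields correctness for every $w/t>p_{\mathrm{max}}$, independently of the security window. The price is that you must carry the twirling argument yourself.
\end{itemize}

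Two minor corrections. First, the fact that the twirled noise of round $j$ is the same in test and computation rounds is an assumption of the noise model, not an automatic consequence of the prover's view being identical. Verifier-side preparation noise on the $\ket 0,\ket 1$ dummies of test rounds could differ from that on $\ket{+_\theta}$ states. Second, your Serfling step is unnecessary. The hypothesis bounds each round individually, so conditioning on $(C,T)$ already gives independent, non-identical Bernoulli variables in both sets.
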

\section{Takeaways}
\label{sec:org50a2617}
We have shown that using UBQC wrapped in a schedule of computation and test rounds together with a majority vote enable a verification protocol that is both prover-light and noise robust.

In summary: 
\begin{itemize}
\item Every qubit in a computation round serves the target task; traps live in separate test rounds.
\item As outputs are  classical, we boost detection via majority vote, not heavy fault-tolerant codes.
\item The threshold on failed tests trades amplification strength against robustness to honest noise.
\item RSP is the plain UBQC sender-prepares primitive; trappification are stand-alone trap rounds; embedding is repetition and majority decoding.
\item Local guarantees---detection, insensitivity, correctness, are lifted to global AC security, while honest-but-noisy provers still pass.
\end{itemize}

This sets the stage for the next chapters: slimming the verifier further, adding multi-party features, and eventually handling fully fault-tolerant workloads.
\chapter{Dummyless VBQC}
\label{hdr:sec:dummyless}
\lettrine[refstring, lines=3, lraise=0.15]{T}{he goal} of this chapter is to show, by direct instantiation of the modular framework, that verifiable blind delegation does not actually need dummies\textasciitilde{}(see \cref{hdr:page:dummy}) when the target computation has classical input and classical output. In VBQC, dummies serve to disconnect parts of the graph and insert traps. Here we replace them entirely by (i) an intrinsic symmetry of classical-Input/Output MBQC, which makes one specific non-trivial Pauli error harmless, and (ii) stabilizer traps built only from \(\I, \X, \Y\) operators. The outcome is a protocol that:
\begin{itemize}
\item Keeps blindness.
\item Detects every harmful deviation with negligible failure probability.
\item Imposes no extra space overhead on the prover: every qubit in a computation round can be used for the computation as traps live in separate rounds.
\item Uses the same RSP instantiation as robust VBQC (\cref{hdr:proto:rvbqc}) but with states restricted to \(XY\) plane.
\end{itemize}

Beyond immediate practicality (simpler sources, simpler calibration), the exercise is conceptually useful: it stresses how the framework lets us push invariances into the insensitive set and trim the trappification module accordingly, while preserving full composable security. More importantly, the obtained protocol will be used for all subsequent improvements.
\section{Protocol}
\label{sec:org3f63f7f}
The dummyless protocol is a direct instantiation of VBQC (\cref{hdr:proto:vbqc}) with specific choices of RSP, Trappification and Embedding. Indeed, the next section will first instantiate the Trappification as it will unlock the ability to have an RSP with only states in the equatorial plane. The embedding will be the repetition as we will restrict ourselves to \cmp{BQP} computations.
\section{Module Definitions}
\label{sec:org530237a}
\subsection{Dummyless Trappification}
\label{hdr:sec:dummyless_verification}
\subsubsection{A Natural Invariance of MBQC with Classical Input and Output}
\label{hdr:sec:natural-invariance}
In MBQC, computation qubits, i.e. \(v\in O^c\), are measured in the \(\ket{\pm_{\phi'(v)}}\) basis, with \(\phi'(v) \in \Theta=\qty{\frac{k\pi}{4}}_{k \in \qty{0, \ldots, 7}}\) fixed by the pattern. Because a projective measurement only depends on its projectors, any unitary that leaves those projectors invariant will not change the measurement statistics. When measuring along the \(\phi'(v)\)-axis in the \(XY\) plane, any rotation around that axis let the outcome statistics unchanged. This invariance is routinely used in UBQC security proofs to twirl the prover's attack.

The same observation extends beyond unitaries. If we allow any local invertible map that reflects the Bloch sphere with respect to the \(XY\) plane, the projectors onto \(\ket{\pm_{\phi'(v)}}\) are still left unchanged. Hence, measurement probabilities are unaffected. When \emph{all} qubits are measured in the \(XY\) plane---corresponding to classical-I/O---this invariance propagates to the final classical result.

To this end, we introduce a linear map \(F_A\) that flips the sign of every \(\Z\) term on a chosen subset \(A\subset V\) in the Pauli expansion of a state \(\rho\) on \(V\). More formally, for
\begin{align}
  \rho &
         = \sum_{\cptp P\in \{\I,\X,\Y,\Z\}^{\otimes n}} \alpha_{\cptp P} \cptp P, \mbox{ then } \\
  F_A(\rho) &
              = \sum_{\cptp P\in \{\I,\X,\Y,\Z\}^{\otimes n}} (-1)^{\zwt_A(\cptp P)} \alpha_{\cptp P} \cptp P,
\end{align}
where \(\zwt_A(\cptp P) = |\{ v\in A | \cptp P_v = \Z \}|\) counts the number of vertices in \(A\) on which \(\cptp P\) equals the Pauli \(\Z\).
Then we have:
\begin{lemma}
  Let \(G = (V, E)\)  and an MBQC pattern on \(G\) measuring all qubits in \(V\). Then for any \(A \subset G\), applying \(F_A\) right before the \(XY\) plane measurements leaves the output distribution unchanged.
  \label{hdr:lem:fa_harmless}
\end{lemma}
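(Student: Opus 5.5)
The plan is to reduce the lemma to a statement about single-qubit $XY$-plane projectors, then propagate it through the adaptive measurement sequence. First observe that $F_A$ factorises over qubits: since $\zwt_A(\cptp P)=\sum_{v\in A}[\cptp P_v=\Z]$, the sign $(-1)^{\zwt_A(\cptp P)}$ is a product of single-qubit signs. Hence $F_A=\bigotimes_{v\in A} F_v \otimes \bigotimes_{v\notin A}\mathrm{id}_v$, where $F_v$ is the linear map on one qubit fixing $\I,\X,\Y$ and sending $\Z\mapsto-\Z$, i.e.\ the reflection of the Bloch sphere through the $XY$ plane. It is useful to note that $F_v$ coincides with the transpose composed with a $\Y$-type conjugation (it is not completely positive), so we must argue at the level of the linear map on operators rather than as a physical channel. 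This is harmless because the lemma only concerns output probabilities, which are linear functionals of the state.

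Second, compute the adjoint action on measurement effects. For $\Pi^{\pm}_{\phi}=\dyad{\pm_\phi}=\tfrac12(\I\pm\cos\phi\,\X\pm\sin\phi\,\Y)$ we have $F_v^\dagger(\Pi^\pm_\phi)=\Pi^\pm_\phi$, since $F_v$ is self-adjoint with respect to the Hilbert--Schmidt inner product (it is diagonal in the orthogonal Pauli basis) and the effect has no $\Z$ component. Consequently, for any joint state $\rho$ on $V$ (possibly including the prover's deviations applied before) and any tuple of outcomes $b$ for a fixed tuple of angles, $\Tr\big[\bigotimes_v \Pi^{b_v}_{\phi_v} F_A(\rho)\big]=\Tr\big[F_A^\dagger(\bigotimes_v \Pi^{b_v}_{\phi_v})\rho\big]=\Tr\big[\bigotimes_v\Pi^{b_v}_{\phi_v}\rho\big]$, using the tensor factorisation of $F_A$ and of the product effect.

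Third, handle adaptivity, which is where I expect the only real care to be needed. The angles $\phi'(v)$ depend on earlier outcomes through \cref{hdr:eq:angle-update}, so the measurement is not a fixed product. I would write the probability of a full outcome string $b$ as $\Tr[\bigotimes_v \Pi^{b_v}_{\phi'_v(b_{\prec v})}\,\rho]$: because all measurements commute (they act on distinct qubits) and the map is applied to the state right before all of them, the adaptive sequence yields, for each complete string $b$, a product effect whose angles are determined by $b$, and all of these angles stay in the $XY$ plane. Applying the previous step string by string shows every $\Pr[b]$ is unchanged. Since the classical output is a deterministic function of $b$ (the corrections are classical), the output distribution is unchanged, which proves \cref{hdr:lem:fa_harmless}. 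The remaining subtlety is to check that $F_A(\rho)$ need not be a valid state: the argument only uses linearity and the identity above, so the claim is about the formal output distribution, and it coincides with the true one whenever $F_A$ arises composed with physical operations, such as a Pauli $\Z$-sign pattern inside an attack.
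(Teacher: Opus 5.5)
Your proof is correct and follows the same route as the paper's own justification (the paper omits a formal proof): $F_A$ factorises into single-qubit reflections through the $XY$ plane, which leave every $XY$-plane projector invariant, so the probability of each complete, adaptively generated outcome string, and hence the classical output, is unchanged. One slip does not affect the argument: with the computational-basis transpose the reflection is $F_v=\X(\cdot)^{T}\X$, whereas $\Y(\cdot)^{T}\Y$ flips all three Paulis (the universal NOT), but your proof never relies on that remark.
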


The map \(F_A\) is not generally a unitary, but it is still possible to find a unitary transformation that has the same effect as \(F_A\) on \(\ketbra{G}\). As a result, \(F_A\left[\ketbra{G}\right]\) is a physical state:
\begin{lemma}
  For any graph \(G = (V,E)\) it holds that \(F_A\left[\ketbra{G}\right] = \cptp U \ketbra{G} \cptp U^\dagger\), where
  \begin{align}
    \cptp U = \prod_{\substack{v \in V, \\ \deg{v} \equiv 1 \!\!\!\!\! \pmod{2}}} \Z_v
  \end{align}
  describes the application of \(\Z\) operators to all odd-degree vertices of \(G\).
  \label{hdr:lem:fv_as_unitary}
\end{lemma}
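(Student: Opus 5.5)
The plan is to take the lemma with \(A=V\), which is the only reading under which the stated \(\cptp U\) can be independent of \(A\) and which matches its label. We compare the two sides term by term in the Pauli expansion of \(\ketbra{G}\). Write \(\ketbra{G} = 2^{-|V|}\sum_{\cptp R\in\grp S}\cptp R\), where \(\grp S\) is the stabilizer group of \cref{hdr:sec:stabilizer-tests}. Every \(\cptp R\) can be written as \(\prod_{v} \cptp S_v^{r_v}\) with \(r\in\bin^{V}\). Distinct \(r\) give distinct Pauli strings, so this sum is exactly the Pauli expansion, and each Pauli string appears once with some coefficient \(\pm 2^{-|V|}\) or \(\pm i\,2^{-|V|}\). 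Since both \(F_V\) and conjugation by \(\cptp U\) act on each Pauli string by a sign that depends only on the string, not on its coefficient, it suffices to show that these two signs coincide for every \(r\).

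First, I will read off the local Pauli of \(\cptp R\) at each vertex \(w\). Up to a global phase, it is \(\X^{r_w}\Z^{z_w}\), where \(z_w = \sum_{u\in N_G(w)} r_u \bmod 2\). Hence the factor at \(w\) is \(\Z\) exactly when \(r_w=0\) and \(z_w=1\); it is \(\X\) or \(\Y\) when \(r_w=1\), and \(\I\) otherwise. This gives \(\zwt_V(\cptp R)\equiv \sum_w (1-r_w)z_w \pmod 2\). Second, conjugating by \(\cptp U=\prod_{\deg v\text{ odd}}\Z_v\) flips the sign of \(\cptp R\) once for each odd-degree vertex carrying an \(\X\) or \(\Y\), i.e.\ by \((-1)^{\sum_{\deg w \text{ odd}} r_w}\).

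The core step is then the parity identity
\begin{align}
\sum_w (1-r_w)z_w \equiv \sum_w z_w - \sum_w r_w z_w \equiv \sum_{u} \deg(u)\, r_u - 2\,|E(G[\one_{\cptp R}])| \equiv \sum_{\deg u\text{ odd}} r_u \pmod 2 .
\end{align}
For the first term I swap the double sum: \(\sum_w\sum_{u\sim w} r_u=\sum_u \deg(u) r_u\). For the second, each edge of the subgraph induced by \(\one_{\cptp R}\) is counted twice, as \((w,u)\) and \((u,w)\). This shows that both maps multiply each stabilizer term by the same sign, which proves \(F_V[\ketbra G]=\cptp U\ketbra G\cptp U^\dagger\).

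The main obstacle I expect is bookkeeping rather than substance. One has to make sure that the \(\pm\) and \(\pm i\) phases generated by multiplying the generators (from \(\X\Z=-i\Y\) and from reordering) are harmlessly absorbed into the coefficients \(\alpha_{\cptp P}\). One also has to check that \(z_w\) is used only through its reduction mod \(2\). Both points follow from the fact that \(F_V\) and \(\cptp U\cdot\cptp U^\dagger\) are linear and diagonal in the Pauli basis.
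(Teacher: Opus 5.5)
Your proof is correct. Reading the statement with \(A=V\) is not merely convenient but forced:
\begin{itemize}
\item For \(A=\emptyset\) the identity would say \(\cptp U\ketbra{G}\cptp U^\dagger=\ketbra{G}\). This fails as soon as \(G\) has an odd-degree vertex, because no element of \(\grp S\) other than \(\Id\) has empty \(\X\)-support, so \(\pm\cptp U\notin\grp S\).
\item For \(\emptyset\neq A\subsetneq V\), the map \(F_A\) is a partial transpose on \(A\) followed by an \(\X\) conjugation, so \(F_A[\ketbra{G}]\) need not even be positive.
\end{itemize}
Your term-by-term comparison is sound. \(\zwt_V\) of the stabilizer element indexed by \(r\) has the parity of the number of edges between \(\one_{\cptp R}\) and its complement, i.e.\ of \(\sum_u \deg(u)\,r_u\). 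That is exactly the sign exponent produced by conjugation with \(\cptp U\). (The \(\pm i\) coefficients you allow never occur, since every element of \(\grp S\) is Hermitian, but allowing them costs nothing.)

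The manuscript gives no proof of this lemma; all proofs are deferred to the original publications, so there is no in-text argument to compare yours with. For reference, there is a shorter route that also explains why only \(A=V\) works:
\begin{itemize}
\item On one qubit, the map fixing \(\X,\Y\) and negating \(\Z\) is \(\rho\mapsto\X\rho^{\top}\X\). Hence \(F_V[\rho]=\X^{\otimes|V|}\rho^{\top}\X^{\otimes|V|}\).
\item The graph state has real amplitudes, so \((\ketbra{G})^{\top}=\ketbra{G}\).
\item Reordering the product of all generators gives \(\prod_v\cptp S_v=\pm\X^{\otimes|V|}\cptp U\). Since this product stabilizes \(\ket{G}\), it follows that \(\X^{\otimes|V|}\ket{G}=\pm\cptp U\ket{G}\), which is the claim.
\end{itemize}
Your parity computation is essentially this identity unpacked over every element of \(\grp S\). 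It has the merit of working directly with the Pauli-expansion definition of \(F_A\) used in the paper. The transpose argument is shorter, reduces everything to the single element \(\prod_v\cptp S_v\), and shows more generally that \(F_V\) maps every state to a state.
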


Putting both lemmas together yields the specific harmless deviation that will matter later:
\begin{theorem}
  Let \(G = (V,E)\) be a graph and \(\cptp E^*\) be the unitary operation given by
  \begin{align}
    \cptp E^* = \prod_{\substack{v \in V, \\ \deg{v} \equiv 1 \!\!\!\!\! \pmod{2}}} \Z_v,
  \end{align}
  describing the application of \(\Z\) operators to all odd-degree vertices of \(G\). For MBQC on \(G\) with classical input and output, the application of \(\cptp E^*\) before the measurements has no effect on the results of the computation.
  \label{hdr:thm:harmless}
\end{theorem}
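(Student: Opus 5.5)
The plan is to combine \cref{hdr:lem:fa_harmless} and \cref{hdr:lem:fv_as_unitary} with $A = V$, taking care of one gap between them. \cref{hdr:lem:fv_as_unitary} describes $F_V$ acting on the bare graph state $\ketbra{G}$. In an actual run, however, $\cptp E^*$ is applied to the state just before the measurements. In UBQC that state is $\ketbra{G}$ dressed with the one-time-pad rotations $\Z(\theta_v)$ and possibly the input flips $\X^{x_v}$. So the first step is to reduce to the bare graph state.

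For classical input, the input is encoded as $\ket{+_{x_v\pi}} = \Z^{x_v}\ket{+}$. The prepared state before measurement is therefore $\cptp D\ketbra{G}\cptp D^\dagger$, with $\cptp D = \prod_v \Z(\theta_v + x_v\pi)$ diagonal. All of these $\Z$-rotations commute with $\cptp E^*$, which is a product of $\Z$ operators. Each measurement in the $\delta_v$-basis on $\cptp D\ketbra{G}\cptp D^\dagger$ is the same as a measurement in the $\delta_v - \theta_v - x_v\pi$ basis on $\ketbra{G}$, which is still an $XY$-plane measurement. Hence it is enough to show that, for every choice of $XY$-plane angles, measuring $\cptp E^*\ketbra{G}\cptp E^{*\dagger}$ gives the same joint outcome distribution as measuring $\ketbra{G}$. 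Adaptivity is no obstacle. The joint distribution of all outcomes under adaptive angle choices is built from the conditional probabilities of each successive $XY$ measurement given the previous outcomes. Each of these comes from joint projectors $\bigotimes_v \ketbra{\pm_{\phi_v}}$, so equality of all fixed-angle joint distributions gives equality of the adaptive distribution.

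Next, \cref{hdr:lem:fv_as_unitary} with $A=V$ gives $\cptp E^*\ketbra{G}\cptp E^{*\dagger} = F_V[\ketbra{G}]$. \cref{hdr:lem:fa_harmless} then says that measuring $F_V[\ketbra{G}]$ in $XY$-plane bases gives the same distribution as measuring $\ketbra{G}$. A direct check would also work: each projector $\bigotimes_v \ketbra{\pm_{\phi_v}}$ expands only over $\I,\X,\Y$ terms, so its Hilbert–Schmidt overlap with any Pauli carrying a $\Z$ factor vanishes. The sign flips introduced by $F_V$ therefore never show up in $\Tr(\Pi\,\cdot)$. Since the classical output is a deterministic function of the outcomes, decoded by Alice with her secret $r_v,\theta_v$, the output distribution is unchanged. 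Because $\cptp E^*$ is a fixed Pauli, the same argument passes through the twirling picture used elsewhere in the paper.

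The only delicate step is the reduction in the first paragraph. The lemmas are stated for the bare graph state with the pattern's own measurements, whereas the protocol applies $\cptp E^*$ to a rotated state with encrypted angles. The point to verify is that all encryption operations are diagonal, and that classical inputs enter only as $\Z$-type phases, never as $\X$ flips that would fail to commute with $\cptp E^*$. Once this holds, the result is a two-line composition of the lemmas.
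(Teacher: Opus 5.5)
Your proposal is correct and follows the paper's route. The theorem is exactly the combination of \cref{hdr:lem:fv_as_unitary} with $A=V$, so that $\cptp E^*$ acts on $\ketbra{G}$ as $F_V$, and \cref{hdr:lem:fa_harmless}, since $XY$-plane projectors carry no $\Z$ component and $F_V$ is therefore invisible to the statistics; your reduction from the encrypted state and your treatment of adaptivity only make explicit what the paper leaves implicit. The step you flag as delicate is not: $\X^{a}$ input flips would also be harmless, because Pauli conjugations commute as channels ($\Z\X=-\X\Z$ differs only by a global phase).
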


In short, for any classical-I/O MBQC there exists a non-trivial, generally non-stabilizer Pauli deviation---the product of \(Z\)'s on odd-degree vertices---that is undetectable by our forthcoming dummyless traps, yet harmless for correctness. We should exclude it explicitly from the set of errors we try to catch as we pay no security price in doing so.
\subsubsection{Trappification Scheme}
\label{sec:org648c2f4}
We now exploit this newly found invariance. Because we restrict ourselves to preparations in the \(XY\) plane and MBQC authorizes only \(XY\) measurements, every trap we insert must use only those resources. Concretely, we want stabilizers of the graph state that never ask us to prepare or measure a \(\Z\) eigenstate. That pushes us to look for generators of the stabilizer group of \(\ket G\) made only of \(\I,\X,\Y\). This is because preparing an eigenstate of a generator containing a \(Z\) at a given location  is done by preparing a \(\ket 0\) state for this location.

Intuitively, we need to (i) build as many independent, deterministic \(\X,\Y\) stabilizer tests as the graph allows; (ii) randomize over them; (iii) ignore the one residual deviation that we proved harmless. We will show that this is indeed enough to detect all harmful deviations.

The following lemma formalizes the first step.
\begin{lemma}
  Let \(G = (V,E)\), the graph state \(\ket G\) and its stabilizer group \(\grp S\). There exists \(|V|-1\) generators of \(\grp S\) that are tensor products of \(\I\), \(\X\) and \(\Y\) only. Denote by \(\grp R\) the subgroup generated by those \(|V| -1\) elements.
  \label{hdr:lem:stab_tests}
\end{lemma}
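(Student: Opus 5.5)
The plan is to turn the statement into a linear-algebra question over $\mathbb F_2$. Every element of $\grp S$ can be written $\cptp S_A=\prod_{v\in A}\cptp S_v$ for a unique $A\subseteq V$. Up to phase, $\cptp S_A$ acts on vertex $w$ as $\X^{[w\in A]}\Z^{|N_G(w)\cap A|}$. Hence $\cptp S_A$ contains no $\Z$ factor exactly when
\[
  \forall w\notin A,\quad |N_G(w)\cap A|\equiv 0 \pmod 2 .
\]
Call such $A$ \emph{$\Z$-free}. This condition is not linear, since the product of two $\Z$-free elements may contain a $\Z$. The lemma, however, only asks for $|V|-1$ $\Z$-free elements that are independent, i.e. whose index vectors are linearly independent in $\mathbb F_2^{V}$. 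So the goal is to show that the $\Z$-free sets span a subspace of dimension $|V|-1$.

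The first step is an upper bound on what such elements can generate, using \cref{hdr:lem:fv_as_unitary} and \cref{hdr:thm:harmless}. Any $\Z$-free stabilizer $\cptp R$ satisfies $F_V(\cptp R)=\cptp R$. Since $F_V$ acts on $\ketbra G$ as conjugation by $\cptp E^*$, every $\Z$-free element must commute with $\cptp E^*$. Concretely, its $\X$-support must meet the odd-degree vertices in an even number of points. Commuting with $\cptp E^*$ is a single linear constraint on $A$, namely $|A\cap V_{\mathrm{odd}}|$ even. It is non-trivial whenever $V_{\mathrm{odd}}\neq\emptyset$, and otherwise I would replace it by a suitable alternative constraint handled in the same way. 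So $\Z$-free elements live inside an index-2 subgroup $\grp R'$ of $\grp S$, and $|V|-1$ is the best one can hope for. This also explains why $\cptp E^*$ is the residual undetectable deviation.

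The main step is the matching lower bound: the $\Z$-free sets span all of $\grp R'$. I would argue by induction on $|V|$, deleting a vertex $u$ and lifting $\Z$-free sets of $G-u$ to $G$. A $\Z$-free set $A'$ of $G-u$ remains $\Z$-free in $G$ if $|N(u)\cap A'|$ is even. Otherwise I lift it to $A'\cup\{u\}$, which is $\Z$-free provided the new vertex $u$ does not break the parity of its neighbours outside $A'$. Basic examples to seed the construction are $A=V$, which is always $\Z$-free since each vertex then carries $\X$ or $\Y$; complements $V\setminus\{v\}$ with $\deg v$ even; and complements $V\setminus\{u,v\}$ for adjacent odd-degree $u,v$. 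For the general case I expect a Gallai-type parity argument to be needed. The relevant statement is that $V$ can be partitioned so that the required parities $|N(w)\cap B|\equiv \deg w$ hold for all $w\in B=V\setminus A$. This gives enough complements $A=V\setminus B$ with prescribed parity patterns to hit $|V|-1$ independent directions.

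This lifting/spanning step is where I expect the real difficulty. The $\Z$-free condition is quadratic, so dimension counting alone does not work, and the induction must control parity at neighbours of the removed vertex. If the direct induction stalls, my fallback is an explicit construction. I would take $\cptp S_V$ together with elements $\cptp S_{V\setminus B_i}$, where each $B_i$ is a two-vertex set or a short path chosen along a spanning tree, so that the parity constraints at the endpoints are satisfied. I would then check independence of the $|V|-1$ resulting vectors by triangularity along the tree order.
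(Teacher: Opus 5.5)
You have the right reformulation: $\cptp S_A$ is $\Z$-free iff every $w\notin A$ has an even number of neighbours in $A$. Your parity upper bound $|A\cap V_{\mathrm{odd}}|\equiv 0$ is also correct, and it is what later singles out $\cptp E^*$. But the lemma is only the lower bound, and that is where your proposal has a genuine gap.

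The vertex-deletion induction cannot work as set up. Adding $u$ to $A'$ flips the parity at every neighbour of $u$ outside $A'$, so the lift succeeds only when $N_G(u)\subseteq A'$, and nothing controls how many independent sets survive. The Gallai-type partition you invoke is just a restatement of $\Z$-freeness of a single set and says nothing about rank. Worse, deleting $u$ changes the degree parities of its neighbours and can disconnect the graph, and for disconnected graphs your target statement is false. For two disjoint edges $1\sim 2$, $3\sim 4$, the only $\Z$-free stabilizers are $\Y_1\Y_2$, $\Y_3\Y_4$ and their product, a subgroup of rank $2<|V|-1$. In general each connected component containing odd-degree vertices imposes its own parity constraint. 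So the lemma needs $G$ connected (or at most one such component), a hypothesis your argument only uses implicitly through the spanning tree; an induction would have to carry the rank $|V|-(\text{number of such components})$. Relatedly, no ``alternative constraint'' exists when all degrees are even: for the triangle, $\Y\Y\I$, $\I\Y\Y$ and $\X\X\X$ already generate all of $\grp S$. This is harmless, since the lemma only needs at least $|V|-1$.

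The missing step is an explicit spanning family. It is essentially your fallback made precise, and it is what the paper's traps by ``poking holes'' at even-degree vertices and by chains between odd-degree vertices implement. Work with index vectors in $\mathbb F_2^V$.

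First, $V$ and $V\setminus\{v\}$ are both $\Z$-free whenever $\deg v$ is even. Hence $e_v$ lies in the span for every even-degree $v$.

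Second, let $P$ be an induced path between odd-degree vertices $y,z$ whose interior vertices all have even degree. Then $V\setminus P$ is $\Z$-free, since in $G[P]$ the endpoints have degree $1$ and interior vertices degree $2$. Adding the vector of $V$ and removing the interior $e_v$'s puts $e_y+e_z$ in the span. The parity condition must hold at every vertex of $P$, not only at the endpoints as your fallback says. So arbitrary spanning-tree paths, which may have chords or odd-degree interior vertices, do not qualify.

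Third, in a connected graph any two odd-degree vertices are joined by a chain of such paths. Cut any $y$--$z$ path at its odd-degree interior vertices, and replace each segment by a shortest path inside the subgraph induced by that segment; that path is induced in $G$.

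The span therefore contains $\{x\in\mathbb F_2^V : \sum_{v\in V_{\mathrm{odd}}} x_v=0\}$. This has dimension $|V|-1$ when $V_{\mathrm{odd}}\neq\emptyset$, and equals all of $\mathbb F_2^V$ otherwise. Since this space is spanned by $\Z$-free index vectors, you can extract $|V|-1$ independent $\Z$-free elements from them, which is the lemma.
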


We can now define our trappified scheme \(\sch P\) by sampling uniformly at random one of these \(|V|-1\) generators and running the corresponding stabilizer test traps (\cref{hdr:sec:reconstruction}).
\begin{definition}
  A dummyless trappified scheme \(\sch P\) with parameters \(N,d\) for a \cmp{BQP} computation on graph \(G\) is obtained by considering \(N\) rounds that will be delegated using UBQC on graph \(G\). \(d\) rounds will be sampled uniformly at random to host the delegated computation, and for each of the remaining \(N-d\) ones, a trap will be sampled at random by choosing one of the \(|V| - 1\) stabilizer of \(G\) that contain only \(\Id, \X, \Y\) operators.
  \label{hdr:def:dummyless-scheme}
\end{definition}

Because of the previous lemma, one stabilizer generator of \(\ket G\) is outside \(\grp R\). Call it \(\cptp S_0\). Using the analysis of the previous section, it is in fact easy to discover. It consists of a tensor product of \(\Z\)'s at the odd-degree vertices. As a consequence, all the Pauli operators that commute with \(\grp R\) and are not in \(\grp R\)---these are non-trivial undetectable errors---are necessarily of the form \(\cptp S_0 \grp R\). Because they commute with \(\grp R\) they cannot be detected (see \cref{hdr:fig:dummyless}). Fortunately, \cref{hdr:thm:harmless} shows that \(\cptp S_0\) is indeed harmless for the computation and so are all the elements in \(\cptp S_0 \grp R\).

\begin{marginfigure}[-10cm]
  \resizebox{6cm}{!}{\begin{tikzpicture}[
  dot/.style={circle, fill=black, inner sep=1.5pt},
  empty/.style={circle, draw=black, thick, inner sep=1.5pt},
  node font=\tiny\sffamily\bfseries
  ]

  \definecolor{myred}{RGB}{200,30,30}
  \definecolor{mygreen}{RGB}{30,160,30}
  \definecolor{mywhite}{RGB}{255,255,255}

  \newcommand{\drawlattice}{
    \foreach \x in {0,1,2} {
      \draw (\x, 1) -- (\x, 2);
    }
    \draw (0,2) -- (0.5,2.5) -- (1,2) -- (1.5,2.5) -- (2,2) -- (2.5,2.5);
    \draw (0,2) -- (-0.5,2.5);
    \foreach \x in {0,1,2} {
      \draw (\x, 1) -- (\x-0.5, 0.5) -- (\x, 0);
      \draw (\x, 1) -- (\x+0.5, 0.5) -- (\x, 0);
    }
  }

  \begin{scope}[shift={(0,0)}]
    \drawlattice
    \foreach \x in {0,1,2} {
      \node[dot] at (\x,0) {}; \node[dot] at (\x,1) {}; \node[dot] at (\x,2) {};
      \node[dot] at (\x-0.5, 0.5) {}; \node[dot] at (\x+0.5, 0.5) {};
      \node[dot] at (\x-0.5, 2.5) {}; \node[dot] at (\x+0.5, 2.5) {};
    }
    
    \node[mywhite] at ( 2.7,0.5) {X};
    
  \end{scope}

\end{tikzpicture}}
  \par\bigskip\bigskip
  \resizebox{6cm}{!}{\begin{tikzpicture}[
    dot/.style={circle, fill=black, inner sep=1.5pt},
    empty/.style={circle, draw=black, thick, inner sep=1.5pt},
    node font=\tiny\sffamily\bfseries
]

\definecolor{myred}{RGB}{200,30,30}
\definecolor{mygreen}{RGB}{30,160,30}

\newcommand{\drawlattice}{
    \foreach \x in {0,1,2} {
        \draw (\x, 1) -- (\x, 2);
    }
    \draw (0,2) -- (0.5,2.5) -- (1,2) -- (1.5,2.5) -- (2,2) -- (2.5,2.5);
    \draw (0,2) -- (-0.5,2.5);
    \foreach \x in {0,1,2} {
        \draw (\x, 1) -- (\x-0.5, 0.5) -- (\x, 0);
        \draw (\x, 1) -- (\x+0.5, 0.5) -- (\x, 0);
    }
}

\begin{scope}[shift={(0,0)}]
  \drawlattice
  \foreach \x in {0,1,2} {
    \node[dot] at (\x,0) {}; \node[dot] at (\x,1) {}; \node[dot] at (\x,2) {};
    \node[dot] at (\x-0.5, 0.5) {}; \node[dot] at (\x+0.5, 0.5) {};
    \node[dot] at (\x-0.5, 2.5) {}; \node[dot] at (\x+0.5, 2.5) {};
  }

  \foreach \x  in {0,1,2} {
    \node[myred] at (\x.2,0) {X}; \node[myred] at (\x.2,1) {Y}; \node[myred] at (\x.2,2) {Y};
  }

  \node[myred] at (-0.3,0.5) {X}; \node[myred] at (-0.3,2.5) {Y};
  \node[myred] at ( 0.7,0.5) {X}; \node[myred] at ( 0.7,2.5) {X};
  \node[myred] at ( 1.7,0.5) {X}; \node[myred] at ( 1.7,2.5) {X};
  \node[myred] at ( 2.7,0.5) {X}; \node[myred] at ( 2.7,2.5) {Y};

\end{scope}

\end{tikzpicture}}
  \par\bigskip\bigskip
  \resizebox{6cm}{!}{\begin{tikzpicture}[
    dot/.style={circle, fill=black, inner sep=1.5pt},
    empty/.style={circle, draw=black, thick, inner sep=1.5pt},
    node font=\tiny\sffamily\bfseries
]

\definecolor{myred}{RGB}{200,30,30}
\definecolor{mygreen}{RGB}{30,160,30}

\newcommand{\drawlattice}{
    \foreach \x in {0,1,2} {
        \draw (\x, 1) -- (\x, 2);
    }
    \draw (0,2) -- (0.5,2.5) -- (1,2) -- (1.5,2.5) -- (2,2) -- (2.5,2.5);
    \draw (0,2) -- (-0.5,2.5);
    \foreach \x in {0,1,2} {
        \draw (\x, 1) -- (\x-0.5, 0.5) -- (\x, 0);
        \draw (\x, 1) -- (\x+0.5, 0.5) -- (\x, 0);
    }
}

\begin{scope}[shift={(0,0)}]
  \drawlattice
  \foreach \x in {0,1,2} {
    \node[dot] at (\x,0) {}; \node[dot] at (\x,1) {}; \node[dot] at (\x,2) {};
    \node[dot] at (\x-0.5, 0.5) {}; \node[dot] at (\x+0.5, 0.5) {};
    \node[dot] at (\x-0.5, 2.5) {}; \node[dot] at (\x+0.5, 2.5) {};
  }
  \node[mygreen] at (0.2,0) {Y}; \node[mygreen] at (0.2,1) {X}; \node[myred] at (0.2,2) {Y};
  \node[mygreen] at (1.2,0) {Y}; \node[mygreen] at (1.2,1) {X}; \node[myred] at (1.2,2) {Y};
  \node[myred] at (2.2,0) {X}; \node[myred] at (2.2,1) {Y}; \node[myred] at (2.2,2) {Y};
  
  \node[myred] at (-0.3,0.5) {X}; \node[myred] at (-0.3,2.5) {Y};
  \node[mygreen] at ( 0.7,0.5) {I}; \node[myred] at ( 0.7,2.5) {X};
  \node[myred] at ( 1.7,0.5) {X}; \node[myred] at ( 1.7,2.5) {X};
  \node[myred] at ( 2.7,0.5) {X}; \node[myred] at ( 2.7,2.5) {Y};

\end{scope}

\end{tikzpicture}}
  \par\bigskip\bigskip
  \resizebox{6cm}{!}{\begin{tikzpicture}[
    dot/.style={circle, fill=black, inner sep=1.5pt},
    empty/.style={circle, draw=black, thick, inner sep=1.5pt},
    node font=\tiny\sffamily\bfseries
]

\definecolor{myred}{RGB}{200,30,30}
\definecolor{mygreen}{RGB}{30,160,30}

\newcommand{\drawlattice}{
    \foreach \x in {0,1,2} {
        \draw (\x, 1) -- (\x, 2);
    }
    \draw (0,2) -- (0.5,2.5) -- (1,2) -- (1.5,2.5) -- (2,2) -- (2.5,2.5);
    \draw (0,2) -- (-0.5,2.5);
    \foreach \x in {0,1,2} {
        \draw (\x, 1) -- (\x-0.5, 0.5) -- (\x, 0);
        \draw (\x, 1) -- (\x+0.5, 0.5) -- (\x, 0);
    }
}

\begin{scope}[shift={(0,0)}]
  \drawlattice
  \foreach \x in {0,1,2} {
    \node[dot] at (\x,0) {}; \node[dot] at (\x,1) {}; \node[dot] at (\x,2) {};
    \node[dot] at (\x-0.5, 0.5) {}; \node[dot] at (\x+0.5, 0.5) {};
    \node[dot] at (\x-0.5, 2.5) {}; \node[dot] at (\x+0.5, 2.5) {};
  }
  \node[myred] at (0.2,0) {X}; \node[myred] at (0.2,1) {Y}; \node[myred] at (0.2,2) {Y};
  \node[myred] at (1.2,0) {X}; \node[mygreen] at (1.2,1) {I}; \node[mygreen] at (1.2,2) {I};
  \node[myred] at (2.2,0) {X}; \node[myred] at (2.2,1) {Y}; \node[myred] at (2.2,2) {Y};
  
  \node[myred] at (-0.3,0.5) {X}; \node[myred] at (-0.3,2.5) {Y};
  \node[mygreen] at ( 0.7,0.5) {Y}; \node[mygreen] at ( 0.7,2.5) {Y};
  \node[mygreen] at ( 1.7,0.5) {Y}; \node[mygreen] at ( 1.7,2.5) {Y};
  \node[myred] at ( 2.7,0.5) {X}; \node[myred] at ( 2.7,2.5) {Y};

\end{scope}

\end{tikzpicture}}
  \par\bigskip\bigskip
  \caption{\raggedright Traps based on "poking holes" on even degree vertices (up), traps based on chains linking two odd degree vertices (middle), the uncaught deviation that has no effect on \cmp{BQP} computations (down).}
  \label{hdr:fig:dummyless}
\end{marginfigure}
\subsection{Remote State Preparation}
\label{hdr:sec:dummyless-rsp}
Nothing fancy is needed here: we reuse the RSP construction of \cref{hdr:proto:rvbqc-rsp}. The only difference for the dummyless setting is that now the verifier never has to send \(\ket 0, \ket 1\) states. Indeed, \(\ket{+_{\theta}}, \ \theta \in \Theta\) suffice.

\begin{protocol}[Single-Plane RSP]
  \begin{algorithmic}[0]

    \STATE \textbf{Public Information:} $\Theta$

    \STATE \textbf{Alice's Input:} $\theta \in \Theta$

    \STATE{\textbf{Preparation of the state:}}
    \begin{itemize}
    \item Alice prepares $\ket{+_{\theta}}$ in a quantum register.
    \end{itemize} 

    \STATE{\textbf{Transmission of the state:}}
    \begin{itemize}
    \item Alice sends the quantum register prepared in $\ket{+_{\theta}}$ to Bob.
    \end{itemize}

  \end{algorithmic}
  \label{hdr:proto:dummyless-rsp}
\end{protocol}
As in the robust case, the resource is secure-by-design: Bob learns only what the quantum state itself reveals, and any later deviation cannot depend on \(\theta\).
\subsection{Embedding}
\label{hdr:sec:dummyless-embedding}
Just as in robust VBQC (\cref{hdr:proto:rvbqc}), amplification is obtained by repetition and majority vote: classical-I/O lets us avoid using fault-tolerant quantum computation procedures.  The embedding algorithm here is also the same: run the same computation round \(d\) times; interleave with \(t\) trap rounds sampled from \(\sch P\); post-process by majority vote over the \(d\) computation rounds.

The properness of the embedding algorithm---no information flow from computation to traps---is ensured by keeping rounds disjoint.
\section{Verification without dummies}
\label{hdr:sec:dummyless-security}
Using the previous definitions we can now apply \cref{hdr:thm:tbdqc-classical-io-security} using the  trappified scheme \(\sch P\) defined earlier (\cref{hdr:def:dummyless-scheme}) that:
\begin{enumerate}
\item Detects the set \(\mathcal{E}\) of harmful \(\cptp Z\)-Pauli errors.
\item Correctly evaluates the target computation in the presence of any other \(\cptp Z\)-Pauli error in at least \(\mathcal{P}_V^{\cptp Z} \setminus \mathcal{E}\). \cref{hdr:thm:harmless}, then shows that there is a specific error \(\cptp E^\ast\) which never affects the output distribution of the target computation and thus does not need to be detected.
\end{enumerate}

This results in:
\begin{theorem}
  Let \(G=(V,E)\) be a graph, and \(\sch P\) the trappified scheme on \(G\) defined by sampling at random from a generating set of \(\grp R\) containing only stabilizers with no \(\Z\)'s. Then, \(\sch P\) constructs the SDQC \cref{hdr:res:sdqc} for \(\BQP\) computations that can be embedded on the graph \(G\) with negligible correctness error and security error.
\end{theorem}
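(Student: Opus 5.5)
The proof applies \cref{hdr:thm:tbdqc-classical-io-security}. To do so we exhibit sets $\mathcal E_\epsilon$, $\mathcal E_\delta$, $\mathcal E_\nu$ and check the hypotheses for the dummyless scheme of \cref{hdr:def:dummyless-scheme}.

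First, we reduce attention to Pauli deviations. Because all preparations are $\ket{+_\theta}$ with $\theta$ uniform in $\Theta$ and all measurements lie in the $XY$ plane, the UBQC encryption twirls any deviation into a Pauli channel. It suffices to treat $\Z$-type Paulis, since an $\X$ component on a qubit measured in the $XY$ plane acts, after twirling over $\theta$ and the $r$-bits, as a combination of a $\Z$-flip and the identity. We take:
\begin{itemize}
\item $\mathcal E_\nu := \grp S_0$-coset structure, namely the set $\{\Id\}\cup \cptp E^*\grp R\cup \grp R$ restricted to $\Z$-type representatives. Concretely, $\mathcal E_\nu$ consists of the $\Z$-Paulis whose action on $\ket G$ equals $\Id$ or $\cptp E^*$.
\item $\mathcal E_\epsilon := \mathcal P_V^{\cptp Z}\setminus \mathcal E_\nu$.
\end{itemize}
By \cref{hdr:thm:harmless}, every element of $\mathcal E_\nu$ leaves the classical output distribution of any $\BQP$ computation on $G$ unchanged, so $\Cdev{\cptp E}=\Cdev{\Id}$. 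This gives the first hypothesis, and the second holds by construction. The honest execution of a stabilizer test returns $\tau(t)=0$ deterministically, which gives perfect insensitivity to $\Id$.

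The core step is the detection bound. A $\Z$-type Pauli $\cptp E$ applied to $\ket G$ is equivalent, up to a stabilizer, to some Pauli on the stabilizer state. A stabilizer test for $\cptp R$ fails exactly when $\cptp E$ anticommutes with $\cptp R$, because the $XY$-plane measurements then return the flipped parity $\tau(t)=1$. Now take $\cptp E\notin\mathcal E_\nu$, so that $\cptp E$ acts nontrivially and is not $\cptp E^*$ modulo $\grp S$. I claim it must anticommute with at least one of the $|V|-1$ generators from \cref{hdr:lem:stab_tests}. Suppose instead that it commutes with all of $\grp R$. A $\Z$-string is determined, as a deviation on $\ket G$, by its syndrome with respect to the canonical generators $\cptp S_v$. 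Its syndrome then lies in the annihilator of $\grp R$ inside $\grp S$'s dual, and this annihilator has dimension $1$ because $\grp R$ has codimension $1$ in $\grp S$. Its nonzero element is realised by $\cptp E^*$, since $\cptp E^*$ commutes with $\grp R$ (it is undetected) yet is not a stabilizer. So $\cptp E\in\cptp E^*\grp S$, which is a contradiction.

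Hence a uniformly sampled generator catches $\cptp E$ with probability at least $1/(|V|-1)$. The single-test-round detection is therefore $1-\epsilon\ge 1/(|V|-1)$, a constant for a fixed graph, or inverse-polynomial in general. Feeding this into \cref{hdr:thm:tbdqc-classical-io-security}, we choose $w/(n-d)$ below $(1-\epsilon)\frac{1-2c}{2-2c}$ by a constant. Chernoff/Hoeffding bounds over the random placement of the $d$ computation rounds then give $\eta(n)$ exponentially small in $n/|V|^2$. This is negligible once $n$ is a suitable polynomial, and it yields negligible security and correctness error.

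The main obstacle is the syndrome-counting argument in the detection step. It requires a careful check that $\cptp E^*$ is exactly the nontrivial element commuting with $\grp R$, including phases and the identification between deviations modulo $\grp S$ and syndromes. It also requires checking that the reduction from general Paulis to $\Z$-type Paulis respects the $\Y$-basis measurements used by the traps. I would first verify these points using the explicit structure of \cref{hdr:lem:fv_as_unitary}, where $\cptp S_0$ is supported on $\Z$'s at odd-degree vertices.
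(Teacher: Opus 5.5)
Your proposal follows the same route as the paper. It invokes \cref{hdr:thm:tbdqc-classical-io-security} with $\mathcal E_\nu=\{\Id,\cptp E^*\}$ among $\Z$-type deviations, takes harmlessness from \cref{hdr:thm:harmless}, and takes detection from the fact that a Pauli commuting with $\grp R$ must lie in the $\cptp E^*$-coset; your syndrome/annihilator count makes that last step explicit and also gives the per-round detection probability $1/(|V|-1)$.

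One caveat you share with the paper's sketch: the step ``$\cptp E^*$ commutes with $\grp R$ but is not a stabilizer'' requires $G$ to have an odd-degree vertex, since otherwise $\cptp E^*=\Id$. When all degrees are even, the $\Z$-free stabilizers $\prod_{v\in V}\cptp S_v$ and $\prod_{w\neq v}\cptp S_w$ already generate all of $\grp S$, so the scheme must use that full generating set. A codimension-one $\grp R$ would instead leave a harmful $\Z$-string undetected.
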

\section{Takeaways}
\label{sec:dummyless_takeaways}
We have shown that dummies are not intrinsically required for verifying quantum computations. This rests on: 
\begin{itemize}
\item Building traps entirely in the \(XY\) plane using stabilizer generators without \(\Z\)’s.
\item Recognizing and factoring out a single unavoidable, but harmless, deviation.
\item Reuse the repetition-based amplification and acceptance tuning of robust VBQC.
\end{itemize}

Practically, the verifier no longer needs to prepare computational basis states, simplifying hardware; the prover still sees the same per-round resource state as in VBQC; and composability is preserved by working inside the framework (\cref{hdr:sec:framework}).
\part{Reducing Requirements for the Verifier}
\label{hdr:sec:reducing}
\cleardoublepage
\lettrine[refstring, lines=3, lraise=0.15]{T}{his Part} explores two ways in which the requirements for verification can be reduced on the Verifier's side. In both cases the goal is to equip verifiers with good quality, robust and cheap devices. The first one deals with verifiers without the ability to prepare quantum states and relying on the prover to do that for them. The second is about verifiers using weak coherent pulses---obtained by attenuating a laser---instead of single-photon sources. The difficulty with those changes is that the devices now provide less advanced or even imperfect functionalities.  As a consequence, this needs to be compensated by more complex protocols, whose security is harder to obtain because of the increased amount of information exchanged and the number of interactions with the prover.
\chapter{Trusted Rotations}
\label{hdr:sec:rotations}
\lettrine[refstring, lines=3, lraise=0.15]{T}{he objective} of this chapter is to lighten the verifier: no single-photon source, no state preparation. Instead, we assume only that the verifier can apply trusted single-qubit \(\Z\)-rotations and random \(\X\) flips to whatever qubit the prover hands over. Because Remote State Preparation (RSP) is a composable module, it suffices to replace it locally by an equivalent primitive. We show that a Remote Rotation with Dephasing resource achieves this for \cmp{BQP} computations, provided the prover is trusted to send qubits.
\section{Motivation}
\label{sec:B1-motive}
VBQC-like protocols assume the verifier can prepare \(\ket{+_\theta}\) states, and sometimes also \(\ket 0, \ket 1\) states. This implies some relatively heavy set-up for the verifier: good near-deterministic photon sources require cryogenics and careful calibration and maintenance.  If the prover could prepare the raw qubits and the verifier only perform additional trusted rotations, life would be easier.

This question was first raised as open in \cite{DKL11universal}. In \cite{MKAC22qenclave} it was shown to provide blindness. We exhibit below a solution that is statistically secure but has one caveat. The source is dimension-bounded: the verifier is guaranteed to receive \emph{bona fide} qubits.

Other approaches provide verification with light verifiers. \cite{FKD18reducing} and \cite{FKD19accrediting}  remove the need for trusted preparation by considering some physically motivated noise---thus do not provide verification in the fully malicious scenario. This is in addition to fully classical verifier verification that holds under computational assumptions \cite{M18classical}, but which incurs a large overhead for the prover.
\section{Security Failures Without Trusted Preparations}
\label{sec:org651f861}
Here we want to dispel a confusion that we have seen repeatedly in published papers: Blindness alone does not stop an adversary from mounting selective attacks on dummy qubits in VBQC type protocols. Consider the following idealized resource where Alice instructs a unitary on a state held by Bob:
\begin{resource}[Remote Unitary (RU)]
  \begin{algorithmic}[0]
    \State \textbf{Alice's Input:} A unitary $\cptp U$ chosen from $\{\Ha, \X\Ha, \Z(\theta) \mbox{ for } \theta \in \Theta\}$.
    \State \textbf{Bob's Input:} A state $\rho$
    \State \textbf{Computation by the Resource:} The Resource outputs $\cptp U\left[\rho\right]$ at Bob's interface.
  \end{algorithmic}
  \label{hdr:res:remote-unitary}
\end{resource}

A trivial implementation consists for Bob to send a state \(\rho\) to Alice; Alice applies  \(\cptp U\) and sends it back. The problem with this resource is that it does not allow performing verified quantum computation---irrespective of its implementation. Indeed, it lets the prover selectively attack the dummy qubits, even though it does not know which of the qubits are dummies.

For instance, instantiate the RSP with RU where the prover's input is set to \(\ket +\). Let the verifier decide to prepare a dummy. The verifier simply needs to perform \(\Ha\) on the provided state and send it back. If it wants to prepare a \(\ket{+_{\theta}}\) state, it simply performs a \(\Z(\theta)\) rotation. The problem is that the prover could decide to cheat by sending \(\ket -\) instead of \(\ket +\), and apply \(\Z\) whenever it receives the qubit back from the verifier. A quick calculation shows that \(\ket {+_{\theta}}\) states are unaffected, but dummies get an \(\X\) deviation (See \cref{hdr:fig:attack-on-dummies}). This provides a way to craft a concrete selective attack on the dotted-triple-graph variant of VBQC \cite{KW17optimised} (\cref{hdr:proto:vbqc}).

\begin{marginfigure}
  \resizebox{6cm}{!}{\begin{tikzpicture}[
    thick,
    gate/.style={draw, fill=white, minimum size=1em, font=\tiny},
    ket/.style={anchor=west, font=\tiny}
]

\newcommand{\quantumcircuit}[5]{
    \begin{scope}[yshift=#1]
        \draw[dashed] (2.2, 0.6) -- (2.2, -1.1);
        
        \draw (1, 0.2) -- (0.5, 0.2) -- (0.5, -0.7) -- (1, -0.7);
        
        \draw (1, 0.2) -- (3.2, 0.2);
        \draw (1, -0.7) -- (3.2, -0.7);
        
        \node[gate] at (1.5, -0.7) {#4};
        
        \draw (3.2, 0.2) -- (2.7, 0.2); 
        \node[ket] at (3.2, 0.2) {#2};
        
        \draw[->] (2.7, -0.7) -- (3.2, -0.7); 
        \node[ket] at (3.2, -0.7) {#3};
        
        \ifnum#5=1
            \node[gate] at (2.6, 0.2) {\(\Z\)};
            \node[gate] at (2.6, -0.7) {\(\Z\)};
        \fi
    \end{scope}
}


\quantumcircuit{6cm}{$\ket{+}$}{$\ket{0}$}{\(\Ha\)}{0}

\quantumcircuit{4cm}{$\ket{+}$}{$\ket{+_{\theta}}$}{$\Z(\theta)$}{0}

\quantumcircuit{2cm}{$\ket{+}$}{$\ket{1}$}{\(\Ha\)}{1}

\quantumcircuit{0cm}{$\ket{+}$}{$\ket{+_{\theta}}$}{$\Z(\theta)$}{1}

\node[font=\tiny] at (1.2, -1.5) {Sender};
\node[font=\tiny] at (3, -1.5) {Receiver};

\end{tikzpicture}}
  \caption{Concrete attack on dummy qubits obtained by having the receiver provide $\ket -$ states instead of a $\ket +$ states.} 
  \label{hdr:fig:attack-on-dummies}
\end{marginfigure}

This is a manifestation of a deeper phenomenon: deviations or noise can pick-up a dependency on the secret parameters used to parameterize the preparation circuit and break the security proofs of the protocols. This will become a crucial observation and a difficulty in designing protocols for fault-tolerant quantum computations, as honest noise occurring during the preparation of states by the verifier can pick-up dependency upon secret parameters which voids the security of naive protocols.
\section{Recovered Security with Remote State Rotation}
\label{sec:org6ee8555}
To neutralize the above attack and remove any potential ones due to secret-dependency picked by an incorrect preparation, we switch to the dummyless setting. The sender in the RSP only needs to ensure \(\ket{+_{\theta}}\) states are available at the receiver's interface once its actions are done. The difficulty is that we want to allow whatever single-qubit state the receiver\sidenote{\raggedright \footnotesize We have kept the terminology of sender, receiver from the RSP resource, yet in our setting the state that is available to the sender is prepared by the receiver. This could be because the receiver initially sends it over a quantum channel, or more likely this corresponds to a situation where the source that is used to create the initial states upon which the sender will apply $\cptp U$ is potentially malicious.} is providing to the sender, both parties still end up with a secure, composable RSP.

The selective-attack from the previous subsection---hidden \(\Z\) before and after---no longer works: \(\Z\) commutes with \(\Z(\theta)\). But a subtler problem remains. If the Prover sends a state that is not in the \(XY\) plane, say \(\Y(\alpha) \ket +\) for \(\alpha \in [0,\pi/2]\), then generally \(Z(\theta)\Y(\alpha)\ket + \neq \Y(\alpha)Z(\theta)\ket +\), so the deviation cannot be pushed "after" the resource. In composable terms, the prover’s pre-rotation now depends on the verifier’s secret \(\theta\) once the operations are reordered. This then breaks the security of the implementation of the RSP resource.

The cure is to force the incoming state into the \(XY\) plane before applying \(\Z(\theta)\). A uniformly random \(\X\) flip does exactly that: it decoheres in the \(\X\) basis, wiping any \(\Z\)-component that could pick up \(\theta\). This leads to the following resource where Alice plays the sender and Bob the receiver:
\begin{resource}[Remote Rotation with Dephasing ($\text{RR}_\text{D}$)]
  \begin{algorithmic}[0]
    \STATE \textbf{Inputs:}
    \begin{itemize}
    \item Alice inputs an angle $\theta \in \Theta$.
    \item Bob inputs a single-qubit in state $\rho$.
    \end{itemize}
    \STATE \textbf{Computation by the Resource:}
    \begin{enumerate}
    \item The Resource samples a bit $b \sample \bin$ uniformly at random.
    \item The Resource outputs a qubit in state $Z(\theta) \X^b\left[\rho\right]$ to Bob.
    \end{enumerate}
  \end{algorithmic}
  \label{hdr:res:rrd}
\end{resource}

Constructing the 8-state \(XY\) plane RSP from \(\mathrm{RR}_{\mathrm{D}}\) is done easily (see \cref{hdr:fig:single-plane-rsp}
\begin{protocol}[Single Plane RSP from $\text{RR}_\text{D}$]
  \begin{algorithmic} [0]
    \STATE \textbf{Input:} Alice inputs an angle $\theta \in \Theta$.
    \STATE \textbf{Protocol:}
    \begin{enumerate}
    \item Alice and Bob call the $\text{RR}_\text{D}$ \cref{hdr:res:rrd}:
      \begin{itemize}
      \item Alice inputs the angle $\theta$.
      \item Bob inputs a qubit in the state $\ket{+}$.
      \item The Resource returns a single-qubit to Bob who sets it as its output.
      \end{itemize}
    \end{enumerate}
  \end{algorithmic}
  \label{hdr:proto:rsp-from-rrd}
\end{protocol}

\begin{marginfigure}
  \resizebox{6cm}{!}{\begin{tikzpicture}[
    thick,
    gate/.style={draw, fill=white, minimum size=1em, font=\tiny},
    ket/.style={anchor=west, font=\tiny}
]


\draw[dashed] (2.2, 0.6) -- (2.2, -1.1);

\draw (1, 0.2) -- (0.5, 0.2) -- (0.5, -0.7) -- (1, -0.7);

\draw (1, 0.2) -- (3.2, 0.2);
\draw (1, -0.7) -- (3.2, -0.7);

\node[gate] at (1.5, -0.7) {\(\Z(\theta)\)};

\draw (3.2, 0.2) -- (2.7, 0.2); 

\draw[->] (2.7, -0.7) -- (3.2, -0.7); 

\node[gate] at (1.5, 0.2) {\(\X^{a}\)};

\node[font=\tiny] at (1.2, -1.5) {Sender};
\node[font=\tiny] at (3, -1.5) {Receiver};

\end{tikzpicture}}
  \caption{Single Plane RSP from $\text{RR}_\text{D}$ protocol.}
  \label{hdr:fig:single-plane-rsp}
\end{marginfigure}

The following theorem ensures that the construction is secure:
\begin{theorem}[Security of Protocol \cref{hdr:proto:rsp-from-rrd}]
  \cref{hdr:proto:rsp-from-rrd} perfectly constructs \cref{hdr:res:rsp} where the states alphabet \(\Sigma = \Theta\) from \cref{hdr:res:rrd}.
  \label{hdr:thm:rsp-from-rrd}
\end{theorem}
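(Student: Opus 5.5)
We have to show, in the Abstract Cryptography framework, that the protocol of \cref{hdr:proto:rsp-from-rrd} run on top of $\text{RR}_\text{D}$ (\cref{hdr:res:rrd}) cannot be distinguished from the single-plane RSP resource (\cref{hdr:res:rsp}) composed with a simulator on Bob's interface. This requires two parts: correctness when Bob is honest, and security when Bob is malicious. Alice is the only party with a secret, so we need no simulator for a dishonest Alice.

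\emph{Correctness.} Take Bob honest, so he inputs $\ket{+}$. Since $\X\ket{+}=\ket{+}$, we get $\X^b[\dyad{+}]=\dyad{+}$ for both values of the resource's internal bit $b$. Hence the output is $\Z(\theta)[\dyad{+}]=\dyad{+_\theta}$, which is exactly what \cref{hdr:res:rsp} outputs on input $\theta$. The two systems are therefore identical.

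\emph{Security against a malicious Bob.} Here the distinguisher feeds an arbitrary $\rho_{BR}$ into the $\text{RR}_\text{D}$ interface, where $B$ is one qubit entangled with a reference $R$, and receives back $(\Z(\theta)\circ\mathcal D_X)\otimes\Id_R[\rho_{BR}]$. In this expression $\mathcal D_X(\cdot)=\frac12(\cdot)+\frac12\X(\cdot)\X$ is the dephasing produced by averaging over $b$. The central step is to rewrite this channel so that all dependence on $\theta$ goes through a fresh $\ket{+_\theta}$ and acts after a $\theta$-independent operation. For this we expand $B$ in the Pauli basis. $\mathcal D_X$ keeps the $\I$ and $\X$ components and kills the $\Y$ and $\Z$ components. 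So the output is determined by $\tfrac12(\I\otimes\sigma_0+\X\otimes\sigma_1)$ rotated by $\Z(\theta)$, where $\sigma_0,\sigma_1$ are operators on $R$. Equivalently, $\mathcal D_X$ is the measure-and-prepare channel in the $\X$ basis, $\mathcal D_X(\tau)=\sum_{s}\bra{\pm_s}\tau\ket{\pm_s}\dyad{\pm_s}$ with $\ket{\pm_0}=\ket{+}$ and $\ket{\pm_1}=\ket{-}$. Then $\Z(\theta)\mathcal D_X(\tau)=\sum_s \bra{\pm_s}\tau\ket{\pm_s}\,\Z(s\pi)[\dyad{+_\theta}]$.

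\emph{Simulator.} This identity gives the simulator directly. It receives $\rho_{BR}$, measures $B$ in the $\X$ basis to obtain $s$, queries \cref{hdr:res:rsp} to get $\dyad{+_\theta}$, applies $\Z^s$, and outputs the result together with $R$. The joint state it produces is $\sum_s (\bra{\pm_s}\otimes\Id_R)\rho_{BR}(\ket{\pm_s}\otimes\Id_R)\otimes \Z^s[\dyad{+_\theta}]$. This coincides with the real output for every $\theta$ and every $\rho_{BR}$, so the distinguishing advantage is exactly zero and the construction is perfect.

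\emph{Main obstacle.} The delicate point is this commutation step: one must check that, once the $\X$-dephasing has acted, the whole $\theta$-dependence factors into a post-applied $\Z(\theta)$ on an $\X$-diagonal state. Without the dephasing, $\Y$ and $\Z$ components would rotate into $\theta$-dependent combinations that the simulator cannot reproduce from $\ket{+_\theta}$ alone. This is precisely the failure mode shown by the $\Y(\alpha)\ket{+}$ example. We also rely on the assumption that Bob inputs a genuine qubit, which is what makes the one-qubit Pauli expansion complete.
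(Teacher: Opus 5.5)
Your proof is correct and formalizes the argument the paper has in mind; the manuscript omits the formal proof but motivates it the same way. The random $\X$ flip projects the input onto the $\X$ axis, so $\Z(\theta)$ only turns the $\X$-basis label $s$ into a $\theta$-independent $\Z^s$ deviation on $\ket{+_\theta}$, which your $\X$-measuring simulator reproduces with zero error. One small correction: under \cref{def:ac-sec} a dishonest sender still formally needs a simulator, but it is the trivial one that forwards $\theta$ to \cref{hdr:res:rsp}, because both resources expose the same sender interface and honest Bob outputs $\Z(\theta)\X^b\ket{+}=\ket{+_\theta}$.
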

\section{Verification with Remote State Rotation}
\label{sec:org8948797}
Thanks to composability, nothing dramatic happens at the protocol level: in dummyless VBQC---which follows \cref{hdr:proto:tbdqc-classical-io}---replace each call to the single plane RSP by \cref{hdr:proto:rsp-from-rrd}, i.e. use \(\mathrm{RR}_{\mathrm{D}}\) under the hood. This constructs an SDQC resource with the same security parameters. The construction error for RSP is zero when using \(\mathrm{RR}_{\mathrm{D}}\), so the global bound is untouched.

There is, however, one caveat worth flagging explicitly. In the informal discussion above we implicitly assumed the prover still hands over a single-qubit. If instead higher-dimensional systems can slip through---e.g., two photons when one is expected---the random-\(\X\) dephasing no longer guarantees that the prover’s deviation is \(\theta\)-independent.  A concrete threat in photonic implementations is the photon-splitting attack: send two photons, let both accumulate the secret \(\Z(\theta)\), then split and measure to glean \(\theta\). Curbing this requires an extra assumption---trusted dimension---or an explicit countermeasure---which could be provided by the WCP techniques discussed later. The moral is rotations-only RSP solves the preparation problem, not the dimension problem.
\section{Takeaways}
\label{sec:B7-take}
We have shown that stripping the verifier down to trusted \(\Z(\theta)\)-rotations and random \(\X\)-flips is enough for verifiable delegation---provided the prover really sends a qubit.

Two broad lessons emerge. Firstly, a single phase modulator and a fast bit flip suffice to verify computations. Blind-only schemes such as \cite{MKAC22qenclave} can be upgraded to full verifiability by a drop-in replacement of their RSP and appropriate trappification changes. Secondly, negligible security error survives. Because \(\mathrm{RR}_{\mathrm{D}}\) perfectly constructs RSP, any VBQC variant using RSP inherits its proof unchanged.
Yet, this also emphasizes that blindness is not a force field. Even if the prover cannot locate dummies, it can still craft attacks that selectively hurt specific positions---or their complements. Several papers overlook this subtlety, undermining their claims.
\chapter{Weak Coherent Pulses}
\label{hdr:sec:wcp}
\lettrine[refstring, lines=3, lraise=0.15]{O}{ur goal} in this chapter is to replace the single-photon Remote State Preparation module by one implementable with weak coherent pulses (WCP). Concretely:
\begin{itemize}
\item We keep the dummyless VBQC; only the RSP module is replaced.
\item We model multi-photon leakage and take into account the loss acknowledgment by the prover as the attack lever, then build a resource that still delivers at least one secret qubit per batch.
\item We recover verifiability by combining that batch resource with the phase-accumulation gadget of \cite{DKL11universal}.
\item We improve over \cite{DKL11universal} by providing verifiability and practicality: the number of pulses scales like \(1/\eta^2\)---where \(\eta\) is the channel transmittance---instead of \(1/\eta^4\) in the original scheme.
\end{itemize}

In short, this chapter shows how to (i) verify while tolerating realistic laser sources, while (ii) maintaining acceptable performance in spite of the much larger attack surface of the RSP module.
\section{Motivation}
\label{sec:orgbec9ad4}
The previous chapter argued that single-photon sources are a costly bottleneck on the verifier’s side. Weak coherent pulses (WCPs) offer a different incentive: they propagate farther through fiber---thanks to higher usable launch power, mature telecom hardware---and they are already ubiquitous in QKD deployments. If we could base an RSP construction on WCPs, long-distance verified delegation could possibly ride on existing QKD infrastructure with minimal extra optics.

Two issues must be faced head-on:
\begin{itemize}
\item \emph{Multi-photon leakage and acknowledgments.} A WCP occasionally contains 2+ photons. A malicious prover can keep one, measure the rest, and learn the secret angle \(\theta\). Because the channel is lossy, the prover must be allowed to confirm reception; he can craft a successful attack by simply declaring all 0- and 1-photon pulses lost and accept only exploitable ones from his malicious perspective.
\item \emph{Incomplete or non-composable security in earlier work.} The original WCP proposal of \cite{DKL11universal} is phrased in a composable framework, but it only guarantees blindness as it couldn't produce dummies and hence couldn't verify. Later decoy-state based refinements \cite{JWHX19remote} improve efficiency but abandon a full AC treatment. Additionally, we show a concrete flaw that undermines their claim to security for VBQC---and even affects the standard QKD decoy analysis.
\end{itemize}

Our approach is to stay strictly within Abstract Cryptography. We model a realistic WCP device as a resource, prove how to construct from it a batch containing at least one qubit prepared in a genuine \(\ket{+_{\theta}}\) state, and plug that \emph{batchRSP} resource into the dummyless VBQC (\cref{hdr:proto:tbdqc-classical-io} with dummyless tests). The reward is composable security against fully malicious provers; and by using multiple intensities, a scaling that improves from \(1/\eta^4\) to \(1/\eta^2\) for a fiber with transmittance \(\eta\).
\section{Security Failures Due to Multi-Photon Pulses}
\label{sec:org94dc380}
Remote State Preparation's security asks that the receiver\sidenote{\raggedright \footnotesize Here again we switch to the RSP terminology of sender / receiver as it could be used outside the verification setting of verifier / prover.} learns nothing beyond the quantum state it receives, and that any imperfection can be pushed into a deviation after delivery.  Weak coherent pulses (WCP) depart from this specification in a decisive way.

An attenuated laser does not emit single photons on command, it emits pulses whose photon number is Poisson distributed with mean \(\mu\).  Vacuum shots are harmless, single-photon shots are fine, but as soon as two---or more---photons slip through, a malicious receiver can split the pulse.  One (or more) photon is measured to glean information about the secret angle \(\theta\); at least one is kept pristine to satisfy later checks.  The deviation has now acquired a dependence on the secret.  Composable security forbids exactly this: whatever imperfection over the ideal resource must map to a deviation independent of \(\theta\) and applied after the perfect state is delivered.

Indeed, the protocol itself hands the adversary the steering wheel.  Because most pulses are empty and the channel is lossy, the sender must let the receiver acknowledge which pulses arrived.  That acknowledgment is a perfect filter: discard every 0- or 1-photon event, keep only the multi-photon ones you can exploit.  Blindness does not save us here; the filter acts on photon number, not on the intended preparation.

The verdict is immediate.  An RSP constructed naively from WCP source is not secure. The leakage of \(\theta\) on multi-photon rounds cannot be swept under the rug and treated as a post-resource deviation.  Unless we alter the construction by constraining the acceptance rule so that any profitable photon-number filtering becomes statistically visible, verification fails before trappification or embedding even enter the stage.
\section{Recovered Security and Performance}
\label{sec:orgc36236b}
The cure comes in two doses. First, we pin down exactly what a weak coherent pulse source gives an honest or a cheating receiver. Second, we add classical statistics and a tiny quantum gadget to make any profitable multi-photon filtering show up in those statistics. We then provide two statistical tests. The one of \cite{DKL11universal} which gives an asymptotic \(1/\eta^4\) scaling and one inspired by the decoy state method that scales as \(1/\eta^2\).
\subsection{Modeling the source: WCPGenerator}
\label{sec:org38570f5}
We wrap the physical laser and lossy fibre in a resource that is deliberately pessimistic for the dishonest case. If a proof survives this model, it survives the lab. We call this resource WCPGenerator.
 
When the receiver is honest, it receives \(n\) photons drawn at random from \(\Pois(\mu \eta)\)---\(\mu\) is the power of the laser and \(\eta\) is the channel trans\-mittance---prepared as \(\left(\cptp U\left[\rho\right]\right)^{\otimes n}\).  When it cheats, it gets the lossless distribution \(\Pois(\mu)\), and if \(n>1\) it is even handed the classical description of \(\cptp U\). For Alice playing the sender and Bob the receiver this gives the following resource:\sidenote{\raggedright \footnotesize Here we emphasize again that \cref{hdr:res:wcp} is not a physically realistic resource, but rather a model that is pessimistic enough so that reasonable physical resources are more powerful than this one. As a consequence, if we prove security with this resource, we will have security with a more powerful resource.}

\begin{resource}[WCPGenerator]
  \begin{algorithmic}[0]
    \State \textbf{Public information:} set of unitaries $\mathcal{U}$; classical description of a quantum state $\rho$, power of the laser $\mu$, transmittance $\eta \in [0,1]$.
    \State \textbf{Alice's Input:} $\cptp U \in \mathcal{U}$ and $\mu \in \mathbb{R}^+_0$.
    \State \textbf{Bob's Input:} $c\in\bin$, set to $0$ if honest.
    \State{\textbf{Computation by the Resource}}{}
    \begin{itemize}
    \item $c=0$, it samples $n \sample \Pois(\mu\eta)$ and sends the state $\cptp U\left[\rho\right]^{\otimes n}$ to Bob.
    \item $c=1$, it samples $n \sample \Pois(\mu)$. If $n\leq 1$, it sends the state $\cptp U\left[\rho\right]^{\otimes n}$ to Bob. If $n>1$, it sends $n$ and the classical description of $\cptp U$ to Bob.
    \end{itemize}
  \end{algorithmic}
  \label{hdr:res:wcp}
\end{resource}
\subsection{Phase–accumulation gadget (from \cite{DKL11universal})}
\label{sec:org7e0b801}
A gadget does the heavy lifting for giving the sender some handle over the receiver.  Label qubits \(0,\ldots,N\) and prepare them in the \(XY\) plane.  Apply \(\CNOT\)'s with control on qubit 0 and target on qubit \(i\) for all \(i\geq 1\). Then measure qubits 1 to \(N\) in the computational basis. Up to the known outcomes, the phase angles add on qubit 0.  One unknown angle therefore serves as a one-time pad for the output phase.

From a security perspective, if among \(N\) pulses at least one is truly single-photon, the gadget can channel that secrecy into a single clean qubit (See \cref{hdr:fig:gadget}).

\begin{marginfigure}
  \resizebox{6cm}{!}{\begin{tikzpicture}[
    thick,
    target/.style={circle, draw, fill=white, inner sep=0pt, minimum size=0.35cm},
    dot/.style={circle, fill=black, inner sep=0pt, minimum size=0.15cm},
    font=\small\sffamily
]

\newcommand{\drawmeter}[2]{
    \begin{scope}[shift={(#1,#2)}]
        \draw[fill=white] (-0.2, -0.3) -- (0.05, -0.3) 
              arc [start angle=-90, end angle=90, radius=0.3] 
              -- (-0.2, 0.3) -- cycle;
        \draw (-0.15, -0.1) arc [start angle=150, end angle=30, radius=0.2];
        \draw[->, >=stealth] (-0.0, -0.15) -- (0.25, 0.2);
    \end{scope}
}

\newcommand{\drawtarget}[2]{
    \node[target] (T) at (#1,#2) {};
    \draw (T.north) -- (T.south);
    \draw (T.east) -- (T.west);
}


\foreach \i in {0,1,2,3,4} {
  \draw (0, \i) -- (5.5, \i);
}

\foreach \i in {1,2,3,4} {
  \draw[cwire] (5.5, \i) -- (5.8, \i);
}

\draw[wire] (5.5, 0) -- (5.8, 0);

\drawtarget{1}{4}
\draw (1, 4) -- (1, 0); 
\node[dot] at (1, 0) {};
\drawmeter{5.3}{4}
\node[anchor=west] at (5.7, 4) {$b_0$};

\drawtarget{2}{3}
\draw (2, 3) -- (2, 0); 
\node[dot] at (2, 0) {};
\drawmeter{5.3}{3}
\node[anchor=west] at (5.7, 3) {$b_1$};

\drawtarget{3}{2}
\draw (3, 2) -- (3, 0); 
\node[dot] at (3, 0) {};
\drawmeter{5.3}{2}
\node[anchor=west] at (5.7, 2) {$b_2$};

\drawtarget{4}{1}
\draw (4, 1) -- (4, 0); 
\node[dot] at (4, 0) {};
\drawmeter{5.3}{1}
\node[anchor=west] at (5.7, 1) {$b_3$};

\end{tikzpicture}}
  \caption{This gadget produces the quantum state $\ket{+_\theta}$ where $\theta = \theta_0 + \sum_{i=1}^{n} (-1)^{b_i}\theta_i$.}
  \label{hdr:fig:gadget}
\end{marginfigure}
\subsection{First recovery: single-intensity defense (from \cite{DKL11universal})}
\label{sec:org37a066c}
The insight of \cite{DKL11universal} is twofold.  First, given the phase–accumulation gadget, one single genuine single-photon is enough to hide the overall phase. This is because of the additive nature of the overall phase that is one-time-padded by the unknown phase of the single-photon pulse. Hence the optimal cheating move for the receiver is obvious: make sure the sender never gets a lone-photon pulse.  Declare every 0- and 1-photon pulse “lost”, keep only 2-or-more-photon pulses.  The gadget then has no unknown angle left to channel. 

Second, the gadget itself gives the sender a handle to fight back---by simple counting.  In the honest case, the number of pulses the receiver cannot acknowledge is governed by the WCP intensity and the loss: essentially the fraction of 0-photon events after the channel.  In the dishonest strategy above, every 0- and 1-photon emissions must be discarded to prevent secrecy from sneaking through.  By tuning the WCP intensity appropriately, the first one can be made smaller than the second by a constant margin allowing efficient statistical testing.

This translates into the following actions of the sender:
\begin{enumerate}
\item He chooses a mean photon number \(\mu\) that fulfills the above condition.
\item He checks the receiver’s "not-received" reports against the honest prediction.   Too many discards would yield an  abort; the multi-photon filtering cannot be hidden.
\end{enumerate}

This restores the secure RSP construction with a single intensity and a single threshold test.  But there are two shortcomings in the original proposal--besides the absence of verifiability of the whole:
\begin{itemize}
\item The resource cost explodes as \(1/\eta^4\) when the fiber transmittance \(\eta\) becomes small making it inefficient for long distance.
\item Photon-number–resolving detection were assumed on the receiver even when he is honest; strictly speaking, that is unnecessary once we look closer at the proof.
  These are the issues the next subsection fixes by moving to two intensities and constructing \emph{batchRSP}.
\end{itemize}
\subsection{Recovered Security and Performance}
\label{sec:orga4ab287}
The solution proposed below is described in \cite{GLMO24composably} and later refined into \cite{GLMO25multi}.  It is based on the previous approaches of \cite{DKL11universal} for the gadget and general approach toward the proof and on \cite{JWHX19remote} for the motivation behind using decoy states.
Instead of going for the full RSP, we focus instead on the creation of a batch of size \(K\) where at least one pulse has a single photon. Indeed, this is enough to construct RSP using the previously defined gadget. 
\begin{resource}[BatchRSP]
  \begin{algorithmic}[0]
    \STATE \textbf{Public Information:} $K \in \mathbb{N}^+$; set of unitaries $\mathcal{U}$; classical description of a quantum state $\rho$.
    \STATE \textbf{Alice's Input:} $K$ unitaries $\cptp U_1,\ldots, \cptp U_K \in \mathcal{U}$.
    \STATE \textbf{Bob's Input:} $c \in\bin$, set to $0$ if honest.
    \STATE{\textbf{Computation by the Resource}}{}
    \begin{itemize}
    \item {$c=0$}, it sends the states $\cptp U_1 \left[\rho\right], \ldots, \cptp U_K \left[\rho\right]$ to Bob.
    \item $c=1$
      \begin{itemize}
      \item It samples $k \sample \{ 1,\ldots,K \}$ uniformly at random.
      \item  It sends $k$, the classical descriptions of $\{ \cptp U_i \}_{i\neq k}$, and the quantum state $\cptp U_k\left[\rho\right]$ to Bob.
      \end{itemize}
    \end{itemize}
  \end{algorithmic}
  \label{hdr:res:batchrsp}
\end{resource}

To ensure that BatchRSP has at least one pulse with a single photon, we construct a statistical test similar in spirit to the one above. However, this time we will use 2 intensities.

Informally, the sender will generate pulses deciding their intensity randomly. He will keep track of the intensity of the acknowledged pulses. Then it will compute a quantity using the number of pulses acknowledged for each intensity.

More concretely, let \(\nu\) and \(\nu'\) be the two intensities and \(P\) and \(P'\) the corresponding number of acknowledged pulses. Define \(a = e^{-\nu}, b = \nu e^{-\nu}, c = \nu^2 e^{-\nu}/2\) and similarly for \(a', b', c'\) where \(\nu\) is replaced by \(\nu'\). These are the expected fractions of 0, 1 and 2 photons pulses in a WCP of intensity \(\nu\) and \(\nu'\). Next we define two numbers that the sender can compute on its own given the intensities of the laser it uses, the expected value of the transmittance \(\eta\) and the reported acknowledged pulses:
\begin{align}
  T & = \frac{c'P - cP'}{bc' - b'c}.\\
  t & = \frac{c'(1-e^{-\eta\nu}) - c(1-e^{-\eta\nu'})}{bc'-b'c}\,\frac{N}{2}.
\end{align}

For a total number of \(N\) pulses---half for each intensity---and with \(I\) the set of acknowledged pulse, the sender accepts or aborts using the following algorithm:
\begin{algorithm}[Algorithm $\mathcal B$]
  \begin{algorithmic}[0]
    \State \textbf{Input:} $I$ 
    \State{\textbf{Estimation:}}
    \begin{itemize}
    \item Fix $\Delta_0 > 0$.
    \item Compute $t$ and $T$.
    \end{itemize}
    \State \textbf{If} {$T  \geq t - \Delta_0 \frac{N}{2}$}, return $\accept$.
    \State \textbf{Otherwise}, return $\abort$.
  \end{algorithmic}
  \label{hdr:algo:b}
\end{algorithm}
The value of \(\Delta_{0}\) is a security margin for the statistical test that essentially gives a higher confidence in the decision when \(\Delta_0\) is bigger for a fixed \(N\).

This can be summarized in a full protocol:
\begin{protocol}[Multi-Intensity Weak Coherent Pulse Method]
  \begin{algorithmic}[0]
    \State \textbf{Public Information:} $N, K\in \mathbb{N}^+$, where $K \leq N$; group of unitaries $\mathcal{U}$; classical description of a quantum state $\rho$ ; transmittance $\eta \in [0,1]$; pulse intensities $\nu, \nu' \in \mathbb{R}^+_0$; efficient, classical estimation algorithm $\mathcal{B}$.
    
    \State{\textbf{Alice's Input:}} $K$ unitaries $\cptp U_1,\ldots, \cptp U_K \in \mathcal{U}$.

    \State{\textbf{Alice - Sending WCP}}{}
    \begin{itemize}
    \item Sample a random permutation $\pi \sample \operatorname{S}_N$.
    \item Let $(\nu_1, \ldots, \nu_N) \gets (\nu_{\pi(1)}, \ldots, \nu_{\pi(N/2)}, \nu'_{\pi(N/2+1)}, \ldots, \nu'_{\pi(N)})$.
    \item {$i \in [1,N]$}
      \begin{itemize}
      \item  Sample $\cptp U_i' \sample \mathcal{U}$ from the Haar measure on $\mathcal{U}$.
      \item Call WCPGenerator with $\cptp U = \cptp U_i'$, $\mu = \nu_i$, the classical description of the quantum state $\rho$, and transmittance $\eta$.
      \end{itemize}
    \end{itemize}

    \State{\textbf{Bob - Acknowledging WCP Reception}}{}
    \begin{itemize}
    \item  Let $\tilde{I}$ be the set of rounds whose pulses contain at least one photon.
    \item if {$|\tilde{I}| < K$}  Send $\abort$ to the Sender and stop.
    \item Let $I' \gets$ random subset of $\tilde{I}$ of size $K$.
    \item For $i \in I'$, store one round-$i$ photon in quantum memory.
    \item Send $I'$ to the Sender.
    \end{itemize}

    \State{\textbf{Alice - Estimation}}{}
    \begin{itemize}
    \item Undo the permutation, i.e. let $I \gets \{ \pi^{-1}(i) | i \in I' \}$.
    \item Run estimation algorithm $\mathcal{B}$ on input $I$.
      \begin{itemize}
      \item If {$\mathcal{B}$ returns $\abort$},  send $\abort$ to Bob and stop.
      \end{itemize}
    \item Relabel the set of kept states in $I'$ from $1$ to $K$. Sample a random permutation $\sigma \sample \operatorname{S}_K$.
    \item {For $j \in [1,K]$}, let $\tilde{\cptp{U}}_j = \cptp{U}_{\sigma(j)} \cptp{U'}_j^{\dagger}$.
    \item Send the corrections $\sigma$, $(\tilde{\cptp{U}}_j)_{j\in \tilde{I}}$ to Bob.
    \end{itemize}

    \State{\textbf{Bob - Corrections}}{}
    \begin{itemize}
    \item     For {$j \in [1,K]$}, apply the unitary $\tilde{\cptp{U}}_j$ to the \(j\)-th kept state.
    \item Permute all kept $K$ quantum states using $\sigma$, and set the result as the output.
    \end{itemize}
  \end{algorithmic}
  \label{hdr:proto:decoy-state-method}
\end{protocol}

The following theorem then gives the security error of the construction when compared to the ideal BatchRSP (\cref{hdr:res:batchrsp}):
\begin{theorem}[Correctness and Security Errors]
  Given \(\Delta_0 > 0\) in Algorithm \(\mathcal B\), \(\delta > 0\) such that \(K = ((2-e^{-\eta\nu}-e^{-\eta\nu'})/2 - \delta)N\) and \(C = \max(c, c')\), the correctness error \(\varepsilon_{\text{corr}}\) satisfies
  \begin{align}
    \varepsilon_{\text{corr}} \leq \exp(-\delta^2 N) + \exp(-\frac{\Delta_0^2 (bc' - b'c)^2}{4 C^2}N)
  \end{align}
  while the security error is negligible in \(N\) whenever there are additional constants \(\Delta_0', \Delta_0'' > 0\) such that 
  \begin{align}
    \label{eq:uuubound}
    &\Delta_0 + \Delta_0' + \frac{c'}{bc'-b'c}\Delta_0'' \nonumber \\
    &\qquad = \frac{c'(1-e^{-\eta\nu}) - c(1-e^{-\eta\nu'}) - c'(1-a-b-c)}{bc'-b'c}. 
  \end{align}
  \label{hdr:thm:instantiation}
\end{theorem}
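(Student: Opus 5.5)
I will treat correctness and security separately; both reduce to concentration bounds on counts of pulses by photon number and intensity.

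\emph{Correctness.} With an honest receiver, each pulse of intensity $\nu$ (resp. $\nu'$) is acknowledged independently with probability $1-e^{-\eta\nu}$ (resp. $1-e^{-\eta\nu'}$). The protocol can fail honestly in two ways, and I will bound each by Hoeffding and add the results by a union bound.
\begin{enumerate}
\item $|\tilde I|<K$. Its mean is $\tfrac{N}{2}(2-e^{-\eta\nu}-e^{-\eta\nu'})$. The choice $K=((2-e^{-\eta\nu}-e^{-\eta\nu'})/2-\delta)N$ places $K$ a deviation $\delta N$ below this mean, so Hoeffding gives the term $\exp(-\delta^2N)$, up to the constant convention.
\item Algorithm $\mathcal B$ aborts. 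I will first check that $t$ is exactly the expectation of $T$ when $P,P'$ are the honest acknowledged counts per intensity; this follows by linearity from the definitions of $T$ and $t$.
\end{enumerate}
For the second case, $T$ is a weighted sum of independent Bernoulli variables with weights at most $C/(bc'-b'c)$ in absolute value. Requiring $T$ to fall $\Delta_0 N/2$ below its mean then yields $\exp(-\Delta_0^2(bc'-b'c)^2N/(4C^2))$. One subtlety is that the subsampling to exactly $K$ pulses rescales $P$ and $P'$. I will handle this by defining $P,P'$ on the acknowledged set before subsampling, or equivalently by noting that uniform subsampling preserves the intensity ratio in expectation.

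\emph{Security.} I will build a simulator for the cheating receiver that interacts with BatchRSP and show that the real and ideal views differ only on an event of negligible probability. The key structural step is the following. If at least one index in $I$ corresponds to a genuinely single-photon emission, the simulator can place BatchRSP's quantum state $\cptp U_k[\rho]$ at that position. It then fakes the other pulses, using the revealed classical descriptions for the other indices and the photon numbers the WCPGenerator hands the adversary. The random permutation $\sigma$ together with the Haar-random $\cptp U'_i$ makes the corrections $\tilde{\cptp U}_j$ independent of which index is the genuine one. This is the same one-time-pad-style argument as in the perfect constructions of earlier chapters, for instance \cref{hdr:thm:rsp-from-rrd}.

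It therefore remains to bound the probability that $\mathcal B$ accepts although $I$ contains no single-photon pulse. In the dishonest model the adversary sees photon numbers drawn from $\Pois(\nu)$ and $\Pois(\nu')$ but not the intensity labels, which are hidden by the permutation $\pi$. Hence, conditioned on photon number $n$, the intensity of a pulse is an independent coin with known bias.
\begin{enumerate}
\item I will concentrate the number of emitted pulses with $n\ge2$ at each intensity around $\tfrac N2(1-a-b)$ and $\tfrac N2(1-a'-b')$. I will also show that the mixture the adversary acknowledges reproduces these per-intensity proportions up to $\Delta_0''N$. This uses Hoeffding on the hidden labels, sampled without replacement, so a Serfling-type bound is needed.
\item Since $I$ contains only pulses with $n\ge2$, I will plug these counts into $T$ and bound it from above by its worst case. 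The combination $c'P-cP'$ is chosen to cancel the two-photon contribution, leaving at most $c'(1-a-b-c)$ from pulses with $n\ge3$, plus fluctuation terms controlled by $\Delta_0'$ and $\Delta_0''$.
\item Comparing the result with the threshold $t-\Delta_0N/2$, the identity \eqref{eq:uuubound} is exactly the statement that the acceptance threshold lies strictly above this worst case by the sum of the margins. The acceptance probability in this case is therefore exponentially small in $N$.
\end{enumerate}

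\emph{Main obstacle.} I expect the adversary-optimization step to be hardest. The adversary may choose which multi-photon pulses to acknowledge adaptively and may mix photon numbers to manipulate $T$, so I must show that the linear functional $T$ is maximized by an extremal choice. I would first argue by linearity that it suffices to consider deterministic selection rules based on photon number alone, since the labels are independent given $n$. Then I would solve the resulting small linear program over the fractions of $n=2$ and $n\ge3$ pulses acknowledged, giving the bound in \eqref{eq:uuubound}.
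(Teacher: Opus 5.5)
Your overall route is the one the paper sketches: reduce a cheating Bob to choosing how many pulses of each photon number to acknowledge, use that $c'P-cP'$ cancels the two-photon bucket, and bound what remains by $c'P_{\ge 3}$ plus fluctuations. The security argument, however, has a genuine gap in how you define the bad event.

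Your simulator needs a single-photon pulse in $I$, so you set out to bound acceptance when $I$ contains none. Step 2 then assumes $I$ contains only pulses with $n\ge 2$. Nothing forces this, because Bob may acknowledge vacuum pulses. Conditioned on $n=0$, a pulse has intensity $\nu$ with probability $a/(a+a')$. Each acknowledged vacuum pulse therefore raises $T$ in expectation by $\frac{c'a-ca'}{(a+a')(bc'-b'c)}=\frac{\nu+\nu'}{\nu\nu'(a+a')}$. This is strictly more than the $1/(b+b')$ contributed by a genuine single-photon pulse; cross-multiplying, the difference is $2(c+c')>0$. In the lossy regime there are about $(a+a')N/2\gg K$ vacuum pulses. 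A Bob who fills $I'$ with them therefore passes $\mathcal B$ with overwhelming probability for natural parameters (small $\eta$, $K$ near its honest mean). So the probability you want to bound is not small, and no condition like \eqref{eq:uuubound}, which has no vacuum term, could make it so.

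The missing idea is that vacuum slots are as safe as single-photon ones. Bob holds nothing there, and since $\cptp U'_j$ was never revealed, $\tilde{\cptp U}_j=\cptp U_{\sigma(j)}\,{\cptp U'_j}^{\dagger}$ is uniform and independent of $\cptp U_{\sigma(j)}$. The simulator can thus assign $k$ to such a slot and discard $\cptp U_k[\rho]$. With the bad event $I\subseteq\{n\ge 2\}$, your steps 1--3 do lead to \eqref{eq:uuubound}.

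Two further steps also need repair.

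First, the simulator cannot simply ``place'' $\cptp U_k[\rho]$ at a single-photon index. WCPGenerator delivers the $n=1$ states at emission, before $I'$ and $\sigma$ exist. BatchRSP gives $\cptp U_k[\rho]$ only as a quantum state, and Bob's choice of $I'$ is independent of which simulated pulse would be the real one. You need a deferred embedding, for example:
\begin{itemize}
\item hand Bob halves of maximally entangled pairs for all single-photon emissions;
\item later teleport $\cptp V\cptp U_k[\rho]$, with $\cptp V$ uniform in $\mathcal U$, into the slot chosen for $k$, and set $\tilde{\cptp U}_j=(\cptp P\cptp V)^\dagger$ for the teleportation byproduct $\cptp P$;
\item measure the other halves to obtain states with known descriptions.
\end{itemize}
This keeps $\tilde{\cptp U}_j$ uniform only if $\mathcal U$ is a group containing the Pauli byproducts (for equatorial states, $\X$ together with the $\Z(\theta)$). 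The analogy with \cref{hdr:thm:rsp-from-rrd} does not carry over, since there the secret rotation is applied by the resource after Bob's state is fixed.

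Second, your two treatments of subsampling in the correctness part are not equivalent. $T$ is linear in $(P,P')$ and is compared with the absolute threshold $t$. Evaluating it on a uniform $K$-subset of $\tilde I$ multiplies its mean by about $K/|\tilde I|\approx 1-\delta/\rho$, with $\rho=(2-e^{-\eta\nu}-e^{-\eta\nu'})/2$. This shifts the mean below $t$ by order $\delta N$, which a fixed $\Delta_0$ need not absorb. Preserving the intensity ratio does not prevent this. The stated bound therefore requires $P,P'$ to be counted over all acknowledged pulses, which requires Bob to report $\tilde I$, or the threshold to be rescaled accordingly.
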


In addition, in the high-loss limit \(\eta\to0\), the number of pulses needed is given by the following theorem:
\begin{theorem}[Scaling for $\eta \to 0$]
  \label{thm:scaling}
  For \(\nu=\alpha\nu'\) with \(0<\alpha<1\), the number of pulses needed to obtained a given error scales as \(1/\eta^{2}\) when \(\eta\to0\).
\end{theorem}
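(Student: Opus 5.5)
The statement to prove is \cref{thm:scaling}: with $\nu=\alpha\nu'$, the number of pulses needed scales as $1/\eta^2$.

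The plan is to read off the scaling from \cref{hdr:thm:instantiation}. That theorem bounds both errors by terms of the form $\exp(-\kappa N)$. The security part needs the right-hand side of \cref{eq:uuubound} to be a positive constant $g(\eta)$, which is then split into $\Delta_0,\Delta_0',\Delta_0''$. The exponents $\kappa$ are quadratic in these margins, normalised by $(bc'-b'c)$ and $C$. So for a fixed target error, $N$ must scale like $1/\kappa(\eta)$. I would do three things in order:
\begin{enumerate}
\item choose the intensities $\nu'$ (hence $\nu=\alpha\nu'$) as functions of $\eta$ so that $g(\eta)>0$;
\item find the leading order of $g(\eta)$ as $\eta\to0$;
\item show that the best achievable $\kappa$, optimised over the split of $g$, is $\Theta(\eta^2)$.
\end{enumerate}

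\textbf{Expansion.} I would Taylor-expand for small $\eta$, using $1-e^{-\eta\nu}=\eta\nu+O(\eta^2)$. Write $D=bc'-b'c$. With $\nu=\alpha\nu'$ this gives
\[
D=\tfrac12 e^{-\nu-\nu'}\nu\nu'(\nu'-\nu)\neq 0 .
\]
The numerator of $g$ is
\[
\eta\,(c'\nu-c\nu')+O(\eta^2)-c'(1-a-b-c).
\]
The subtracted multi-photon term $c'(1-a-b-c)=O(\nu^3\nu'^2)$ does not vanish with $\eta$. So positivity forces the intensities to shrink with $\eta$, say $\nu'=\beta\eta$ with $\beta$ a constant.

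\textbf{Orders of magnitude.} With $\nu'=\beta\eta$:
\begin{itemize}
\item $b,b'=\Theta(\eta)$ and $c,c'=\Theta(\eta^2)$, so $D=\Theta(\eta^3)$ and $C=\Theta(\eta^2)$;
\item the leading numerator term is $\eta(c'\nu-c\nu')=\Theta(\eta^4)$;
\item the multi-photon term is $c'(1-a-b-c)=\Theta(\eta^5)$, so choosing $\beta$ small enough (relative to $\alpha$) keeps the numerator positive at order $\eta^4$.
\end{itemize}
Hence $g(\eta)=\Theta(\eta)$, and every margin $\Delta_0,\Delta_0',\Delta_0''$ can be taken $\Theta(\eta)$ (note $c'/D=\Theta(1/\eta)$, so $\Delta_0''$ is correspondingly smaller).

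\textbf{Correctness exponent.} The correctness exponent is $\Delta_0^2 D^2/(4C^2)$. Since $D/C=\Theta(\eta)$ and $\Delta_0=\Theta(\eta)$, this exponent is $\Theta(\eta^4)$. That would give $N\sim1/\eta^4$, not the claimed $1/\eta^2$. So the hardest step is to get the right normalisation. Taken at face value, the constants of \cref{hdr:thm:instantiation} may give the wrong power. I expect the resolution is that the relevant quantities are variances, not worst-case ranges. The counts $P,P'$ are binomial with success probabilities $\Theta(\eta^2)$, since $\nu\eta=\Theta(\eta^2)$. A Chernoff/Bernstein bound in relative form then gains the factor $1/\eta^2$ that Hoeffding loses.

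My first attempt would therefore redo the concentration step behind \cref{hdr:thm:instantiation} with multiplicative Chernoff bounds for the acknowledged counts. I would then track how the deviation of $T$, relative to its mean of order $N\eta^2$ (the expected number of acknowledged pulses), scales. I expect this to yield an exponent $\Theta(\eta^2)N$ for both the correctness and the security terms, giving $N=\Theta(1/\eta^2)$ for a fixed error. The security side, through the same Bernstein treatment of $\Delta_0',\Delta_0''$, should follow analogously. If instead one insists on $\nu'$ constant, positivity of $g$ fails for small $\eta$; this case analysis justifies the choice made above.
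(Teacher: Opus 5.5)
Your small-$\eta$ expansions are correct for $\nu'=\beta\eta$. However, positivity of the right-hand side of \cref{eq:uuubound} only forces $\nu'\to 0$, not $\nu'=\Theta(\eta)$. To leading order $g(\eta)\approx\eta-\frac{\alpha^2\nu'^2}{6(1-\alpha)}$, so any $\nu'=O(\sqrt{\eta})$ is admissible; your own comparison $\Theta(\eta^5)$ versus $\Theta(\eta^4)$ already shows this slack. The dichotomy ``$\nu'$ constant versus $\nu'=\beta\eta$'' therefore does not justify your choice, and the choice is not innocuous. $\nu'=\Theta(\eta)$ is exactly the regime $\mu=\Theta(\eta)$ of the single-intensity scheme, where the $\Theta(\mu^2)$ two-photon bucket pushes the intensity down. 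The paper attributes the improvement to the fact that $c'P-cP'$ is blind to that bucket. The adversary is then left with only $3^+$-photon events of weight $\Theta(\nu^3)$, so larger intensities become admissible. Setting $\nu'\propto\eta$ discards that gain, which is why you recover exactly the single-intensity $1/\eta^4$.

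The rescue you propose, replacing Hoeffding by relative Chernoff/Bernstein bounds, then carries the whole proof, and it is not carried out. On the security side the acknowledged set is chosen adversarially as a function of the photon numbers. You would need concentration over the hidden intensity labels conditioned on those numbers (with the fixed $N/2$ split), plus a relative bound on the $3^+$-photon count controlled by $\Delta_0''$. ``Should follow analogously'' does not establish this.

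It is also not the mechanism the paper invokes. The same relative bound applied to the single-intensity test already yields $1/\eta^2$: there the acknowledged counts have mean $\Theta(N\eta^2)$ and are separated by a gap $\Theta(N\eta^2)$. So it cannot be what separates two intensities from one.

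What you correctly detect is that the stated exponents are $O(\eta^2\nu'^2)$. This follows from $\delta<\eta(\nu+\nu')/2$, $\Delta_0\le g=O(\eta)$, and $D/C\le\alpha(1-\alpha)\nu'e^{-\nu}$ with $D=bc'-b'c$. Hoeffding-type bounds therefore give $1/\eta^2$ only if $\nu'$ stays bounded away from $0$. As long as \cref{eq:uuubound} keeps the worst-case term $c'(1-a-b-c)$, $\Delta_0\le g$ forces $\nu'=O(\sqrt{\eta})$. The correctness bound of \cref{hdr:thm:instantiation} then certifies a fixed error only for $N$ of order $1/\eta^3$ at best.

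The repair consistent with the paper's explanation acts on that term rather than on the concentration inequality. Let $p_n(x)=x^ne^{-x}/n!$. In expectation the $n$-photon bucket enters $c'P-cP'$ with coefficient $c'p_n(\nu)-cp_n(\nu')=\frac{e^{-\nu-\nu'}\nu^2\nu'^2}{2\,n!}\bigl(\nu^{n-2}-\nu'^{n-2}\bigr)$. This vanishes for $n=2$ and is negative for $n\ge 3$, so higher-photon pulses cannot raise $T$. Using this instead of the bound $c'(1-a-b-c)$ permits $\eta$-independent intensities with $\nu'e^{-\nu'}>\nu e^{-\nu}$. For these the margins and $\delta$ are $\Theta(\eta)$, $D/C=\Theta(1)$, and the Hoeffding-type exponents are $\Theta(\eta^2)$, giving $N$ of order $1/\eta^2$.
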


The intuition is straightforward. Because every optical and classical choice is randomized, a malicious receiver’s strategy boils down to bookkeeping: count how many 0-, 1-, 2-, \ldots-photon pulses arrived, then decide how many in each bucket to acknowledge.  With that reduction in hand, one can easily check that the linear combination \(c'P - cP'\) is independent of whatever the receiver does with the 2-photon bucket---remember \(P\) and \(P'\) are the acknowledged counts for the two intensities, while \(c\) and \(c'\) are the corresponding Poisson weights of 2-photon events.  Removing the dominant loophole from the single-intensity setting of \cite{DKL11universal}, namely the free play on two-photon pulses, forces the adversary to rely on much rarer higher-photon events.  That is exactly what yields the tighter bounds and the improved \(1/\eta^2\) scaling in the low-transmittance regime.
\section{Verification with Weak Coherent Pulses}
\label{sec:org39af42c}
The full SDQC protocol needs no new security proof once the WCP-based preparation is in place.  We simply plug the composable building block in:
\begin{enumerate}
\item Run the dummyless VBQC protocol.
\item Every time the verifier would call the RSP resource as the sender, substitute the multi-intensity WCP procedure (\cref{hdr:proto:decoy-state-method}) followed by the gadget to distill a single hidden qubit.
\end{enumerate}

Abstract Cryptography then does the rest: the construction error of the RSP block adds to the global correctness and security bounds. No further simulator needs to be written, and no trap analysis has to be redone.

One warning remains: we have closed the side-channel opened by multi-photon event, yet we have assumed no other side-channels are present in the WCPGenerator. In the event a device exposes other side channels they must be addressed separately.
\section{Takeaways}
\label{sec:org9c54a6f}
In this chapter, we have gained practicality and insights.  We transformed weak off-the-shelf optics into a clean resource.  Insisting on composability, we obtain a verification protocol that:
\begin{itemize}
\item avoids single-photon sources and still verifies \cmp{BQP} computations.
\item scales as \(1/\eta^2\)---for two intensities---instead of \(1/\eta^4\), a practical gain over long fibres.
\item keeps the heavy lifting on the protocol side (randomization, estimation, privacy amplification), not on the hardware.
\end{itemize}

There are trade-offs though.  The gadget adds complexity, and pushing the scaling further---to the \(1/\eta\) of folklore decoy proofs---would likely require more intensities and an even hairier analysis.

In addition, the same machinery immediately strengthens QKD proofs: it fixes a gap in decoy-state security arguments and delivers full composability.

In short: careful modularization plus statistics let us recycle today’s QKD infrastructure for tomorrow’s verified quantum cloud.
\part{Extending Capabilities}
\label{hdr:sec:extending}
\cleardoublepage
\lettrine[refstring, lines=3, lraise=0.15]{T}{his chapter} explores two extensions to VBQC protocol. The first one is turning single-verifier protocols into secure multi-party delegated quantum computations. The development of Secure Multiparty Quantum Computation was chronologically the first project that was started. It lasted for a long time with ups and downs as we even broke an earlier version of the protocol, realizing at that occasion that blindness does not prevent attacks on specific types of qubits. We also better understood that attacker that is allowed to intervene in between trusted step rather than at the end was, even for perfectly blind protocols, is a much more adverse situation from the point of view of the honest party. While this project was started early, it is only after introducing our modular verification framework \cite{KKLM22unifying} that we managed to advance in the right direction. Indeed, it is for the purpose of this project that we introduced the dummyless protocol, and realized that the gadget from \cite{DKL11universal} could serve many purposes. To avoid repetition, we will present the lift from SMPC to SMPQC by building on the already presented dummyless verification protocol.

The second extension is about  securely delegating fault-tolerant quantum computations. The problem lies in the failure of the naive solution consisting of running entire protocols encoded into an error-correcting code. Indeed, even if one is willing to grant the verifier to perform single logical-qubit preparations, security remains the challenge.  The reason for such difficulty is that instead of using a 2-dimensional system, it is replaced by a high-dimensional system whose extra dimension increase the attack surface for a malicious party. This was acknowledged in detail in \cite{ABEM17interactive}. The solution we propose starts by properly defining imperfections---called compromise events---which are assumed to be stochastic. From there, it adds requirements to the fault-tolerance design rules in order to ensure not only correctness---standard fault-tolerance---but also security, whenever the compromise probability is below a constant threshold. This is then used to concretely construct the logically encoded RSP, thereby clearing the path to practical implementations of statistically secure, large-scale verified quantum computation.
\chapter{Secure Multiparty Quantum Computation}
\label{hdr:sec:smpqc}
\lettrine[refstring, lines=3, lraise=0.15]{T}{his chapter} shows how to lift a classical secure multiparty computation to the quantum world within our modular VBQC framework. We keep the hardware asymmetric---—one quantum server, many lightweight clients---and the security composable. The core steps are:
\begin{itemize}
\item Replace the single client VBQC private traps and pads by collective ones, so as to accommodate multiple parties without revealing where traps and data live.
\item Introduce a purely classical orchestrator to aggregate randomness, choose traps, and drive the UBQC dialogue without touching any qubits.
\item Realize the orchestrator’s only quantum need---remote state prepa\-ration---via a Collective RSP primitive that a single honest client is enough to secure.
\item Finally, discharge the orchestrator by running a standard classical SMPC among the clients, preserving the global security bound.
\end{itemize}

The result is a statistically secure lift of a classical SMPC into  SMPQC for \cmp{BQP} computations, achieved with minimal quantum trust and network complexity.
\section{Motivation}
\label{sec:orgb7d0ae3}
Classical secure multiparty computation (SMPC) has been around for decades and is now a mature toolbox: a set of parties can jointly evaluate a function of their private inputs while learning nothing beyond their own output.  Secure multiparty quantum computation (SMPQC) predates client–server verification: already in the early 2000s, quantum protocols were proposed to let several distrustful parties securely evaluate a joint circuit without revealing their inputs (see e.g. \cite{CGS02secure,BCGH06secure,DNS12actively,DGJM20secure}).

Our aim here is different in flavor.  We want to keep the VBQC asym\-metry---one quantum workhorse, many lightweight clients---while moving from a single client to many.  Doing this in MBQC raises two immediate questions:
\begin{enumerate}
\item How do we hide all sensitive structure---pads, traps, inputs---when no single client can be trusted with it?
\item How do we remain secure if an arbitrary subset of clients colludes with the server?
\end{enumerate}

Answering these will occupy the rest of the chapter: we first identify what truly needs to be shared, then show how to share it without breaking blindness or verifiability, and finally fold everything back into a composable security statement.
\section{Challenges for Security in Multi-Party Scenario}
\label{sec:orgf4316c4}
We adopt the strongest natural threat model: any party marked dishonest (server or client) may behave in an arbitrarily malicious, stateful, and adaptive way.  Dishonest parties may collude without restriction—exchanging all their classical transcripts and any quantum side information they hold.  Under such a model it is always sound to merge the colluding set into a single, all-powerful adversary.  By contrast, honest clients cannot be safely merged.  They do not know who else is honest and must not assume shared secrets.

With that in mind, a useful maneuver is to study extreme cases by aggregation. One such case is where all clients are honest, but the server dishonest. If we fuse the honest clients into one super-client, we should recover a textbook single-client VBQC: random \(\Z(\theta)\) pads hide inputs; traps catch deviations.

Split that super-client back into individual ones, however, and two problems surface.
\begin{itemize}
\item \emph{Ownership of traps is fatal.} Traps must cover every site where a harmful deviation could occur.  If a trap---or its one-time pad---is owned by client \(j\), then downstream dependencies will reveal where the client's data or trap lay---violating blindness.
\item \emph{Selective cheating becomes possible.} The colluding block (server + any dishonest clients) knows which qubits it helped encrypt.  It also knows which ones it did not.  Deviations can then target the latter, eroding the detection/insensitivity guarantees that underpin single-client VBQC security proofs.
\end{itemize}

Hence pads and traps cannot be private possessions.  They must be collective objects---prepared and encrypted so that no strict subset of honest players---and certainly not the adversary---can tell which qubits are traps or carry data. As a consequence of the traps covering all the possible locations, data will also need to be collectively encoded to keep the indistinguishability between the two.
\section{Recovered Security through Collective RSP}
\label{sec:org288531e}
The previous section told us what we need: traps and pads that belong to no single client. Here we show how to manufacture them.

The first trick is to introduce a new party, the orchestrator, who will own the prepared quantum state, but without having to have quantum capabilities on his own. The second trick is already an old one: it relies on the phase-adding gadget of \cite{DKL11universal}. If several parties each supply a random equatorial qubit, then---with the help of a round of \(\CNOT\)'s and measurements in the computation basis---their phases simply add up.  One honest client is enough to hide the sum.  That is exactly the collective one-time pad we require.

We formalize this as a resource and a simple protocol that constructs it, where several Alices play the clients, Bob the server, and Oscar the orchestrator.
\begin{resource}[Collective Remote State Preparation for $n$ Senders]
  \begin{algorithmic}[0]
    \STATE \textbf{Inputs:}
    \begin{itemize}
    \item Oscar has as input an angle $\theta \in \Theta = \qty{\frac{k\pi}{4}}_{k \in \qty{0, \ldots, 7}}$.
    \item The Alices have no input.
    \item Bob has no input.
    \end{itemize}
    \STATE \textbf{Computation by the Resource:} The Resource prepares and sends the state $\ket{+_\theta}$ to Bob.
  \end{algorithmic}
  \label{hdr:res:crsp}
\end{resource}

The protocol for constructing this resource is then:
\marginnote{\\ Providing a CRSP with the possibility to hide the preparation of dummy qubits is difficult. Indeed, early papers exploring RSP \cite{L00classical,LS03oblivious} point toward impossibility. More precisely, if we prepare dummies and states in the $XY$ plane, we prepare a /generic/ ensemble of states in the language defined in \cite{LS03oblivious}. By requiring security, we have that any such protocol can be transformed into an equivalent one in which the sender also performs operations that are fixed in advance. But if we provide the sender a fixed state, say $\ket +$ this cannot be the case. So we don't seem to be able to perform the operation---the link is that we want each of them to apply an encryption on top of an already encrypted state... but that seems to be possible if we restrict to OTP encoding by performing teleportation...}
\begin{protocol}[Collective Remote State Preparation]
  \begin{algorithmic} [0]
    \STATE \textbf{Input:} Oscar has as input an angle $\theta \in \Theta$. Alices and Bob have no input.
    \STATE \textbf{Protocol:}
    \begin{itemize}
    \item Alice number $j$ samples $\theta_j \sample \Theta$ and sends $\ket{+_{\theta_j}}$ to Bob.
    \item Alice $j$ sends $\theta_j$ to Oscar using a Secure Classical Channel.
    \item For each $j \neq n$, Bob applies $\CNOT_{n,j}$ between the qubits $n$ and $j$, with the first being the control and the second the target. It measures the target qubit $j$ in the computational basis with measurement outcome $t_j$. It sends the vector $\bm{t}$ containing all the measurement outcomes to Oscar.
    \item Oscar computes $\theta' = \theta_n + \sum_{j \in [n-1]} (-1)^{t_j} \theta_j$ and sends a correction $(b, (-1)^b\theta - \theta')$ to Bob  with $b \sample \bin$. Bob then applies $\X^b\Z((-1)^b\theta - \theta')$ to the unmeasured qubit, keeping it as output.
    \end{itemize}
  \end{algorithmic}
  \label{hdr:proto:crsp}
\end{protocol}

{
  In effect, the protocol allows Oscar to implement an RSP without having any quantum capabilities.
  \begin{theorem}[Security of Collective Remote State Preparation]
    \cref{hdr:proto:crsp} perfectly constructs the Remote State Preparation \cref{hdr:res:rsp} from Secure Classical Channel Resources between each client and the orchestrator, for malicious coalitions that include the Server and at most \(n-1\) Clients.
    \label{hdr:thm:crsp-security}
  \end{theorem}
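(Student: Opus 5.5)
In the Abstract Cryptography framework, the plan is to verify two things: correctness when all parties are honest, and the existence of a simulator for every admissible coalition. The coalition consists of Bob together with an arbitrary set of at most \(n-1\) Alices. Since colluding parties can be merged into one adversary, I would treat the coalition as a single distinguisher interface. By symmetry of the analysis I would fix one honest client \(h\in[n]\). It is convenient to first treat the case \(h=n\), where the honest qubit is the control, and then the case \(h\neq n\), where it is a target.

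\emph{Correctness.} The first step is the gadget identity. Take \(\ket{+_{\theta_n}}\otimes\ket{+_{\theta_j}}\), apply \(\CNOT_{n,j}\), and measure qubit \(j\) in the computational basis with outcome \(t_j\). This leaves the control in \(\ket{+_{\theta_n+(-1)^{t_j}\theta_j}}\) up to a global phase. A direct two-line computation confirms this, and iterating it over \(j\) gives \(\ket{+_{\theta'}}\) with the \(\theta'\) computed by Oscar. Next, the correction \(\X^b\Z((-1)^b\theta-\theta')\) maps \(\ket{+_{\theta'}}\) to \(\X^b\ket{+_{(-1)^b\theta}}\). Using \(\X\ket{+_\phi}\propto\ket{+_{-\phi}}\), this equals \(\ket{+_\theta}\) up to global phase. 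Hence the honest execution outputs exactly the ideal resource's output.

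\emph{Simulator.} The simulator has access to the ideal RSP (\cref{hdr:res:rsp}), which hands \(\ket{+_\theta}\) to Bob's interface. It must produce everything the coalition sees:
\begin{itemize}
\item the honest client's qubit,
\item the secret angles of the dishonest clients, which it simply receives since it plays Oscar's channel endpoints toward them,
\item the correction message \((b,\varphi)\) sent after receiving \(\bm t\).
\end{itemize}
For \(h=n\), the simulator forwards the resource's \(\ket{+_\theta}\) as the honest control qubit \(\ket{+_{\theta_n}}\). After receiving \(\bm t\) and the dishonest angles \(\theta_j\), it sends \(b\sample\bin\) and \(\varphi=-\sum_{j<n}(-1)^{t_j}\theta_j\) when \(b=0\), with the analogous expression when \(b=1\). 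I would check that the real pair \((\theta_n,\varphi)\), with \(\theta_n\) uniform, is jointly distributed identically to the simulated pair, since \(\theta_n\) is a uniform one-time pad on \(\varphi\).

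For \(h\neq n\), I would use the identity \(\CNOT_{n,h}\) then measure \(h\): this teleports a phase \((-1)^{t_h}\theta_h\) onto whatever qubit sits at position \(n\). The simulator sends the resource's qubit as \(\ket{+_{\theta_h}}\), with \(\theta_h\) unknown, and must account for the sign \((-1)^{t_h}\). The random \(b\) in the correction is exactly what absorbs this sign. Since \(\X\) conjugation negates the phase, choosing \(b\) to depend on \(t_h\) within uniform randomness keeps the message distribution unchanged.

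\emph{Main obstacle.} The delicate step is that the adversary may act arbitrarily, not by honest \(\CNOT\)s, on the qubits before and after reporting \(\bm t\). The simulator therefore cannot assume the gadget identity holds on the adversary's side. I would handle this by viewing the real protocol from the honest party's perspective: the honest client's entire contribution is the state \(\sum_{\theta_h}\ketbra{\theta_h}\otimes\ketbra{+_{\theta_h}}\) together with a classical message to Oscar. I would then show that Oscar's correction message equals \(\theta\) minus \((-1)^{t_h}\theta_h\) minus terms known to the coalition, up to the \(\X^b\) sign. Because this is a function of \(\theta_h\), \(\theta\) and public data only, the joint state of \((\ket{+_{\theta_h}},\text{message})\) is identical to \((\ket{+_{\theta}}\) relabelled, uniform message\()\). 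The equality follows by the change of variables \(\theta_h\mapsto\theta_h+\text{shift}\) on \(\Theta\) combined with the sign flip by \(b\). Once this equality of the cq-state is established, the adversary's arbitrary subsequent map acts identically in both worlds, which gives perfect indistinguishability.
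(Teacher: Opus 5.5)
Your correctness computation is sound, and so is the overall plan: merge the coalition into one adversary, keep one honest client $h$, treat $h=n$ and $h\neq n$ separately, and use the random $\X^b$ to absorb the adversarially chosen sign $(-1)^{t_h}$. The simulators you actually specify, however, do not work. In both cases you forward the resource's qubit $\ket{+_\theta}$ unmodified as the honest client's qubit. This makes the effective pad equal to $\theta$, a value the distinguisher knows because it supplies $\theta$ at Oscar's interface.

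For $h=n$ and $b=0$, your correction $\varphi=-\sum_{j<n}(-1)^{t_j}\theta_j$ is then a deterministic function of the coalition's view. In the real protocol, $\varphi=\theta-\theta_n-K$ is uniform on $\Theta$, because $\theta_n$ is a fresh secret. So a distinguisher whose dishonest Alices $1,\dots,n-1$ all report $\theta_j=0$ simply tests whether $\varphi=0$, and wins with constant advantage. For $b=1$, the value consistent with your choice would be $-2\theta-K$, which the simulator cannot compute since it only holds the quantum state.

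The case $h\neq n$ has the same two defects. With an unmodified qubit, matching $(-1)^b\theta-(-1)^{t_h}\theta_h$ forces $b=t_h$, which is not independent of the adversary, and $\varphi=-K'$, which is not uniform; any other $b$ requires knowing $\theta$. Your last paragraph's change of variables points in the right direction, but it contradicts these simulators and skips an ordering issue. The message depends on $\bm t$, which the adversary outputs after receiving the qubit. So you cannot first prove a cq-state equality for the pair (qubit, message) and then let the adversary act.

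The missing ingredient is that the simulator must apply its own fresh one-time pad, including the $\X$ part, before $\bm t$ is known. Concretely, it samples $c\sample\bin$ and $\alpha\sample\Theta$, and forwards $\Z(\alpha)\X^c\ket{+_\theta}\propto\ket{+_{(-1)^c\theta+\alpha}}$ as the honest qubit. After receiving $\bm t$ and the dishonest clients' reported angles, it replies $(b,\varphi)=(c,-\alpha-K)$ if $h=n$, or $(b,\varphi)=(c\oplus t_h,-(-1)^{t_h}\alpha-K')$ if $h\neq n$. Here $K$ and $K'$ collect the terms of $\theta'$ known to the coalition; any further honest clients are run honestly by the simulator with self-sampled angles.

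Now set $\theta_h=(-1)^c\theta+\alpha$. The pair $(c,\theta_h)$ is uniform on $\bin\times\Theta$, so $c$ is independent of the forwarded qubit, and hence of whatever $t_h$ the adversary produces. Therefore $b$ is uniform and independent of the rest of the coalition's view. A one-line substitution then gives $\varphi=(-1)^b\theta-(-1)^{t_h}\theta_h-K'$ (respectively $(-1)^b\theta-\theta_n-K$), which is exactly the real message. This is precisely the role of the random $\X^b$ in the protocol: it lets the simulator commit to the $\X$ part of its pad before the adversary fixes $t_h$, and then re-randomise $b$ accordingly.
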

  \section{Verification with Collective RSP}
  \label{sec:org97f457e}
  The CRSP primitive lets a classical orchestrator drive VBQC while the clients supply only random equatorial qubits and their chosen classical inputs.  We first spell out that orchestrated version.  Then we remove the orchestrator: a classical SMPC among the clients emulates its role without altering the global security bound.
  \subsection{Orchestrated SMPQC}
  \label{sec:orgca4550f}
  \begin{protocol}[Secure Multi-Party Delegated Quantum Computation with Orchestrator]
    \begin{algorithmic} [0]
      \STATE \textbf{Public Information:} 
      \begin{itemize}
      \item $G = (V, E, I, O)$, a graph with input and output vertices $I$ and $O$ respectively;
      \item $\{I_j\}_{j \in [n]}$, a partition of the input vertices, with each $I_j$ being associated to Alice number $j$.
      \item $\sch P$, a trappified scheme on graph $G$;
      \item $\preceq_G$, a partial order on the set $V$ of vertices;
      \item $N, d, w$, parameters representing the number of runs, the number of computation runs, and the number of tolerated failed tests.
      \end{itemize}  

      \STATE \textbf{Alices' Inputs:}
      \begin{itemize}
      \item Alice number $j$ has as input a classical bit-string $x_j \in \bin^{\abs{I_j}}$.
      \item The $n$ Alices collectively have as input a set of angles $\{\phi_i\}_{i \in V}$ and a flow $f$ which induces an ordering compatible with $\preceq_G$.
      \end{itemize}

      \STATE \textbf{Protocol:}
      \begin{enumerate}
      \item All Alices send their input $x_j$ to Oscar, together with the computation angles $\{\phi_i\}_{i \in V}$ and flow $f$. Let $x$ be the concatenation of all $x_j$.
      \item Oscar and Bob perform an execution of the Trappified Blind Delegated Quantum Computation (\cref{hdr:proto:tbdqc-classical-io}). Instead of having Oscar send rotated states during the UBQC execution, it performs for each state an instance of \cref{hdr:proto:crsp} implementing CRSP together with the $n$ Alices.
        \begin{enumerate}
        \item Oscar samples uniformly at random a subset $C \subset [N]$ of size $d$ representing the computation runs.
        \item For $k \in [N]$:
          \begin{enumerate}
          \item If $k \in C$, Oscar sets the computation for the run to $(\{\phi_i\}_{i \in V}, f)$ with input $x$. Otherwise, Oscar samples a test $(T, \sigma, \tau)$ from the trappified scheme $\sch P$.
          \item Oscar and Bob execute the chosen run with the UBQC \cref{hdr:proto:ubqc}. For each qubit sent during the execution of the protocol, they instead execute the Collective RSP \cref{hdr:proto:crsp} together with the $n$ Alices. Oscar samples the Collective RSP with input and angle $\theta$ each time sampled at random.
          \item If the run is a test, Oscar checks whether it passed.
          \end{enumerate}
        \item If the number of failed tests is greater than $w$, Oscar sets the output to $(\bot,\abort)$.
        \item Otherwise, let $O$ be the majority vote on the output results of the computation runs. Oscar sets the output to $(O, \accept)$.
        \end{enumerate}
      \item Oscar sends the output to all Clients.
      \end{enumerate}
    \end{algorithmic}
    \label{hdr:proto:smpqc}
  \end{protocol}
  \subsection{Removing the Orchestrator}
  \label{sec:org3cb68a9}
  In the previous protocol, the orchestrator's actions are purely classical. They can be replaced by a classical SMPC among the clients. It needs only to provide composable security to perform  coin-tossing, basic string operations (array lookup) and computations in \(\mathbb  Z_8\) and \(\mathbb Z_2\).

  The security of this fully distributed SMPQC is given by the following theorem:

  \begin{theorem}
    Suppose that the Trappified Blind Delegated Quantum Computation (\cref{hdr:proto:tbdqc-classical-io}) \(\epsilon_V\)-constructs the Secure Delegated Quantum Computation with classical-I/O (\cref{hdr:res:sdqc}) for leak \(l_\rho\). Then \cref{hdr:proto:smpqc} \(\epsilon_V\)-constructs the Quantum Secure Multi-Party Computation with classical-I/O from an interactive Classical Secure Multi-Party Computation for the same leak \(l_\rho\), against malicious coalitions that include at most the Server and \(n-1\) Clients.
  \end{theorem}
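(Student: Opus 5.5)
The proof will proceed by a hybrid argument inside Abstract Cryptography, replacing one layer of the protocol at a time by its ideal counterpart. Composability lets each layer be analyzed in isolation and the construction errors add up.

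\emph{Step 1: classical SMPC to orchestrator.} I start from the fully distributed protocol, where the clients run an interactive classical SMPC in place of Oscar. By the composable security of that SMPC, this is perfectly indistinguishable from the protocol in which the SMPC is replaced by the ideal classical functionality. That functionality is exactly a trusted classical orchestrator Oscar, who receives every input, does the coin-tossing and the $\mathbb Z_8$/$\mathbb Z_2$ arithmetic, and hands each corrupted client only its prescribed outputs. Any residual SMPC error is absorbed into the statement, since the hypothesis is a construction \emph{from} the ideal classical SMPC. This leaves \cref{hdr:proto:smpqc}. Here the secure classical channels from each Alice to Oscar are internal to the functionality.

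\emph{Step 2: Collective RSP to RSP.} Inside \cref{hdr:proto:smpqc}, every state Oscar sends is produced by an instance of \cref{hdr:proto:crsp}. By \cref{hdr:thm:crsp-security}, each instance perfectly constructs the RSP \cref{hdr:res:rsp} against coalitions containing the Server and at most $n-1$ clients. I replace all instances simultaneously by ideal RSP calls, using the composition theorem at zero cost. The key point is that the honest client's random $\theta_j$ one-time-pads the accumulated phase. Hence the coalition's view, consisting of its own $\theta_j$'s, the outcomes $\bm t$, and the correction $(b,(-1)^b\theta-\theta')$, is independent of $\theta$. The result is exactly \cref{hdr:proto:tbdqc-classical-io} run between Oscar, acting as verifier, and Bob. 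The dishonest clients are reduced to handing their inputs to Oscar and receiving the output.

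\emph{Step 3: apply the hypothesis and build the simulator.} By assumption, \cref{hdr:proto:tbdqc-classical-io} $\epsilon_V$-constructs SDQC with leak $l_\rho$. I therefore replace the Oscar–Bob interaction by SDQC with error $\epsilon_V$. To reach the multi-party ideal resource, I construct a simulator on the coalition's interfaces. For the dishonest clients' interfaces, it forwards their inputs $x_j$, and their share of the angles, to the multi-party resource. For Bob's interface, it uses the SDQC simulator, relaying $l_\rho$ and the abort bits $(e,c)$. The concatenated input $x$ then feeds $\cptp C$ with the same majority output, broadcast to all clients as in step 3 of \cref{hdr:proto:smpqc}. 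The total error is $0+0+\epsilon_V=\epsilon_V$.

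I expect the main obstacle to be Step 2 under adaptive collusion. The coalition includes clients who supply some of the CRSP qubits and learn part of the SMPC's outputs. I must check that nothing a corrupted client obtains from the orchestrator functionality, such as its own $\theta_j$ or other intermediate values, correlates with trap locations or with the $\theta$'s used in UBQC. In particular, the coalition must not learn which qubits it "helped encrypt" in a way that enables a selective attack. I would handle this by specifying the ideal classical functionality so that its only outputs to clients are the final result, while the $\theta$ inputs to CRSP are sampled freshly inside it. I would then argue that each CRSP instance is an independent, perfectly secure call, which justifies the parallel composition.
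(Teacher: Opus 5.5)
Your proposal is correct and follows essentially the same route as the paper, which simply retraces the chain of compositions (ideal classical SMPC $\to$ orchestrator, CRSP $\to$ RSP via \cref{hdr:thm:crsp-security}, then the assumed $\epsilon_V$-construction of SDQC) and adds up the construction errors within Abstract Cryptography. Your explicit accounting $0+0+\epsilon_V$ matches the stated bound. Your simulator, which routes the corrupted clients' inputs to the ideal resource and reuses the SDQC simulator on Bob's interface, spells out details the paper leaves implicit.
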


  The idea of the proof is simple. It is enough to retrace the order of compositions and use abstract cryptography to keep track of each construction error. All being at most negligible, and each resource being call at most a polynomial number of times, the security error remains negligible. 
  \section{Takeaways}
  \label{sec:orge449b50}
  We have shown that there is no fundamental difference in capabilities between the symmetric setup and the asymmetric one presented here. Both can be used to lift a composable classical SMPC to the quantum realm. To this end, we have used that one honest client already suffices to hide every phase and every trap. The heavy classical bookkeeping---who randomized what, where the traps sit, how to drive UBQC---can be concentrated in a purely classical orchestrator, and once written that way it can just as well be emulated by a standard composable SMPC.  The modular framework earns it here: we swap RSP for CRSP, the single-client for an orchestrator and then for SMPC, and the global security bound never has to be re-derived.

  The asymmetry of resources is preserved: the server does the quantum work, clients remain light, and the network need not be complete. There are, however, natural limits.  Our construction is tailored to \cmp{BQP} computations; pushing quantum data through the same pipeline would require traps that live outside a single Bloch-plane or a different masking mechanism altogether.  Architectures such as QLine \cite{PLLC23multi} suggest that even the CRSP step can be realized in hardware, by letting phases accumulate along a single fiber.

  In short, MBQC matches circuit-based approaches for \cmp{BQP} computation, and gains a practical path to implementation.  The remaining open question---collective RSP beyond one plane---are where the next advances could be.
  \chapter[Secure Delegated FTQC]{Secure Delegated Fault-Tolerant \newline Quantum Computation}
  \label{hdr:sec:sdftqc}
  \lettrine[refstring, lines=3, lraise=0.15]{T}{his chapter} tackles the question all earlier protocols politely dod\-ged. Can a verification scheme survive the avalanche of physical noise that accumulates in a large, fault-tolerant quantum computation?  Our goal is to show that it can.  We extend the modular VBQC framework to the fault-tolerant regime, and we do so without sacrificing composability nor efficiency---security error is still negligible in physical resources.

  In short, the chapter demonstrates that verifiable delegation scales: one can delegate a full fault-tolerant \cmp{BQP} computation, keep the verifier’s secrets, and retain a composable security bound negligible in the size of the logical registers. What we do not claim is that the verifier can stay limited to physical single-qubit operations---our construction grants the verifier single logical-qubit operations. Whether this can be lifted remains open.
  \section{Motivation}
  \label{sec:orgdf201f6}
  Shrinking the verifier's quantum duties was worth the effort---proof-of-concept experiments, certification, and cost all benefit from it. Yet one question still looms: do these protocols scale?

  As they stand, they do not. The best we achieved so far is robustness to global noise: if only a fixed fraction of whole rounds misfire, repetition and majority vote rescue the answer. Real devices misbehave differently. Errors strike locally, gate by gate. In a circuit with thousands of locations, even a per-gate error rate of \(10^{-3}\) guarantees that almost every round suffers a non-trivial fault. Round-level repetition is then powerless. What we need is robustness at the gate level---in other words, a way to delegate fault-tolerant quantum computations securely.

  Our route will therefore be: (i) make the imperfections explicit by modeling preparations, gates and measurements as stochastically compro\-mised---their control parameters may trickle out, and individual subsystems may be handed to the adversary; (ii) rebuild Remote State Preparation at the logical level of a concatenated code with transversal \(\Z(\theta)\) gates; (iii) split-compile each secret rotation into two random pieces and follow it by two error-correction blocks, so the effective leak probability drops quadratically at every concatenation step; and (iv) plug this fault-tolerant RSP back into the dummyless VBQC template and prove a threshold-style theorem \cite{AGP06quantum}: below a constant noise rate, the adversary's distinguishing advantage stays negligible even as the depth grows.
  \section{Side Channels Due to Error Correction}
  \label{sec:org0037182}
  In \cref{hdr:sec:rotations} we learned a lesson: to keep blindness intact, and thus verifiability, any deviation before a secret gate must be pushable to after that gate without picking up dependence on the secret.  Allowing the prover to supply the state and letting the verifier apply a hidden \(\Z(\theta)\) was fatal precisely because commuting an arbitrary error through a secret rotation imprints the secret angle \(\theta\) on the error.

  Exactly the same mechanism lurks for fault-tolerance. In the naive logical RSP construction, replace the single-qubit by a logical block: we prepare a logical \(\ket +\), then apply a logical \(\Z(\theta)\). Suppose the code is CSS and admits transversal \(\Z(\theta)\). A single physical \(\X\) error on one qubit of the block, followed by a perfect logical \(\Z(\theta)\), behaves differently depending on \(\theta\): if \(\theta = 0\), the error stays \(\X\); if \(\theta= \pi/2\), it turns into \(\Y\). The syndrome then changes character: from purely \(\Z\)-type it becomes mixed \(\X/\Z\)-type. Should the subsequent error correction miss that fault, the prover upon receiving the prepared faulty logical \(\ket{+_{\theta}}\) can, by reading out the syndrome, glean information about \(\theta\).

  One "if" too many?  Security arguments are exactly about preventing long chains of "ifs" from aligning.

  This example calls for a precise model of the imperfections that might arise on the verifier's side. This model must be compatible with standard fault-tolerance yet reflect delegation. To this end, we switch terminology for describing the model, where we have a user trying to perform quantum operations and an eavesdropper that plays the role of the adversary interfering with the computation.  In textbook fault-tolerance \cite{AGP06quantum}, the eavesdropper is typically granted the full circuit description even before fault locations are fixed.  In a delegated setting, this would instantly kill secrecy. To avoid that, we distinguish between public parameter domains and the chosen value.  Preparations, gates, and measurements are parameterized: the set \(\Lambda\) is public, but the actual \(\lambda \in \Lambda\) leaks only stochastically, while the eavesdropper can then choose an error to apply on the corresponding physical qubit. This is captured by:
  \begin{resource}[Stochastically Compromised Preparations, Gates and Measurements]
    A \emph{Stochastically Compromised Preparation, Gate or Measurement} with compromise probability \(p_{\mathrm{c}}\) is an ideal resource that behaves in the following way, where Alice plays the user and Bob the eavesdropper:
    \begin{algorithmic}[0]

      \STATE \textbf{Public Information:} 
      \begin{itemize}
      \item The compromise probability $p_c$;
      \item The size of the input and output registers $n$;
      \item The set of parameters that can be used to classically control the operation $\Lambda$, reduced to the name of the preparation, gate or measurement if not parameterized;
      \item The set of CPTP maps $\{\cptp U(\lambda)\}_{\lambda \in \Lambda}$.
      \end{itemize}

      \STATE \textbf{Alice's Input:}
      \begin{itemize}
      \item A value $\lambda \in \Lambda$ that parameterizes the preparation, gate or measurement;
      \item The quantum registers that are acted upon by the Resource.
      \end{itemize}	    

      \STATE \textbf{Bob's Input:} Bits $c_i \in \{0, 1\}$ indicating whether it wants to cheat, set to $0$ in the honest case and to $1$ when it wants to compromise the $i$-{th} location of the preparation, gate or measurement.

      \STATE \textbf{Computation by the Resource:}
      \begin{itemize}
      \item If the input register is non-empty, it inserts the state $\rho$ of the provided register into its internal register. Else it initializes its input register in the $\ket{0}^{\otimes n}$ state.
      \item It applies $\cptp U(\lambda)$ to $\rho$.
      \item For each $i$ s.t. $c_{i} =1$, with probability $p_c$:
        \begin{itemize}
        \item It sends $\lambda$ and the register at location $i$ to Bob;
        \item It receives from Bob a quantum system and inserts into its internal register at position $i$.
        \end{itemize}
      \item It outputs the state in its internal register at Alice's output interface.
      \end{itemize}
    \end{algorithmic}
    \label{hdr:res:compromised-operation}
  \end{resource}

  This resource faithfully reproduces the informal problem above. Implement the naive logical RSP with compromised \(\Z(\theta)\) gates, and every physical site becomes a potential leak of \(\theta\). As the block size grows, the probability that some site spills the secret tends to one. We therefore need an RSP construction that explicitly withstands stochastic compromise events. In the next section we show how concatenation and angle splitting drive the effective compromise rate down to negligible as the size of logical registers increases.
  \section{Secure Fault-Tolerant RSP via Split-Compilation}
  \label{sec:org9c1da07}
  \subsection{Conditions for Privacy and Gadget Construction}
  \label{sec:org0fa2dbb}
  The first constraint for our design is that the preparation has to be fault-tolerant. This does not necessarily imposes that everything on the verifier's side is fault-tolerant---the prover could possibly assist with multi-qubit gates and error correction---but we were not able to maintain security in this scenario. We thus resorted to grant the verifier the ability to prepare and operate on a single logical-qubit by itself. It would then follow the naive implementation proposed above: prepare a logical \(\ket +\) and apply a logical \(\Z(\theta)\). To keep the circuit uniform over all \(\theta\), we pick a code with a transversal \(\Z(\theta)\): the \([[15,1,3]]\) quantum Reed–Muller code. This avoids magic-state injection---whose circuitry would depend on \(\theta\) and could thus introduce side-channels---and keeps the proof tractable.

  The second constraint is that compromise probability must not scale with the block size. As we have seen, if each physical \(\Z(\theta)\) is compromised with probability \(p_{\mathrm{c}}\), a transversal gate leaks almost surely once the block is large enough. We fix this by ensuring that the effective probability of compromise events is squared for each added concatenation level. The trick is to apply \emph{split-compilation}: to add one level of concatenation, each physical qubit is transformed into a code-block of 15 qubits; the \(Z(\theta)\) is applied on each wire \(i\) by sampling \(\alpha_{i} \sample \Theta\), setting \(\beta_i = \theta - \alpha_i\), and applying \(\Z(\alpha_i)\) followed by \(\Z(\beta_i)\). Intuitively, both must be compromised for the adversary to reconstruct \(\theta\).

  Yet, this intuition is only partly correct. A single incoming \(\X\) error can still drag \(\theta\) into the syndrome if the error-correction block is itself compromised. The fix is simple: after each logical gate we run two EC blocks. If one is compromised, the other resets the syndrome to the all-zero string and forgets about \(\theta\).

  This yields the level-\(1\) split-compilation for \(\Z(\theta)\)---below \(0\)-Ga is synonym for physical gate, \(1\)-Ga are gates applied to logical qubits encoded using one level of concatenation:
  \begin{definition}[\(1\)-Safe$\Z(\theta)$]
    Using the \([[15,1,3]]\) quantum Reed Muller code, the \(1-\mathsf{Safe}\Z(\theta)\) gate that implements the corresponding \(1\)-Ga for the single-qubit \(\Z(\theta)\) gate is constructed as follows:
    \begin{algorithmic}[0]
      \STATE \textbf{For $i \in [1,15]$}:
      \begin{itemize}
      \item Sample $\alpha_i \sample \Theta$.
      \item Set $\beta_i = \theta - \alpha_i$.
      \item Apply in succession the two physical gates corresponding to $\Z(\beta_i) \circ \Z(\alpha_i)$ to the $i$\textsuperscript{th} qubit, via calls to the Stochastically Compromised Gates Resource.
      \end{itemize}
    \end{algorithmic}
  \end{definition}

  This gives the following procedure for the recursive simulation:
  \begin{definition}
    A \emph{\(1\)-SafeRec} for simulating a \(0\)-Ga consists of the \(1\)-Ga followed by two consecutive \(1\)-EC for each output block of the \(1\)-Ga.
  \end{definition}
  Here, \(1\)-EC is the level-1 error-correction routine that corrects a single logical-qubit encoded using one level of concatenation.
  \marginnote{\\ For avoiding confusion, we use the terminology of accurate and private so that it is distinct from the correct and secure ones that we reserve for the constructed RSP. Indeed, here, we will improve the accuracy and privacy by concatenation, up to a level where it will be good enough to provide correctness and security of the corresponding RSP. These notions of accuracy and privacy need not compose as they are only used in the context of the verifier-side RSP, while correctness and security do compose at the RSP level.}
  
  \begin{definition}[\(k\)-SafeRec]
    A \(k\)-SafeRec is obtained from a \((k-1)\)-SafeRec by replacing each \(0\)-Ga by its corresponding \(1\)-SafeRec.
  \end{definition}
  \subsection{Security of Fault-Tolerant RSP}
  \label{sec:org049cd7c}
  While the robustness to noise of the fault-tolerant RSP follows the usual threshold machinery, privacy does not come for free. We must rework the argument so that, after simulation, (i) the logical action is \emph{accurate} and (ii) \emph{private} in the sense that the entire syndrome history is independent of the secret angles \(\theta\). The main result is:
  \begin{theorem}[Threshold theorem for accurate and private computations]
    The diamond distance \(\delta\) between a computation with \(L\) locations simulated using \(k\)  levels of concatenation and an ideal computation performing the correct computation on the logical space and producing a syndrome that is independent of the parameters \(\theta\) of the gates satisfies 
    \begin{equation}
      \delta \leq 2L \times p_{\mathrm{0}}(p_{\mathrm{c}}/p_{\mathrm{0}})^{2^k} \label{hdr:eq:threshold},
    \end{equation}
    where \(p_{\mathrm{c}}\) is the compromise event probability of \cref{hdr:res:compromised-operation}.
  \end{theorem}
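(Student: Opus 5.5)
I would follow the Aharonov--Gottesman--Preskill extended-rectangle method, but replace the notion of a bad location by a bad location that is either faulty or compromised. In this combined notion, a failure is anything that can damage either accuracy or privacy. The proof then has three steps.

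\emph{Step 1: level-one ex-Recs.} I would work with the $1$-SafeRec. Its extended rectangle is the leading pair of $1$-EC blocks, the $1$-Ga, and the trailing pair of $1$-EC blocks. The property to show is that whenever at most one location inside a $1$-exRec is compromised, the exRec is both correct in the usual AGP sense and private. Privacy here means that the ideal decoder applied to the output, together with the recorded syndrome history, is independent of $\theta$.

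I would check this by cases.
\begin{itemize}
\item If the compromised location is one of $\Z(\alpha_i)$ or $\Z(\beta_i)$, the adversary learns only one of $\alpha_i$ or $\beta_i$. Each is marginally uniform on $\Theta$, so the leak is independent of $\theta$. The inserted error is a single-qubit fault, which the distance-$3$ Reed--Muller code corrects.
\item If the compromised location lies in an EC block, that block may write a $\theta$-dependent syndrome, for example after the $X\to Y$ conversion described above. The second EC block then sees a state that is in the codespace up to at most one fault. It resets the syndrome to the all-zero string, so the $\theta$-dependence is confined to a single block and never reaches the decoded output. This is the place where the doubled EC is essential.
\item If the compromised location is in the incoming EC, I would rely on the trailing EC pair of the preceding exRec.
\end{itemize}
Since these are all bounded cases, this step is essentially a case check.

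\emph{Step 2: counting and recursion.} Let $A$ be the maximal number of location pairs inside a $1$-exRec. The probability that an exRec contains two or more bad locations is at most $\binom{A}{2}p_{\mathrm c}^2 = p_{\mathrm c}^2/p_0$, where I define $p_0 = 1/\binom{A}{2}$. A good $1$-exRec can be replaced by its ideal level-$0$ gate followed by a $\theta$-independent decoder output. A bad one becomes a single level-$0$ location that is compromised with probability at most $p_{\mathrm c}^2/p_0$.

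The essential point is that this effective level-$0$ location again has the form of \cref{hdr:res:compromised-operation}, with a $\theta$-independent public description. The split angles at deeper levels are fresh uniform samples, so they compose in the same way. Induction on $k$ then gives an effective compromise rate $p^{(k)} \le p_0 (p_{\mathrm c}/p_0)^{2^k}$.

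\emph{Step 3: union bound.} I would take a union bound over the $L$ level-$k$ locations. The factor $2$ comes from bounding separately the accuracy failure (a wrong logical action) and the privacy failure (a syndrome correlated with $\theta$). Each contributes in diamond norm at most the probability that some location is bad at the top level, which gives $\delta \le 2L\,p_0(p_{\mathrm c}/p_0)^{2^k}$.

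The main obstacle is the induction in Step 2, specifically showing that privacy composes across levels. A compromise of a level-$(k-1)$ block hands the adversary the parameter of that block, which is itself $\alpha$ or $\beta$ from the level above. I would need to show that the joint distribution of all leaked angle pieces is independent of $\theta$ unless, inside some single exRec, both halves of a split are leaked. I would first try to establish this as a lemma: conditioned on any set of leaks with at most one piece per split pair, the pieces are i.i.d.\ uniform. This reduces the privacy argument to the same "two bad locations" event that is counted in Step 2.
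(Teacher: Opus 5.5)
Your architecture matches the paper's proof sketch. It is an AGP extended-rectangle argument in which each step gets a privacy twin, with split angles (so a single leaked piece is uniform), a doubled trailing EC, bad exRecs re-expressed as compromised lower-level locations, the recursion $p\mapsto p^2/p_0$, and a union bound over the $L$ top-level exRecs. Including both leading EC blocks in the exRec, instead of the single preceding EC the paper uses, only enlarges $A$.

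\emph{The level-one case check.} It mishandles exactly the case for which the paper enlarges the rectangle. Your third bullet defers a compromise in the incoming EC to ``the trailing EC pair of the preceding exRec''. But exRecs overlap, and that pair \emph{is} your leading pair. If the compromised location sits in its second block, nothing in that pair removes the weight-one error it may emit. That error passes through the secret transversal rotation and, by commutation, becomes $\theta$-dependent. With a single fault, this is where the $\X\to\Y$ conversion actually happens. In your second bullet, with only one bad location in the exRec, a compromised trailing EC is fed a codespace state and has no $\theta$-dependent syndrome to write. What protects privacy in the incoming-EC case is that the first trailing EC of the \emph{current} exRec is fault-free. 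It returns the block to the codespace before anything $\theta$-dependent reaches a leaked register or the output syndrome. You need to argue this directly; it cannot be delegated to the preceding exRec.

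\emph{The level-reduction step.} You state in one line that a bad $1$-exRec ``again has the form'' of the compromised resource. The paper singles this out as the technical core. It is harder than the split-angle lemma you call the main obstacle. That lemma is immediate: each leaked piece carries its own fresh uniform summand, and once bad exRecs are shown to leak only their own parameter, the level reduction handles the recursion. Two things are missing:
\begin{itemize}
\item Because exRecs overlap, badness of neighbouring exRecs is correlated. The induction hypothesis must therefore be a local-stochastic compromise model, in which any $r$ specified locations are all compromised with probability at most $p^r$, rather than the independent-compromise resource.
\item A bad exRec's faults may act jointly on the data and on the syndrome left by its good neighbours. You must show it still reveals nothing beyond its own parameter and register. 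This is where the all-zero output syndrome guaranteed by the doubled EC is used.
\end{itemize}

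\emph{The factor $2$.} It does not come from adding an accuracy term and a privacy term. Both are controlled by the single event that every top-level exRec is good. The $2$ is the usual conversion of the failure probability $L\,p_0(p_{\mathrm c}/p_0)^{2^k}$ into an unnormalised diamond distance.
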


  The proof mirrors the classical threshold proof \cite{AGP06quantum}, but each step now has a security twin:
  \begin{itemize}
  \item \emph{Level-0 to level-1 improvement.} Show that accuracy and privacy improve when moving from level-\(0\) to level-\(1\), for sufficiently low compromise probability.
  \item \emph{Extended Safe Rectangles.} As in \cite{AGP06quantum}, we enlarge rectangles---a \(k\)-Ga followed by the two \$\(k\)-ECs---to include one preceding \(k\)-EC. Here this is needed not only for accuracy---to catch incoming errors---but also to take into account the possibility of inducing \(\theta\)-dependence by commutation.
  \item \emph{Good vs. bad recursion.} We define conditions where simulations would be either inaccurate or non-private. This notion of goodness and badness is essential to perform the recursion and extend the quadratic decrease of effective compromise probability from level-\(0\) to level-\(1\).
  \item \emph{Bad rectangles as compromised gates.} The technical part is again ensuring that a bad level-\(k\) rectangle can be represented as a simulated compromised gate.
  \end{itemize}

  A remark in \cite{AGP06quantum} is instructive: "the syndrome becomes an effective environment that [\ldots] interacts with the data during 0-faults. In their setting this is harmless because errors can be arbitrarily correlated; in ours it becomes a side channel affecting the secrecy of the computation. The second EC block precisely severs this channel: unless both blocks are compromised, one of them resets the syndrome to all zeros and erases any trace of \(\theta\) in the syndrome.

  Armed with this theorem, we construct a level-\(k\) simulation of RSP and provide its security guarantee with Alice playing the role of the user and Bob that of the eavesdropper:
  \begin{protocol}[Level-$k$ Simulation of Remote State Preparation]
    \begin{algorithmic}[0]
      \STATE \textbf{Alice:} Choose $\theta \in \Theta$.

      \STATE \textbf{Alice:} Implements the level-$k$ preparation of $\ket +$.

      \STATE \textbf{Alice:} Implements the level-$k$ $Z(\theta)$ using the Safe$Z(\theta)$ construction.

      \STATE \textbf{Bob:} For each location in the circuit decide to cheat, i.e. stochastically compromise it, and if succeeding gets the leaks and provides the replacement for the leaked registers.

      \STATE \textbf{Alice:} Sends the produced quantum state.
    \end{algorithmic}
    \label{hdr:proto:ft-rsp}
  \end{protocol}

  \begin{theorem}
    \cref{hdr:proto:ft-rsp} constructs the Level-\(k\) Single Plane Remote State Preparation---i.e. the Single-Plane RSP (\cref{hdr:res:rsp}) applied on level-\(k\) logical qubits with error bounded above by \(4 p_{\mathrm{0}}(p_{\mathrm{c}}/p_{\mathrm{0}})^{2^k}\).
  \end{theorem}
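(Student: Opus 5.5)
The proof follows the abstract cryptography pattern: an honest-case correctness bound, plus a simulator for a dishonest Bob. Both rest on the threshold theorem for accurate and private computations. The level-$k$ RSP circuit is a fixed, $\theta$-independent sequence of locations: the level-$k$ preparation of $\ket +$, followed by the level-$k$ Safe$\Z(\theta)$ and its $k$-ECs. I will view this circuit as a computation with a constant number of level-$k$ rectangles. Only two logical locations matter, the preparation and the logical $\Z(\theta)$, each taken with its trailing EC blocks. Applying \cref{hdr:eq:threshold} to each of these two logical locations gives a diamond-norm distance of at most $2p_0(p_c/p_0)^{2^k}$ per location. Summing over the two locations produces the announced $4p_0(p_c/p_0)^{2^k}$. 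This distance is measured to an ideal process that (i) outputs the correct logical state $\ket{+_\theta}$ up to the decoding map and (ii) leaves a syndrome and environment register whose joint state with Bob's leaked registers is independent of $\theta$.

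In the second step I build the simulator for a dishonest Bob who sets compromise bits $c_i$. The threshold statement says that, in the ideal process, everything Bob can hold is a $\theta$-independent state. This covers the leaked $\lambda$ values, which under split-compilation are individually uniform shares $\alpha_i$ or $\beta_i$, and the registers he swapped out, together with a correct logical $\ket{+_\theta}$. The simulator therefore runs the whole Safe circuit internally on a dummy angle, say $\theta=0$, answering Bob's compromise queries with the corresponding leaks and registers. It then takes the qubit delivered by the ideal single-plane RSP resource (\cref{hdr:res:rsp}) at level $k$ and swaps it into the logical slot. Because in the ideal process the logical information is decoupled from the syndrome, this swap amounts to applying a $\theta$-independent map after the ideal RSP. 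That is exactly what \cref{hdr:res:rsp} permits. By the triangle inequality, the distinguishing advantage is then bounded by the two diamond-distance terms. The honest case is the special case with no compromise requests, so correctness inherits the same bound.

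The main obstacle is to justify that the ``ideal computation'' of the threshold theorem really factorises as ideal RSP $\otimes$ ($\theta$-independent junk). Two points need care. First, the leaked share values must not reveal $\theta$ unless both $\alpha_i$ and $\beta_i$ leak; such double leaks are already counted as bad events inside the $p_c^{2}$ recursion. Second, the final EC pair must reset the syndrome, so that the output block carries no residual $\theta$-dependence. I would first try to read both points off the extended-rectangle good/bad decomposition, treating the final pair of $k$-ECs as the privacy-restoring step. If that fails, I would add a constant number of extra EC rectangles at the end, which changes only the constant in front of the bound.
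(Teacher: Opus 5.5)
Your proposal is correct and follows the route the paper implies: the level-$k$ RSP circuit simulates only $L=2$ locations (the $\ket{+}$ preparation and the Safe$\Z(\theta)$ with its trailing EC pairs), so the threshold theorem for accurate and private computations gives $2L\,p_0(p_c/p_0)^{2^k}=4p_0(p_c/p_0)^{2^k}$ to an ideal process with correct logical output and $\theta$-independent syndrome and leakage, and a simulator fed by the ideal RSP output reproduces that process. The one thing to tidy is the constant: your dummy-angle simulator makes the triangle inequality run over two hybrid steps (real run at $\theta$ versus ideal, then ideal versus real run at $0$), so you land exactly on $4p_0(p_c/p_0)^{2^k}$ only by using the factor $\tfrac12$ between diamond distance and distinguishing advantage, or by coupling both runs on the same good/bad fault events so that they differ only on bad branches.
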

  \section{Verification of Fault-Tolerant Quantum Computations}
  \label{sec:org5dc7041}
  The full protocol is obtained by taking the dummyless verification template and replacing every call to RSP by its level-\(k\) fault-tolerant version. Gates and measurements on the prover’s side are likewise lifted to their level-\(k\) simulations.

  \begin{protocol}[Secure Delegation of Fault-Tolerance Quantum Computation]
    \begin{algorithmic}[0]
      \STATE \textbf{Public Information:} 
      \begin{itemize}
      \item $G = (V, E, I, O)$, a graph with input and output vertices $I$ and $O$ respectively.
      \item $\mathfrak{W}$, the set of dummyless tests on graph $G$.
      \item $\preceq_G$, a partial order on the set $V$ of vertices.
      \item $N, d, w$, parameters representing the number of runs, the number of computation runs, and the number of tolerated failed tests.
      \item $k$, the concatenation level used to encode logical information.
      \end{itemize}
      
      \STATE \textbf{Alice's Inputs:} A set of angles $\{\phi(i)\}_{i \in V}$ and a flow $f$ which induces an ordering compatible with $\preceq_G$.

      \STATE \textbf{Protocol:}
      Alice and Bob execute \cref{hdr:proto:tbdqc-classical-io} with dummyless traps for the intended pattern on graph $G=(V,E)$ in the following way:
      \begin{itemize}
      \item For each call to the RSP, they run \cref{hdr:proto:ft-rsp}.
      \item For each gate or measurement instructed by Alice, Bob implements the corresponding level-$k$ simulation, considering for measurements that the outcome is that for the measurement of the corresponding logical qubit.
      \end{itemize}
    \end{algorithmic}
    \label{hdr:proto:ft-sdqc}
  \end{protocol}

  The overhead and the security bound follow from simple counting plus composability.
  \begin{theorem}[Threshold Theorem for Secure Delegation of Fault-Tolerant Quantum Computations]
    Let \(d\) be proportional to \(N\), and let \(c\) the fixed bounded error of the \(\BQP\) class of computations.  Let \(\epsilon\) be the (constant) lower bound on the probability that a test sampled uniformly from set \(\mathfrak{W}\) fails in the presence of a \(\Z\) error.  Let \(w\) be the maximum number of test rounds allowed to fail, chosen such that \(w < \frac{2c-1}{2c-2}(N-d)(1 - \epsilon)\).  Let the leak be defined as \(l_{\cptp U} = (G, \mathfrak{W}, \preceq_G)\), and let the Verifier use a register of size \(15^{k}\) corresponding to \(k\) level of concatenation of the quantum [15,1,3] Reed-Mueller code with \(15^k \in O(Nlog(|V|))\).   Let \(p_c\) be the probability that a location is compromised, and \(p_0\) be the threshold probability from \cref{hdr:eq:threshold}.

    Then, \cref{hdr:proto:ft-sdqc} \(\eta(N)\) constructs the Secure Delegated Quantum Computation (\cref{hdr:res:sdqc}) in the Abstract Cryptography framework for \(\eta(N)\) negligible in \(N\).
  \end{theorem}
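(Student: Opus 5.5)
The plan is a hybrid argument in Abstract Cryptography. We start from the ideal world where the dummyless Trappified Blind Delegated Quantum Computation (\cref{hdr:proto:tbdqc-classical-io} with dummyless traps) runs with ideal single-plane RSP and ideal logical gates and measurements on the prover's side. We then move to the real world in three steps: replace each RSP call by \cref{hdr:proto:ft-rsp}; view the prover's level-$k$ simulations as ordinary logical operations, possibly deviated; and invoke the dummyless security theorem at the logical level. By composability, the total distinguishing advantage is bounded by the sum of the construction errors.

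\textbf{Step 1: the logical-level protocol.} Treat each level-$k$ block as one logical qubit. Any deviation by the prover on the physical registers, composed with its own level-$k$ decoding, is a CPTP map on the logical registers plus side information. Its only possible dependence on $\theta$ enters through the prepared states. Once the RSP is ideal, the UBQC twirl therefore reduces this deviation to a Pauli mixture on logical qubits. The Pauli detection argument of the dummyless scheme then applies unchanged. Each test detects a harmful logical $\Z$ deviation with probability at least $\epsilon$, and $\cptp E^\ast$ is harmless (\cref{hdr:thm:harmless}). With $w < \frac{2c-1}{2c-2}(N-d)(1-\epsilon)$ and $d \propto N$, \cref{hdr:thm:tbdqc-classical-io-security} gives a security error $\eta_1(N)$ that is exponentially small in $N$. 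The honest-but-noisy case is handled in the same way: the prover's own level-$k$ simulation has a logical failure rate per location that is doubly exponentially small in $k$, and this can be absorbed into correctness.

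\textbf{Step 2: replacing the RSP.} Each call to \cref{hdr:proto:ft-rsp} constructs level-$k$ single-plane RSP with error at most $4p_0(p_c/p_0)^{2^k}$. The protocol makes $N|V|$ RSP calls, so the triangle inequality contributes at most $4N|V|\,p_0(p_c/p_0)^{2^k}$. With $15^k \in O(N\log|V|)$ we have $2^k = (15^k)^{\log_{15}2}$, which grows polynomially in $N\log|V|$. For $p_c < p_0$, the factor $(p_c/p_0)^{2^k}$ is therefore smaller than $N^{-1}|V|^{-1}$ times any inverse polynomial, so this contribution is negligible in $N$. Summing the two contributions gives $\eta(N) = \eta_1(N) + 4N|V|p_0(p_c/p_0)^{2^k}$, which is negligible.

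\textbf{Main obstacle.} The delicate point is the composition in Step 2. It requires that, after the RSP is replaced, any residual real-world imperfection is a $\theta$-independent deviation applied after delivery of the logical state. In particular, the syndrome history must not leak $\theta$ into the prover's later logical operations. This is exactly the privacy half of the threshold theorem for accurate and private computations. I would argue it by showing that the simulator for \cref{hdr:proto:ft-rsp} outputs the compromised registers and leaked values independently of $\theta$, except on the bad event bounded above. The prover's level-$k$ gates then act on a state that is ideal up to that error, so the trap-based reduction of Step 1 applies verbatim.
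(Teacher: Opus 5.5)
Your proposal is correct and takes essentially the paper's route: the paper obtains the bound by \emph{simple counting plus composability}. That means summing the at most $4p_0(p_c/p_0)^{2^k}$ construction error of each of the $N|V|$ calls to \cref{hdr:proto:ft-rsp}, and adding this to the security error of the dummyless template run on logical qubits. The ``main obstacle'' you flag needs no separate argument: the $\theta$-independence of residual deviations is already part of the statement that \cref{hdr:proto:ft-rsp} constructs the level-$k$ single-plane RSP resource, so the composition theorem (\cref{thm:comp-res}) handles it directly.
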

  \section{Takeaways}
  \label{sec:orgd15c569}
  We have shown how to remove the main scalability roadblock, but one sharp question survives: can the verifier drop back to single physical-qubit operations and still stay secure? Yet, this looks largely a theoretical question. Without its own error correction, a verifier must lean on the prover for every correction step, forcing interactive, back-and-forth rounds between logical gates. That chatter would explode the latency and sink performance.

  In short, we now know how to scale securely—but doing so with a truly qubit-only verifier remains an open, and probably impractical, challenge.
  \part{Conclusions \& Perspectives}
  \label{hdr:sec:perspectives}
  \cleardoublepage
  \lettrine[refstring, lines=3, lraise=0.15]{T}{his work}, begun in 2019 and developed throughout this manu\-script, turns verification from a single monolithic proof into a stack of interoperable modules. By cleanly separating blindness, trappification, embedding, and fault-tolerance, we could patch specific shortcomings of earlier protocols and push the field in two complementary directions:
  \begin{itemize}
  \item Lighter clients \& lower overhead for servers.
  \item Richer functionality, from multi-party inputs to gate-level noise robustness.
  \end{itemize}

  The broader aim is practical: bring verification and blindness into the design brief of hardware vendors, cloud providers, and end-users---before devices reach the too big to secure-by-design regime. Showing that the same template handles secure multi-party quantum computation and tolerates physical noise at the gate level is a first step toward that integration.

  But the road is long, and several milestones lie ahead.
  \section*{Perspectives}
  \label{sec:org9f6b9d4}
  Verification is still viewed by many hardware vendors as an after-thought, something to bolt on once the hardware works. Experience with classical security---and with quantum error correction---suggests the opposite: early design choices lock in what becomes possible later. In particular, adaptivity and the central role of a secure RSP must be recognized up-front. Quantum analogues of trusted-execution environments could fill the gap where high-bandwidth quantum links are impractical, but such isolation shapes the rest of the QPU stack.

  Our view is that vendors will embrace verification once it offers clear, immediate value to them. Two levers can help.
  \begin{itemize}
  \item \emph{Benchmark-driven competition.}  Trap-based tests give a worst-case to average-case reduction, linear-time scaling, and negligible security error per extra resource invested. Turning these tests into public benchmarks should spark vendor competition and push verification features into the control stack.
  \item \emph{In-field characterisation.} Trap outcomes double as a fine-grained log of faults.  A server that knows it is honest can mine those logs to calibrate and stabilize its device without downtime.  Early inclusion of verification thus pays off even before quantum advantage is reached.
  \end{itemize}
  \section*{Action plan}
  \label{sec:org4d96662}
  The first priority is to translate every module of the framework into the circuit model, so that superconducting platforms can adopt the techniques without resorting to MBQC workarounds. In parallel, we should refine the mathematical analysis until we can obtain hardware efficient proofs. Finally, the initial trap-based benchmarks must be carried through to public pilot reports, turning them into yardsticks that vendors cannot ignore.

  Next, we will explore how far quantum communication can be reduced by exposing a tightly quantified amount of information about the delegated algorithm, or by leaning on standard computational assumptions. At the same time, we need to lift robustness from majority-vote classical outputs to genuine quantum outputs encoded in error-correcting codes. Alongside those technical goals, we will package the protocol suite into an open-source library that hides cryptographic plumbing from circuit designers yet remains auditable by specialists.
  \section*{Long-term questions}
  \label{sec:org4a63c07}
  Over the longer horizon, we should seek a resource-theoretic account of verification that reveals which primitives are truly fundamental. We must also confront the long-standing challenge of achieving statistical soundness with a fully classical client—no qubits, no measurement device—while keeping complexity reasonable. Finally, we ought to clarify what it really means to "verify a quantum output" once the output itself is an entangled state destined for further quantum processing, and to devise tests that respect that richer notion of correctness.
  \part{Appendix}
  \label{sec:orgd49c6e7}
  \appendix  
  \chapter{Abstract Cryptography}
  \label{hdr:sec:ac}
  The Abstract Cryptography (AC) security framework \cite{MR11abstract-cryptography,M12constructive-cryptography} used in this work follows the \emph{ideal/real simulation paradigm}. A protocol is considered secure if it is a good approximation of an ideal version called a \emph{resource}. Its main advantage is that any system that follows the structure defined by the framework is inherently composable, in the sense that if two protocols are secure separately, the framework guarantees at an abstract level that their sequential or parallel execution is also secure.  We refer the reader to \cite{DFPR14composable} for a more in-depth presentation, in particular regarding the framework's composability in the context of SDQC.

  In this framework, the purpose of a secure protocol \(\pi\) is, given a number of available resources \(\mathcal{R}\), to construct a new resource -- written as \(\pi \mathcal{R}\).  This new resource can be itself reused in a future protocol.  A resource \(\mathcal{R}\) is described as a sequence of CPTP maps with an internal state.  It has \emph{input and output interfaces} describing which party may exchange states with it.  It works by having each party send it a state (quantum or classical) at one of its input interfaces, applying the specified CPTP map after all input interfaces have been initialized and then outputting the resulting state at its output interfaces in a specified order. An interface is said to be \emph{filtered} if it is only accessible by a dishonest player. The actions of an honest player \(i\) in a given protocol is also represented as a sequence of efficient CPTP maps \(\pi_i\) -- called the \emph{converter} of party \(i\) -- acting on their internal and communication registers. We focus here on the two-party setting, in which case \(\pi = (\pi_1, \pi_2)\).

  In order to define the security of a protocol, we need to give a pseudo-metric on the space of resources.  We consider for that purpose a special type of converter called a \emph{distinguisher}, whose aim is to discriminate between two resources \(\mathcal{R}_1\) and \(\mathcal{R}_2\), each having the same number of input and output interfaces.  It prepares the input, interacts with one of the resources according to its own (possibly adaptive) strategy, and guesses which resource it interacted with by outputting a single bit.  Two resources are said to be indistinguishable if no distinguisher can guess correctly with good probability.

  \begin{definition}[Statistical Indistinguishability of Resources]
    Let \(\epsilon > 0\), and let \(\mathcal{R}_1\) and \(\mathcal{R}_2\) be two resources with same input and output interfaces.  The resources are \emph{\(\epsilon\)-statistically-indistinguishable} if, for all unbounded distinguishers \(\mathcal{D}\), we have:
    \begin{equation}
      \label{eq:dist}
      \Bigl\lvert\Pr[b = 1 \mid b \leftarrow \mathcal{D}\mathcal{R}_1] - \Pr[b = 1 \mid b \leftarrow \mathcal{D}\mathcal{R}_2]\Bigr\rvert \leq \epsilon.
    \end{equation}
    We then write \(\mathcal{R}_1 \underset{\epsilon}{\approx} \mathcal{R}_2\).
    \label{def:ind-res}
  \end{definition}

  The construction of a given resource \(\mathcal{S}\) by the application of protocol \(\pi\) to resource \(\mathcal{R}\) can then be expressed as the indistinguishability between resources \(\mathcal{S}\) and \(\pi \mathcal{R}\).  More specifically, this captures the correctness of the protocol.  The security is captured by the fact that the resources remain indistinguishable if we allow some parties to deviate in the sense that they are no longer forced to use the converters defined in the protocol but can use any other CPTP maps instead.  This is done by removing the converters for those parties in Equation \ref{eq:dist} while keeping only \(\pi_H = \prod_{i \in H} \pi_i\) where \(H\) is the set of honest parties.  The security is formalized as follows in Definition \ref{def:ac-sec} in the case of two parties.

  \begin{definition}[Construction of Resources]
    Let \(\epsilon > 0\). We say that a two-party protocol \(\pi\) \(\epsilon\)-statistically-constructs resource \(\mathcal{S}\) from resource \(\mathcal{R}\) if:
    \begin{enumerate}
    \item It is correct: $\pi \mathcal{R} \underset{\epsilon}{\approx} \mathcal{S}$.
    \item It is secure against malicious party $P_i$ for $i \in \{1, 2\}$: there exists a \emph{simulator} (converter) $\sigma_i$ such that $\pi_j\mathcal{R} \underset{\epsilon}{\approx} \mathcal{S} \sigma_i$, where $j \neq i$.
    \end{enumerate}
    \label{def:ac-sec}
  \end{definition}

  The General Composition Theorem (Theorem 1 from \cite{MR11abstract-cryptography}) captures the security of protocols which use other secure protocols as subroutines, in sequence or in parallel.
  \begin{theorem}[General Composability of Resources]
    Let \(\mathcal{R}\), \(\mathcal{S}\) and \(\mathcal{T}\) be resources, \(\alpha\), \(\beta\) and \(\mathsf{id}\) protocols (where protocol \(\mathsf{id}\) does not modify the resource it is applied to). Let \(\circ\) and \(\mid\) denote respectively the sequential and parallel composition of protocols and resources. Then:
    \begin{itemize}
    \item The protocols are \emph{sequentially composable}:
      \begin{equation}
        \alpha \mathcal{R} \!\!\!\underset{\mathrm{stat},\ \epsilon_{\alpha}}{\approx}\!\!\! \mathcal{S}
        \ \wedge \
        \beta \mathcal{S} \!\!\!\underset{\mathrm{stat},\ \epsilon_{\beta}}{\approx}\!\!\! \mathcal{T}
        \ \Rightarrow \
        (\beta \circ \alpha) \mathcal{R} \!\!\!\underset{\mathrm{stat},\ \epsilon_{\alpha} + \epsilon_{\beta}}{\approx}\!\!\! \mathcal{T}.
      \end{equation}
    \item The protocols are \emph{context-insensitive}:
      \begin{equation}
        \alpha \mathcal{R} \!\!\!\underset{\mathrm{stat},\ \epsilon_{\alpha}}{\approx}\!\!\! \mathcal{S}
        \ \Rightarrow \
        (\alpha \mid \mathsf{id}) (\mathcal{R} \mid \mathcal{T}) \!\!\!\underset{\mathrm{stat},\ \epsilon_{\alpha}}{\approx}\!\!\! (\mathcal{S} \mid \mathcal{T}).
      \end{equation}
    \end{itemize}
    \label{thm:comp-res}
  \end{theorem}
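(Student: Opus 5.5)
The plan is to work directly with the pseudo-metric induced by distinguishers (\cref{def:ind-res}) and to reduce both claims to two elementary facts about it: (a) it satisfies the triangle inequality, and (b) it is non-increasing under composition with any converter or resource. Concretely, for (b) I will show that if \(\mathcal{R}_1 \underset{\epsilon}{\approx} \mathcal{R}_2\), then \(\gamma\mathcal{R}_1 \underset{\epsilon}{\approx} \gamma\mathcal{R}_2\) for every converter \(\gamma\), and \((\mathcal{R}_1 \mid \mathcal{T}) \underset{\epsilon}{\approx} (\mathcal{R}_2 \mid \mathcal{T})\) for every resource \(\mathcal{T}\). Fact (a) is immediate: the absolute difference in \cref{eq:dist} is a difference of real numbers for each fixed \(\mathcal{D}\), so I insert the intermediate term and take the supremum over \(\mathcal{D}\). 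Fact (b) is a reduction argument. Given any distinguisher \(\mathcal{D}\) for the pair \((\gamma\mathcal{R}_1, \gamma\mathcal{R}_2)\), the composed system \(\mathcal{D}\gamma\) is itself a legitimate (unbounded, adaptive) distinguisher for \((\mathcal{R}_1,\mathcal{R}_2)\). Its distinguishing advantage is therefore at most \(\epsilon\). Likewise, a distinguisher for \((\mathcal{R}_i \mid \mathcal{T})\) can internally simulate \(\mathcal{T}\), which is allowed because distinguishers are computationally unbounded; this yields a distinguisher for \(\mathcal{R}_1\) versus \(\mathcal{R}_2\).

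For sequential composability I will check the two conditions of \cref{def:ac-sec} separately. For correctness, I start from \(\beta\alpha\mathcal{R} \underset{\epsilon_\alpha}{\approx} \beta\mathcal{S}\), which is fact (b) applied to \(\alpha\mathcal{R}\underset{\epsilon_\alpha}{\approx}\mathcal{S}\) with converter \(\beta\). I then combine it with \(\beta\mathcal{S}\underset{\epsilon_\beta}{\approx}\mathcal{T}\) using the triangle inequality, which gives error \(\epsilon_\alpha+\epsilon_\beta\). For security against a dishonest party \(P_i\), with \(j\) the honest party, the hypotheses give simulators \(\sigma^\alpha_i\) and \(\sigma^\beta_i\) such that \(\alpha_j\mathcal{R}\underset{\epsilon_\alpha}{\approx}\mathcal{S}\sigma^\alpha_i\) and \(\beta_j\mathcal{S}\underset{\epsilon_\beta}{\approx}\mathcal{T}\sigma^\beta_i\). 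I will take the candidate simulator to be \(\sigma_i=\sigma^\beta_i\sigma^\alpha_i\). The chain is then
\begin{equation*}
\beta_j\alpha_j\mathcal{R}\underset{\epsilon_\alpha}{\approx}\beta_j\mathcal{S}\sigma^\alpha_i\underset{\epsilon_\beta}{\approx}\mathcal{T}\sigma^\beta_i\sigma^\alpha_i .
\end{equation*}
The first step uses fact (b) with the converter \(\beta_j\). The second step also uses fact (b), now with the converter \(\sigma^\alpha_i\), which is attached only to interface \(i\).

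The key bookkeeping point, and what I expect to be the main obstacle, is justifying the commutation of converters at disjoint interfaces. Specifically, \(\beta_j(\mathcal{S}\sigma^\alpha_i)=(\beta_j\mathcal{S})\sigma^\alpha_i\), and the simulators compose in the right order at interface \(i\). I would handle this by invoking the abstract-system algebra of \cite{MR11abstract-cryptography}, namely that converters plugged into distinct interfaces commute. At the level of CPTP maps, this holds because the two converters act on disjoint registers, so their actions commute. I must also take care with the case where both parties are honest or the cheating set changes between the two subprotocols. In the two-party setting of \cref{def:ac-sec}, however, the dishonest set is fixed across the composition, so this case does not arise.

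For context-insensitivity, correctness follows from fact (b) in its parallel form, because \((\alpha\mid\mathsf{id})(\mathcal{R}\mid\mathcal{T})=(\alpha\mathcal{R}\mid\mathcal{T})\). For security, I take the simulator \(\sigma_i\) for \(\alpha\) and use \((\sigma_i\mid\mathsf{id})\) as the simulator for the joint system. The required relation \((\alpha_j\mathcal{R}\mid\mathcal{T})\underset{\epsilon_\alpha}{\approx}(\mathcal{S}\sigma_i\mid\mathcal{T})\) again follows from fact (b), since a distinguisher facing the joint system can simulate \(\mathcal{T}\) itself. The error is therefore unchanged, which establishes both bullets of \cref{thm:comp-res}.
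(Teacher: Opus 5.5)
The paper gives no proof of its own and imports this result as Theorem~1 of \cite{MR11abstract-cryptography}. Your proposal is correct and reproduces that standard argument. The distinguisher pseudo-metric satisfies the triangle inequality and cannot increase when a converter or a parallel resource is attached. The composed simulator stacks \(\sigma^\beta_i\) and \(\sigma^\alpha_i\), and the commutation of converters at disjoint interfaces is supplied by the system algebra, exactly as you invoke it.
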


  Combining the two properties presented above yields concurrent composability (the distinguishing advantage is additive as well).

  \bibliographystyle{alpha}
  \bibliography{../qubib/qubib}
\end{document}